\let\footnote=\endnote
\def\expandafter\normalsize\expandafter{%
    \normalsize%
    \setlength\abovedisplayskip{5pt}%
    \setlength\belowdisplayskip{5pt}%
    \setlength\abovedisplayshortskip{3pt}%
    \setlength\belowdisplayshortskip{3pt}%
}
\definecolor{strcolor}{rgb}{0.6, 0.2, 0.6}
\definecolor{commentcolor}{rgb}{0.3125, 0.5, 0.3125}
\definecolor{keycol}{rgb}{0, 0, 1}
\newcommand {\bea}{\begin{eqnarray}}
	\newcommand {\eea}{\end{eqnarray}}
\newcommand {\E}[1]{\mathrm{E}\left( #1 \right)}
\newcommand {\p}{{\rm P}}
\def\blot{\quad \mbox{$\vcenter{ \vbox{ \hrule height.4pt
				\hbox{\vrule width.4pt height.9ex \kern.9ex \vrule width.4pt}
				\hrule height.4pt}}$}}
\gdef\AQ#1{}
\gdef\CQ#1{}
\definecolor{darkblue}{rgb}{0, 0, 0.65} 
\def \implies{\Rightarrow}
\def \pd#1#2{\frac{\partial #1}{\partial #2}}
\def\b{\mathbf{b}}
\def\v{\mathbf{v}}
\def\a{\mathbf{a}}
\def\p{\mathbf{p}}
\def\apr{(\mathbf{a},\mathbf{p},r)}
\def\cite#1{\citet{#1}}
\def\UUB{{\tt U-UB}}
\def\UFT{{\tt U-FT}}
\def\MFT{{\tt M-FT}}
\def\MTD{{\tt M-TD}}
\def\MUUB#1{{\tt MU$^#1$-UB}}
\def\UB{{\tt UB}}
\def\FT{{\tt FT}}
\def\TD{{\tt TD}}
\def\bmtheta{{\bm{\theta}}}
\def\E{\mathbb{E}}
\def\com#1{\textcolor{blue}{[#1]}}
\def\deleted#1{}
\def\comment#1{}
\def\D{\mathbf{D}}
\newtheorem{observation}{Observation}
\def \R{\mathbb{R}}
\definecolor{DSgray}{cmyk}{0,1,0,0}
\begin{document}
	
\def\COPYRIGHTHOLDER{INFORMS}%
\def\COPYRIGHTYEAR{2017}%
\def\DOI{\fontsize{7.5}{9.5}\selectfont\sf\bfseries\noindent https://doi.org/10.1287/opre.2017.1714\CQ{Word count = 9740}}

	\RUNAUTHOR{Chen et~al.} %

	\RUNTITLE{Bayesian Mechanism Design for Blockchain Transaction Fee Allocation}

\TITLE{Bayesian Mechanism Design for Blockchain Transaction Fee Allocation}


	\ARTICLEAUTHORS{%
\AUTHOR{Xi Chen$^*$}
\AFF{Leonard N. Stern School of Business, New York University, New York, NY 10012, USA, \EMAIL{xc13@stern.nyu.edu}}

\AUTHOR{David Simchi-Levi$^*$}
\AFF{Institute for Data, Systems and Society, Operations Research Center, Department of Civil and Environmental Engineering,
Massachusetts Institute of Technology, Cambridge, MA 02139, USA, 
\EMAIL{dslevi@mit.edu}}

\AUTHOR{Zishuo Zhao$^*$}
\AFF{Department of Industrial and Enterprise Systems Engineering, University of Illinois Urbana-Champaign, Urbana, IL 61801, USA, \EMAIL{zishuoz2@illinois.edu}} 

\AUTHOR{Yuan Zhou$^*$}
\AFF{Yau Mathematical Sciences Center, Tsinghua University; Beijing Institute of Mathematical Sciences and Application; Department of Mathematical Sciences, Tsinghua University; Beijing 100084, China, \EMAIL{yuan-zhou@tsinghua.edu.cn}
}
} 
	

\ABSTRACT{%
In blockchain systems, the design of transaction fee mechanisms is essential for stability and satisfaction for both miners and users. A recent work has proven the impossibility of collusion-proof mechanisms that achieve both non-zero miner revenue and {Dominant-Strategy-Incentive-Compatible} (DSIC) for users. However, a positive miner revenue is important in practice to motivate miners. To address this challenge,  we consider a \emph{Bayesian game} setting and relax the DSIC  requirement for users to Bayesian-Nash-Incentive-Compatibility (BNIC). In particular, we propose an auxiliary mechanism method that makes connections between BNIC and DSIC mechanisms. With the auxiliary mechanism method, we design a transaction fee mechanism (TFM) based on the multinomial logit (MNL) choice model, and prove that the TFM has both BNIC and collusion-proof properties with an asymptotic constant-factor approximation of optimal miner revenue for \emph{i.i.d.} bounded valuations. Our result breaks the zero-revenue barrier while preserving truthfulness and collusion-proof properties. 
\ifdefined\EnableBurning
We additionally prove the necessity to either burn transaction fees or have variable block sizes to satisfy both BNIC and collusion proofness guarantees.
\fi
}%


\SUBJECTCLASS{Blockchain, mechanism design, Bayesian game}

\AREAOFREVIEW{Markets, Platforms, and Revenue Management.}

\KEYWORDS{}

	
	%
	
\maketitle
\def\thefootnote{*}\footnotetext{Authors listed in alphabetical order.}\def\thefootnote{\arabic{footnote}}

\section{Introduction}


The blockchain, as a new decentralized technology, is becoming an interesting research object for the Operations community (see, e.g., \cite{davydiuk2023crypto,iyengar2022economics,manzoor2022blockchain,whitaker2020fractional} and references therein). Just like the emergence of the ridesharing topic ten years ago, the special structure in blockchain poses many unique challenges in auction theory, game theory, scheduling, and optimization; in turn, the blockchain technology also has applications that foster traditional aspects of operation research, e.g., newsvendors and supply chains \citep{keskin2023blockchain}. Particularly, in the scope of game theory, \cite{liu2019survey} surveys a variety of its applications in blockchain systems.

Let us zoom in and briefly discuss the structure of a standard blockchain. A blockchain is essentially a linked list (or a chain) of blocks, where each block stores a number of transactions. There are two types of agents participating in a blockchain: \emph{users} and \emph{miners}. Users propose to put \emph{transactions} on the chain, and miners pack transactions into a block and then send blocks to the chain. Once the block has been finalized on the chain, the miner will receive tokens (e.g., Bitcoin) as a reward.  Generally, each block may contain multiple transactions, but only one miner claims ownership of the block and obtains the corresponding reward. An illustration of the generation process of each block is shown in Figure~\ref{fig:gen}.

\begin{figure}[!t]
    \centering
    \includegraphics[width = 0.65\textwidth]{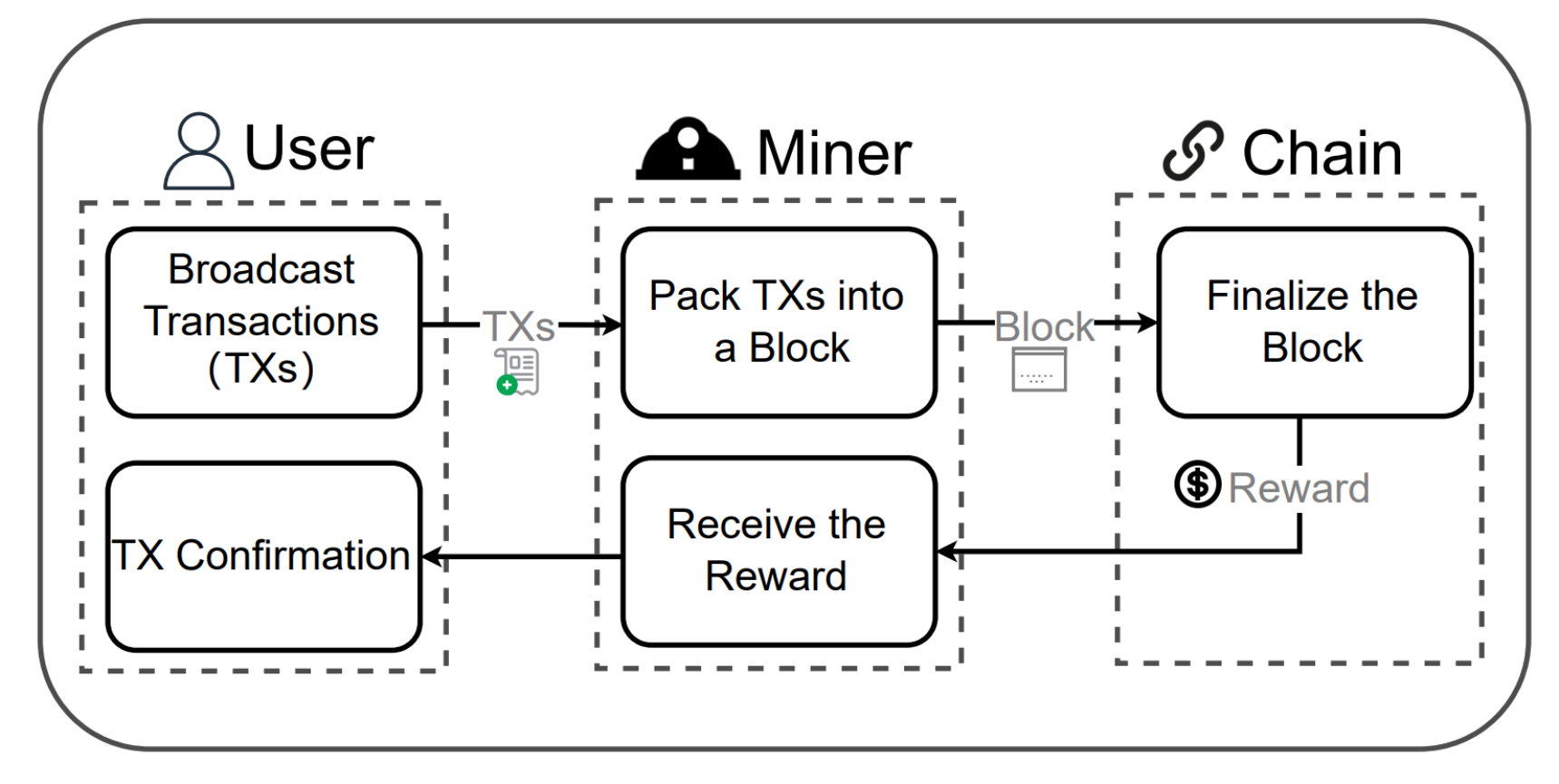}
    \caption{The role of different parties in generating one block in a blockchain.}
    \label{fig:gen}
\end{figure}

The blockchain stores the blocks sequentially in the time order. After a miner creates a new block, the block is appended to the chain via a reference to the latest existing block.
For the efficiency of the block space, each block only contains a pre-specified limited number of transactions. In order to retrieve the status (e.g., balances of each user), we have to track from the beginning of the chain and go through all previous transactions to determine the current status at any block. We show the structure of the blockchain in Figure~\ref{fig:chain}. 

\begin{figure}[!t]
    \centering
    \includegraphics[width = 0.9\textwidth]{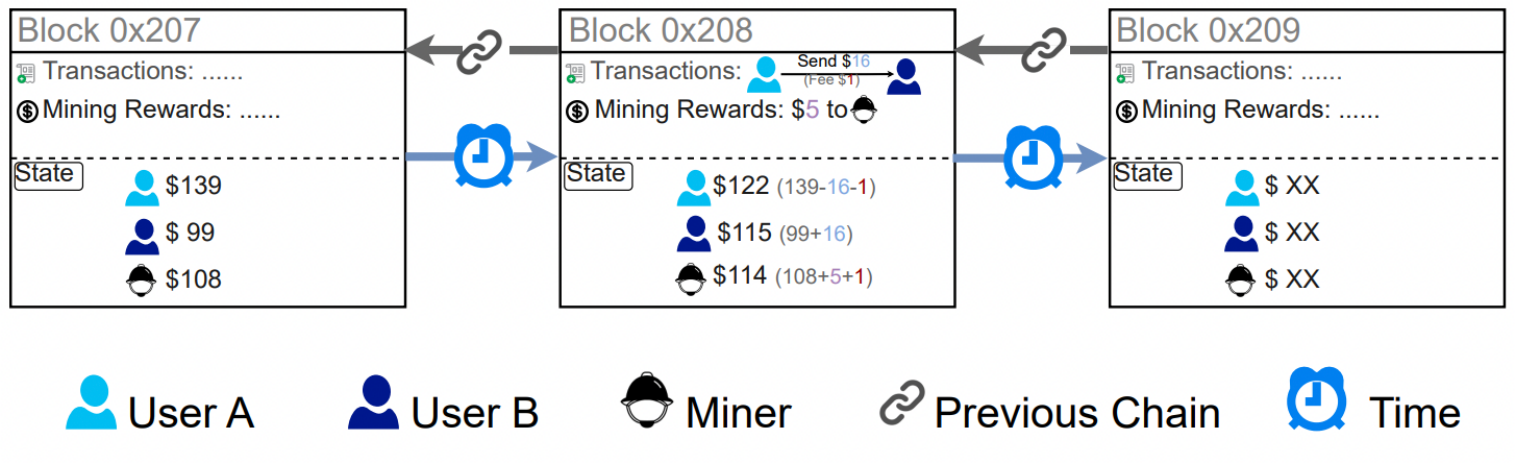}
    \caption{Illustration of the blockchain structure.
    Above the dashed lines are the blocks and their contents (transactions and rewards). The blocks are arranged from left to right in the time order, and each block is linked to the previous one on its left. Below the dashed lines are the states of the system -- the amount of money owned by each party at the time.}
    \label{fig:chain}
\end{figure}

As each block only contains a limited number of transactions, {the major bottleneck of limited-space design in the blockchain systems} draws wide research interest in the field of mechanism design (e.g., \cite{wang2019consensus}). With this bottleneck, users need to compete to win a transaction space in the block. Such competition can naturally be implemented via an auction. However, as we will explain later, the auction design in a blockchain exhibits unique challenges on how to charge the users properly and how to reward the miner. This problem is usually referred to as \emph{``transaction fee mechanism (TFM) design''}, which has been modeled by a seminal paper of  \cite{roughgarden2021transaction}.  In more detail,  to incentivize the miners to mine the block, the blockchain systems adopt economic mechanisms to pay miners via cryptocurrency. Such payments usually consist of a mining reward, and an additional reward extracted from transaction fees paid by users {named as the \emph{miner revenue}}.  As users benefit from transactions being confirmed on the blockchain and miners need the incentivization, it is reasonable to charge transaction fees from users for confirmed transactions. As described by \cite{roughgarden2021transaction}, the on-chain space is a scarce resource, so to facilitate the social efficiency of the system, we want to confirm transactions of high values. Therefore, many blockchains adopt bidding-confirmation transaction fee mechanisms (TFM) such as auctions.

\deleted{Blockchains, like Bitcoin \citep{nakamoto2008bitcoin} and Ethereum \citep{wood2014ethereum}, are essentially distributed databases built by consensus and growing over time with history data saved in ``blocks'' as linked lists created by miners. In each block, users leverage blockchains to either store or verify information, such as money (cryptocurrency) transfers, texts, and modern data such as smart contracts. This on-chain information is usually referred to as ``transactions''. Miners, in turn, try to ``mine'' a block by getting access to write a block at a cost (computational power in Proof-of-Work (PoW) \citep{shi2016new} or cryptocurrency deposit in Proof-of-Stake (PoS) \citep{kiayias2017ouroboros}) and then putting information into the block. Generally, each block may contain multiple transactions, but only one winning miner has the access to own the block.}

However, due to the online and anonymous properties of blockchains, the design of mechanisms for blockchain systems faces a major concern of \emph{credibility} \citep{he2023blockchain}. Compared to traditional auctions, the miner has a wider strategy space (than a traditional auctioneer) to conduct dishonest activities, including injecting fake transactions, concealing users' bids, and colluding with users. Therefore, it is important to address these unique challenges raised in the blockchain setting and develop a desirable TFM that discourages all possible dishonest activities to make sure the whole on-chain economic system can operate correctly. 

In a blockchain system, there are mainly three types of dishonest behaviors: \emph{Untruthful Bids  ({\tt UB})},  \emph{Fake Transactions ({\tt FT})}, and \emph{Transaction Deletion ({\tt TD})}. The agents who might conduct these dishonest behaviors include \emph{individual user ({\tt U})},  \emph{individual miner ({\tt M})}, \emph{miner colluding with $c$ users ({\tt MU$^c$})}, and \emph{collusion among $c$ users  ({\tt U$^c$}) } with  $c \geq 2$.  We provide a table to summarize all types of deviations at the end of Section \ref{sec:contri}. 

In our research, we prevent these types of dishonest behavior in a \emph{mixed} way consisting of cryptographic and economic techniques. In the cryptographic part, we introduce a \emph{commitment scheme} from \citep{credibility} and adapt it into our proposed mechanism (as shown in Appendix~\ref{sec:crypto}) that both essentially runs a sealed-bid auction that ensures fairness among users' information sets and restricts the miner's strategic space, also resolving the {MEV (miner-extractable-value or maximal-extractable-value)  issue, in which miners may gain additional revenue via strategically injecting, excluding or re-ordering the transactions \citep{daian2020flash}}, via pinning down the transaction orders before they are revealed. As shown in Appendix~\ref{sec:crypto}, some types of dishonest behavior (particularly deviations by an individual miner) can be effectively prevented via cryptographic protocols. Furthermore, the anonymity property of the blockchain system also brings intrinsic difficulty to collusions among users.
Therefore, in the economic part, we mainly focus on the \emph{prevention of \textbf{individual user's deviation} and \textbf{miner-user collusion}.}




\deleted{
\begin{table}[htb]
    \centering
    \scalebox{0.9}{
    \begin{tabular}[htb]{|c|c|c|c|}
        \hline
          & Untruthful Bids  ({\tt UB}) & Fake Transactions ({\tt FT}) & Transaction Deletion ({\tt TD})\\
        \hline
        Individual User ({\tt U}) & {\tt U-UB} & {\tt U-FT} & --- \\
        \hline
        Individual Miner ({\tt M}) & --- & {\tt M-FT} & {\tt M-TD} \\
        \hline
        Miner-$c$-User Collusion ({\tt MU$^c$}, $c \ge 1$) & {\tt MU$^c$-UB} & {\tt MU$^c$-FT} & {\tt MU$^c$-TD} \\
        \hline
        $c$-User Collusion ({\tt U$^c$}, $c\ge 2$) & {\tt U$^c$-UB} & {\tt U$^c$-FT} & --- \\
        \hline
    \end{tabular}
    }
    \caption{Classification of Dishonest Behavior}
    \label{table:dishonst}
\end{table}
}

\subsection{Research Question: How to Design Truthful and Collusion-Proof TFMs}
\label{sec:research_q}


To understand our results, we provide the necessary background on the truthfulness of users and miners. In the standard auction theory, a strong version of the truthfulness of users can be specified as the {User-Dominant-Strategy-Incentive-Compatibility} (U-DSIC), which means that any individual user will not benefit from deviation from truthful bidding even if she knows all bids of other users (as in Definition~\ref{def:udsic}). In many practical scenarios, it would be difficult to satisfy such a strong notion of truthfulness. Instead, a weaker version of the truthfulness of users is studied in this paper, i.e., the \emph{User-Bayesian-Nash-Incentive-Compatibility (U-BNIC)} (also known as Bayesian-Incentive-Compatibility (BIC) in some literature). In particular, U-BNIC means that when each user only knows the distribution of others' valuations, the game achieves a Bayesian Nash equilibrium when all users truthfully bid their valuations 
 (see Definition~\ref{def:ubnic}).  The truthfulness of the miner can be specified as the Miner-Incentive-Compatibility (MIC), which means that the miner will not benefit from untruthful behavior, e.g., injecting fake transactions or ignoring existing transactions. For the issue of collusion, the paper by \cite{shi} formulates collusion-proofness as $c$-Side-Contract-Proof ($c$-SCP): when the miner colludes with at most $c$ users by asking them to change their bids, the coalition cannot increase the total utility by deviations from truthfully bidding their valuations (as in Definition~\ref{def:cscp}). 

The key question is how to design TFMs to guarantee incentive compatibility and collusion-proof requirements, {and the existing works on TFM design can be roughly classified into two families: \emph{auction-like} mechanisms in which confirmed users' payments are dependent on the bids of the current block, and \emph{posted-price} mechanisms in which their payments are completely based on statistics of previous blocks.} The most intuitive form of the single-round auction mechanism is the \emph{first-price auction} \citep{vickrey61}, in which the auctioneer collects all users' bids, and sells the item to the user who bids the highest at the price she bids. When we generalize the first-price auction to the setting of multiple identical items, the auctioneer sells the items to users with the $k$-highest bids, charging the users what they bid. The Bitcoin blockchain essentially uses the multi-item first-price auction, but it is not truthful: users tend to bid lower than their valuations. A famous DSIC auction mechanism is the \emph{(multi-item) second-price auction} \citep{vickrey61}, where the winners are also the $k$-highest bidders but their payment is the $(k+1)$-th highest bid.  However, the second-price auction is susceptible to miner-user collusion as the miner may collude with the $(k+1)$-th highest bidder by asking her to raise her bid as long as it is still lower than the $k$-th highest bid. In such a way, the $(k+1)$-th highest bidder still gets the same utility $0$ while the miner gets a higher revenue, increasing the total utility of the colluding party. 

To solve the issue of collusion, the EIP-1559 mechanism \citep{roughgarden2020transaction} in Ethereum seeks to avoid collusion by adopting {a dynamic} \emph{posted-price} mechanism, as long as the posted price is ``well chosen'' {from historical demands, in expectation that there is no congestion on the block size --- if there is nothing to bid, there is significantly less space for strategic bidding (i.e., \UB). }

\color{black}

{However, in the TFM of EIP-1559, the miner typically gets no revenue from the transaction fees} as the fees have to be ``\emph{burnt}'' and removed from the blockchain to maintain collusion-proof properties {(details discussed in Section~\ref{sec:posted:price})}. While EIP-1559 prevents collusion between users and miners, it is economically inefficient for miners {as miners get no rewards from the transaction fees}. Therefore, a natural question would be the following:

\begin{center}
\begin{framed}
{\it
Can we design a TFM that satisfies both truthfulness \\
and collusion-proof conditions, and has a desirable miner revenue?
}
\end{framed}
\end{center}

To answer this question, the paper by \cite{shi} proves a negative result (Theorem~\ref{thm:shi:impossible}) under the \emph{{User-Dominant-Strategy-Incentive-Compatibility} (U-DSIC)}. 
In particular, even if we only consider the deviation set that contains individual-user untruthful bids and the miner collusion with one user, it is impossible to make positive revenue in the complete-information setting. To address this issue, the paper by  \cite{shi} introduces a so-called $\gamma$-strict utility{, in which the parameter $\gamma$ roughly depicts the probability that a currently unconfirmed transaction would be confirmed in future blocks} (see details in Section~\ref{sec:related:strategy}). However, such relaxation involving confirming a transaction in future blocks, brings an additional layer of difficulty. Indeed,  the probability that a currently unconfirmed transaction gets confirmed in future blocks is not a universal constant, as unconfirmed transactions with higher bids are more likely to be confirmed in the future than those with lower bids. Thus, finding an accurate $\gamma$ can be difficult, if not impossible. Therefore, we would like to ask the following question in this paper.

\begin{center}
\begin{framed}
{\it
Are there other reasonable relaxations of the model and incentive compatibility specifications to circumvent the impossibility result?
}
\end{framed}
\end{center}


For this question, a series of related works (e.g., \cite{yotam,cryptoeprint:2022/1294,wu2023maximizing}) have studied the problem of revenue optimization for blockchain transaction fee mechanisms with different models. Although it might be argued that the absence of miner revenue, or burning of \emph{money}, may not directly undermine \emph{global} social welfare \citep{damle2024no}, a non-zero additional reward for transaction confirmation besides the static block reward would indeed incentivize the miner to create the blocks honestly, in order to earn more money {in addition to} the basic block reward, {especially for blockchains like Bitcoin in which the block rewards gradually go to zero. 
Besides, while the burning of tokens is not necessarily value-destroying, excessive burning may lead to a problem of \emph{deflation}. While an important motivation of Bitcoin is to prevent inflation, from an economic perspective,  deflation could be even worse than inflation in the long term because it may discourage spending and investment \citep{bewley1979optimum,baig2003deflation}, as people would prefer holding onto their tokens rather than using them in transactions. To maintain a thriving ecosystem in the blockchain community, we are indeed motivated to increase miner revenue via a decreased level of burning.}

In our study, we address the above open question by relaxing the U-DSIC requirement to U-BNIC, which assumes the users only have information of distributions of other users' valuations instead of all their bids. This relaxation is reasonable because,  in the distributed network of blockchain, it is impossible for a user to actually know all other users' bids, especially those who propose transactions after them but compete for the same block. Besides, a blockchain system, when combined with a commitment scheme (as discussed in Appendix~\ref{sec:crypto}), can essentially work as a \emph{sealed-bid auction}\footnote{In na\"ive implementations users may be able to see bids proposed before them but not after, leading to certain \emph{unfairness} and MEV issues. With a commitment scheme (see Appendix~\ref{sec:crypto}), we make it fair as all bids are sealed until the bidding process completes.} in which users' bids are not revealed until the bidding process finishes. Hence, users only have distributional knowledge about others, making the U-BNIC a natural requirement to prevent users' deviation. 
With the awareness that the MIC property is not the most necessary requirement (Remark~\ref{remark:no:mic} in Appendix A), the main goal of the paper is to design a TFM that satisfies U-BNIC and 1-SCP for bounded \emph{i.i.d.}\ valuation distributions with a constant-factor approximation of the optimal revenue.

Interestingly, besides bypassing the negative result in blockchain TFMs via Bayesian mechanism design, on the other hand, our paper also bypasses a major negative result in the scope of Bayesian mechanism design via \emph{burning}, a feature in blockchain systems. The papers by \cite{manelli2010bayesian, gershkov2013equivalence} show that in conventional auction settings in which all bidders' payments are rewarded to the auctioneer, the BNIC and DSIC conditions are equivalent. However, our research shows that with the incorporation of burning, the additional freedom to allow partial payment to be rewarded to the miner actually makes it possible to design essentially different mechanisms and gain increased revenue via the relaxation from DSIC to BNIC. (see discussion in Section~\ref{sec:bmd})

In the rest of this paper, we assume the valuation distributions of users are \emph{i.i.d.} and bounded. Without loss of generality, we assume the valuations are in the range of $[0,1]$.


\ifdefined\EnableBurning
We also look into the issue of \emph{burning}, which means that some of the transaction fees are removed from the system instead of paid to the miner. The existence of burning leads to economical inefficiency of the blockchain system. On the other hand,  the paper by \cite{roughgarden2020transaction} has shown that the ``burning'' is essential in the Ethereum EIP-1559 mechanism to avoid collusion. As the mechanism proposed by \cite{shi} also incorporates burning in their design, we further ask the following  question: 
\begin{center}
\begin{framed}
{\it 
Is burning \emph{generally necessary} for the collusion-proof TFM design? 
}
\end{framed}
\end{center}
For this question, we give a conditional positive answer in the scope of U-BNIC and 1-SCP mechanisms. In particular, under mild assumptions, we show that any non-trivial TFM that satisfies U-BNIC and 1-SCP and has constant block size must burn a positive amount of the collected fees.
\fi

\subsection{Summary of Contributions}
\label{sec:contri}

Following the previous discussion, we summarize the main contribution of the paper below.

\begin{enumerate}
    \item We propose an {\emph{auxiliary mechanism method}} as our main tool to study U-BNIC TFMs by establishing connections between BNIC and DSIC auction mechanisms. 
    In general, the method enables us to decompose any TFM into an \emph{auxiliary} U-DSIC mechanism and a \emph{variation term} and design them separately. This method can be a versatile tool in the design of BNIC mechanisms in the paradigm of {{relaxing the DSIC condition to BNIC and utilizing the information asymmetry for higher revenue (or other desired properties, e.g. welfare)}}. The auxiliary mechanism method will be described in Section \ref{sec:roadmap}.

    \item For ease of illustrating our main idea, we first study the case where each block only contains one transaction (i.e., the block size is one). To design the TFM, we first construct a so-called \emph{soft second-price mechanism} as our auxiliary mechanism, also referred to as \emph{exponential mechanism}, based on the logit choice model.  
    Via the auxiliary mechanism method, we design our mechanism that exploits the maximum extent of the information asymmetry between the miner and users to extract maximum revenue for the miner (Section~\ref{sec:alg:one_size}).
    

    \item We further extend our mechanism to general block size $k$, and prove that the constant-fraction approximation of optimal revenue still holds in this general case as long as the number of users $n$ is larger than $\lambda_0 k$ for any fixed $\lambda_0 >\frac{e}{e-1}$ 
    (see Section~\ref{section:size:k}).
    
    \item We further explore new properties of miner incentives in our TFM, for both size 1 and general block size $k$. 
    Our results show that a reasonable level of miner deviations would not substantially benefit the miner, even if the miner knows all the bids. 
    We also show a negative result that any TFM that is U-BNIC, 1-SCP and (strict) MIC cannot have a positive \emph{expected} miner revenue (see Section~\ref{subsec:almost:mic}). Furthermore, we 
    establish a key stability result on the miner's revenue from our TFM over the distribution of users' bids. This result is important in practice, as the stability of revenue is critically important for miners (see Section~\ref{subsec:revenue:stability}).
    \ifdefined\EnableBurning
    \item \com{TODO: 4 not updated yet} We look into the necessity of burning, showing an impossible result that under mild assumptions, no non-trivial Transaction Fee Mechanism can achieve both of the following properties (Section~\ref{sec:burning}): 
    
    \begin{itemize}
        \item U-BNIC and 1-SCP (truthfulness \& collusion-proofness guarantees),
        \item No burning (no waste on transaction fee).
    \end{itemize}
    \fi
\end{enumerate}

As we described in the paragraph Section \ref{sec:research_q}, we provide a classification of the dishonest behaviors among different possible agents (or groups of agents). Here, we summarize this classification in Table \ref{table:dishonst}. Based on this classification, for different classes of deviations, we compare the strategy-proof properties and miner revenue of our proposed mechanism with different designs in the existing literature. The detailed comparison is provided in  Table~\ref{table:comparison}.  In this paper (as well as \citep{shi}) we are particularly interested in preventing dishonest behaviors \UUB, \UFT, \MUUB{c}, \MFT, and \MTD. These dishonest behaviors are respectively prevented by the above-mentioned strategy-proof properties U-BNIC, U-SP, $c$-SCP, and MIC (where U-BNIC, U-SP and $c$-SCP will be formally defined in Section~\ref{subsec:strategy:proof} and MIC will be formally defined in Section~\ref{subsec:almost:mic}). In contrast, the \UUB~dishonest behavior is dealt with U-DSIC in \citep{shi} under the complete-information setting; {besides, a recent work by \citep{wu2023maximizing} considers a different multi-party-computation (MPC) model and develops another variant of posted-price mechanisms with comparable incentive and revenue guarantees as our work, but via different methodologies. We would discuss on the comparison in Section~\ref{sec:posted:price}.}

We finally note that this paper only considers the case that the miner colludes with one user (i.e., the case $c=1$ in {\tt MU$^c$-UB}) and it would be interesting to study for general $c$. We would like to leave it as a future work. 



\begin{table}[tb]
    \centering
    \scalebox{0.9}{
    \begin{tabular}[htb]{|c|c|c|c|}
        \hline
          & Untruthful Bids  ({\tt UB}) & Fake Transactions ({\tt FT}) & Transaction Deletion ({\tt TD})\\
        \hline
        Individual User ({\tt U}) & {\tt U-UB} & {\tt U-FT} & --- \\
        \hline
        Individual Miner ({\tt M}) & --- & {\tt M-FT} & {\tt M-TD} \\
        \hline
        Miner-$c$-User Collusion ({\tt MU$^c$}, $c \ge 1$) & {\tt MU$^c$-UB} & {\tt MU$^c$-FT} & {\tt MU$^c$-TD} \\
        \hline
        $c$-User Collusion ({\tt U$^c$}, $c\ge 2$) & {\tt U$^c$-UB} & {\tt U$^c$-FT} & --- \\
        \hline
    \end{tabular}
    }
    \caption{Classification of Dishonest Behaviors.}
    \label{table:dishonst}
\end{table}

\begin{small}
\begin{table}[htb]
\centering
\scalebox{0.9}{
\begin{tabular}{|c|c|c|c|c|c|}
\hline
              & Setting                 & \UUB, \UFT                                & \MUUB{c}                    & \MFT, \MTD          & Revenue$/ OPT$                             \\ \hline
Our Mechanism & Bayesian game           & \checkmark  & $c=1$                     & Approx.$^{*}$              & $\Theta(1)$                 \\ \hline
\cite{shi}  & $\gamma$-strict utility & \checkmark  & \checkmark & \checkmark & $\approx O(\gamma ^2 / c)$ \\ \hline
{\cite{wu2023maximizing}}  & {MPC model} & {\checkmark}  & {$c=1$, Fixed$^{**}$, Approx.} & {Approx.} & {$\Theta(1)$} \\ \hline
Bitcoin       & N/A                     & $\times$                                      & $\times$                  & \checkmark & No analysis                          \\ \hline
EIP-1559      & Deterministic           & Approx.$^{***}$                          & \checkmark & \checkmark & $\approx 0$                                    \\ \hline
\end{tabular}}
\caption{Comparison of different TFMs. $^*$ See detailed definition in Section~\ref{subsec:almost:mic}. 
{$^{**}$ Their 1-SCP notion is substantially weaker as they only allow the miner to collude with a \emph{fixed} user.}  $^{***}$ This property is guaranteed only when the ``posted price'' is well set to prevent congestion (see the paper of \cite{roughgarden2020transaction}.)}
\label{table:comparison}
\end{table}
\end{small}

In summary, the multi-item first-price auction mechanism adopted by Bitcoin has bad strategy-proof properties, although it expects to have good miner revenue because it charges and awards transaction fees in a ``greedy'' way. The EIP-1559 mechanism \citep{roughgarden2020transaction} has almost the best strategy-proof properties but has zero miner revenue. The mechanism proposed by \cite{shi} can prevent general \MUUB{c}, but it only has good miner revenue for small $c$ and large $\gamma$ (meaning that every transaction has a high probability to be eventually confirmed in the future even if the bid is low, which is different from the common practice in the blockchain). Our mechanism uses the Bayesian setting and has decent strategy-proof properties while achieving a constant-fraction approximation of the optimal miner revenue.

We list the meaning of abbreviations appearing in our paper in Table~\ref{table:abbr}.

\begin{table}[!h]
\centering

\begin{tabular}{|c|c|}
\hline
Abbreviation & Meaning                                          \\ \hline
TFM          & Transaction Fee Mechanism                        \\ \hline
(U-)DSIC     & (User) Dominant Strategy Incentive Compatibility \\ \hline
(U-)BNIC     & (User) Bayesian Nash Incentive Compatibility     \\ \hline
U-SP         & User Sybil Proofness                             \\ \hline
$c$-SCP      & $c$-Side Contract Proofness                      \\ \hline
MIC          & Miner Incentive Compatibility                    \\ \hline
OCA          & Off-Chain Agreement                              \\ \hline
BF           & Budget Feasibility                               \\ \hline
NFL          & No Free Lunch                                    \\ \hline
UIR          & User Individual Rationality                      \\ \hline
MIR          & Miner Individual Rationality                     \\ 
\hline
LP          & Linear Programming                     \\ \hline
MPC          & Multi-Party Computation                          \\ \hline
MEV          & Miner/Maximal Extractable Value                  \\ \hline
\end{tabular}
\caption{List of abbreviations.}
\label{table:abbr}
\color{black}
\end{table}

\section{Related Work}
\label{sec:related}

\subsection{{Auction-Like} TFM Design in Literature} \label{sec:related:strategy}

{Since the main motivation of TFM design is to allocate the scarce block space to users, an intuitive idea is to design auction-like TFMs.} While it is common to assume that the auctioneer in a traditional auction is \emph{trusted}, it is not true in blockchain systems. To address this new challenge, the papers of \cite{roughgarden2021transaction} and \cite{shi} split the incentive compatibility into two parts: User Incentive Compatibility (UIC) and Miner Incentive Compatibility (MIC), and 
consider the \emph{complete-information setting} in which users have complete information of others' bids. 
%
Essentially, their papers define the term UIC equivalent to \{\UUB, \UFT \}-proofness, MIC equivalent to \{\MFT, \MTD \}-proofness (see Table \ref{table:dishonst}). 
Furthermore, the paper by \cite{shi} specifies the notion of $c$-SCP, which is equivalent to \{\MUUB{c}\}-proofness. Essentially, \cite{shi} show a seminal impossibility result as follows: 

\begin{theorem}[\cite{shi}] \label{thm:shi:impossible}
    Any TFM which is \{\UUB, \MUUB{1}\}-proof in the complete-information (a.k.a. deterministic) setting has zero miner revenue. 
\end{theorem}

Note that the original theorem of \cite{shi} states that ``any TFM which satisfies UIC (\{\UUB, \UFT\}-proof) and 1-SCP (\{\MUUB{1}\}-proof) has zero miner revenue.'' However in their proof, they only consider the deviations of \UUB~and \MUUB{1} but not \UFT. So as we consider the deviations in a more refined way, they actually prove this slightly stronger impossibility result.

To overcome the issue of zero miner revenue, \cite{shi}  introduce the  ``$\gamma$-strict utility'' to make unconfirmed over-bidder still pay a $\gamma$ fraction of the worst-case cost. In particular,  if a bidder $i$ has valuation $v_i$ and her bid $b_i>v_i$, even if the transaction is not confirmed, she gets a utility of $-\gamma(b_i-v_i)$. 
Thus, if the confirmation probability is $a_i(b_i,\b_{-i})$ ($\b_{-i}$ denotes bids of all other users than $i$) and the bidder pays $p_i(b_i,\b_{-i})$ if the transaction gets confirmed, the utility of the bidder takes the following form:
\begin{small}
\begin{align*}
    u_i^{(\gamma)}(b_i,\b_{-i};v_i) &= a_i(b_i,\b_{-i})(v_i-p_i(b_i,\b_{-i})) \\
    &\quad- \gamma (1-a_i(b_i,\b_{-i})) \max\{b_i-v_i, 0\}.
\end{align*}
\end{small}

This relaxed utility function is justified by \cite{shi} by considering the bidding process of more than one block in a blockchain: even if an overbidding transaction is not confirmed in the current block, the authors assume that the bid could still be collected and confirmed into future blocks, and in the worst case, the over-bidder would have to pay their full bid and get a utility of $-(b_i-v_i)$. In this setting, the authors have further developed a \emph{burning second price} TFM that satisfies U-DSIC, MIC and $c$-SCP in the notion of $\gamma$-strict utility.

The $\gamma$-strict utility that considers the multi-block setting is critically sensitive to the parameter $\gamma$, but in practice, it is difficult to determine the value of $\gamma$ due to the unpredictable nature of future blocks. On the other hand, as users cannot see others' bids in a blockchain system, the requirement of a complete-information setting is too strong in practice as compared to the Bayesian setting. In this perspective, our research focuses on the single-block setting in which a proposed transaction is only valid for the current block, but we assume that each user only knows the distributions of other users' valuations. Based on the distributional information, we consider a different relaxation and develop our mechanism in the Bayesian game setting.

Besides strict notions of incentive compatibility, previous research on transaction fee design also considers nearly-incentive-compatibility properties. \cite{yao2018incentive} shows that the \emph{monopolistic price} mechanism proposed by \cite{lavi2022redesigning} is nearly incentive compatible, i.e., strategic behavior can only gain a small advantage in utility. On the other hand, there is another parallel paradigm of collusion-proofness named as OCA-proofness, as proposed by \cite{roughgarden2021transaction}, which only considers collusion of \emph{confirmed} users instead of all users and has different properties. The detailed difference has been discussed by \cite{yotam} and \cite{chung2024collusion}.

{\textbf{Collusion-proofness notions: OCA-proofness and SCP.} }
{In a recent work \citep{chung2024collusion}, the authors study the relations between the two notions. As argued by \cite{chung2024collusion}, the OCA-proofness notion conceptually depicts the property that the colluding parties cannot ``steal'' from the protocol to gain more utility, and SCP means they can neither ``steal'' from the protocol nor from other users. It has also been shown in \citep{chung2024collusion} that OCA-proofness is implied by $c$-SCP for any $c$, i.e., $\infty$-SCP is a stronger notion than OCA-proofness. Nevertheless, \cite{chung2024collusion} also show that 1-SCP and OCA-proofness are incomparable notions.}

{From an economic perspective, it is not necessarily detrimental if colluding parties ``steal'' from the protocol to gain more utility. If offchain payments ensure that every agent experiences a weak increase in utility while maintaining protocol functionality, this collusion can actually represent a \emph{Pareto improvement} \citep{pareto1919manuale}. While a Pareto improvement appears to be a desirable improvement of the ecosystem that does not need to be prevented, in a non-OCA-proof mechanism, the existence of Pareto improvements indicates that the mechanism is not economically efficient. Therefore, while the concept of SCP aligns with the principle of strategy-proofness, OCA-proofness is more relevant to the idea of \emph{Pareto optimality}. Considering that ``stealing from others'' is clearly a dishonest behavior we aim to prevent, the study of SCP remains crucial.}

\subsection{EIP-1559 and Posted-Price TFM Design in Literature}
\label{sec:posted:price}

Due to the anonymity of blockchain systems, blockchain TFMs are subject to a wider scope of dishonest behavior than traditional auctions, e.g., collusions and fake identities. While auction-like mechanisms can balance the supplies and demands as the prices are decided by the users' bids, the parties indeed have access to more strategies to manipulate the prices and gain advantages via dishonest bidding. In response to this challenge, there are another line of studies that replaces auctions in TFM design with a widely studied toolbox of \emph{(dynamic) optimal pricing}, in which the ``posted'' prices are not decided by users of the current block, but from the statistics of previous blocks \citep{roughgarden2020transaction,ferreira2021dynamic,wu2023maximizing}. With the purpose of dynamically adjusting price based on supplies and demands, auction mechanisms and dynamic posted-price mechanisms are indeed solutions with different paradigms that both have the potential to be utilized in TFM design, as discussed in the studies of \citet{hammond2010comparing,bubeck2017online}, and so on.

Particularly, the EIP-1559 TFM, which is currently adopted in Ethereum, is essentially designed to be a posted-price mechanism that effectively prevents dishonest behavior, with a backup component of an auction-like mechanism in case the posted price is (unexpectedly) too low to prevent congestion.
The EIP-1559 TFM works as follows:

\begin{enumerate}
\item The blockchain system adaptively decides on a \emph{base fee} for the current block.
\item Each user proposes a transaction, paying the base fee and a voluntary \emph{tip} if the transaction gets confirmed.
\item The (winning) miner confirms the transactions proposed by the users. It is expected that the number of transactions usually does not exceed the block size.
\item The miner gets a pre-defined \emph{block reward} as well as all the tips. The base fees are \emph{burned} and removed from the blockchain system. 
\end{enumerate}

In the ecosystem of EIP-1559, when there is no congestion, the miner does not need to consider the block size constraint and can confirm all the proposed transactions, so the users would pay very small tips and it is enough to incentivize the miner to confirm their transaction. At ``exception'' scenarios when there is congestion, the miner would be incentivized to confirm transactions with the highest tips, and all the tips go to the miner. Hence, the EIP-1559 mechanism can be modeled as follows:

\begin{itemize}
    \item If there is no congestion, EIP-1559 is essentially a posted-price mechanism with the posted price equal to the base fee; all payments are burnt and the miner gets no revenue from the transaction fees.
    \item If there is congestion, EIP-1559 shifts to a multi-item first-price (aka. pay-as-bid) auction. A fix amount of tokens (\emph{base fee}~$\cdot$~\emph{block size}) are burnt and the remaining tokens go to the miner.
\end{itemize}

{Due to the zero-revenue disadvantage of EIP-1559 and the existing impossibility results, researchers also attempt to avoid this negative aspect with alternative modeling and reasonable relaxations that apply to the blockchain environment. In consideration of the cryptographic nature of blockchain systems, \citet{cryptoeprint:2022/1294} introduce a multi-party-computation (MPC) model that achieves $\epsilon$-approximate incentive properties with a positive miner revenue scaling with $\Theta(\sqrt{\epsilon})$. Following the MPC model, \citet{wu2023maximizing} developed an LP-based posted-price mechanism that achieves U-DSIC, approximate Bayesian MIC and approximate Bayesian $1$-SCP with positive miner revenue. The collusion-proof properties in the MPC-based models differ from ours as follows: in our work, we assume the miner to have access to all users' valuations, and may pick any $c$ user(s) to collude with; in the MPC-based models described in \citep{cryptoeprint:2022/1294,wu2023maximizing}, the miner only has access to the valuations of $c$ \emph{colluding} users, so their $c$-SCP notion is weaker than our paper. Particularly, as their study has shown the impossibility even for $c=2$, their method is only applicable for the case that the miner may only collude with a \emph{fixed} user, which is highly restrictive for real-world blockchain systems.}

{
\textbf{Comparison between our mechanism and \citep{wu2023maximizing}.} The paper of \citet{wu2023maximizing} shows that even in the MPC-based model $2$-SCP is impossible. From the above explanation, their approximate Bayesian $1$-SCP property is weaker than our work, but their U-DSIC property is stronger than ours. For the sybil-proofness properties, their mechanism also upper bounds the number of fake bids to $h$, comparable to our approximate-SP notions. For the methodologies to secure incentive guarantees, similar to EIP-1559, the mechanism in \citep{wu2023maximizing} uses a variant of posted-price mechanism in which confirmed users' payments are fixed, and the miner's utility is calculated by a \emph{linear program} and only depends on the \emph{number} of \emph{candidate} users with valuations above the posted price. Their mechanism thus prevents the \UB~strategy by essentially avoiding the bidding process, while our mechanism is in an auction-like form with the \emph{auxiliary mechanism method} to ensure that honest bids achieve optimal expected utilities. In general, the paper of \citep{wu2023maximizing} adopts different modeling and methodology in mechanism design, while achieving a comparable level of incentive and revenue guarantees to our work. The diversity in models and methodologies renders the topic of TFM design a novel and valuable area for future exploration in the OR community.}

\color{black}

\subsection{Choice Modeling and the Multinomial Logit (MNL) Choice Model}

Choice modeling, which models how consumers would make choices among provided goods, plays an important role in revenue management \citep{fiamohe2015assessing, bitran2003overview}. Indeed, assortment optimization under a wide range of choice models has been extensively studied in the operations literature. The most popular choice model is the multinomial logit model (MNL) (see, e.g., \cite{Ryzin1999,Mahajan2001,Liu2008,rusmevichientong2010dynamic}). Other choice models, such as nested logit models \citep{davis2014assortment, Li2014}, non-parametric choice model \citep{Farias2013}, Markov chain choice model \citep{Blanchet2016,markchoice3}, have been studied under the problem of assortment optimization. 


Instead of using the standard auction (e.g., first-price auction) to assign the winning bidder deterministically, the MNL model provides a randomized way to select the winning bidder. In particular, for a given set of alternatives, assuming each choice $j$ has an expected utility $u_i(j)$ for agent $i$. In other words,  the agent $i$ perceives a value $\hat{u}_i(j)=u_i(j) + e_i(j)$ for the choice $j$, where  $e_i(j)$ is a random variable of perception error with $\mathbb{E}[e_i(j)] = 0$. A standard MNL model \citep{Train2009} assumes that all $e_i(j)$'s are \emph{i.i.d.} Gumbel distribution, and then the choice probability takes the following form:
$ \Pr[i \textup{ chooses } j] = \Pr[j \in \arg \max_{k} \hat{u}_i(k)] 
    = \frac{e^{m \cdot u_i(j)}}{\sum_{k} e^{m \cdot u_i(k)}}$.

As argued by \cite{shi}, randomness in choosing the winning user is necessary for a TFM to guarantee the collusion-proofness property, which we also prove in Appendix~\ref{sec:imp} even for the Bayesian setting. Intuitively, it is more profitable for a miner to collude with a user who deterministically gets her transaction confirmed than a user who only has a certain chance. While the burning second-price mechanism in \cite{shi} gives each user who bids high enough a pre-set probability to get confirmed, we consider a more natural idea of randomization by leveraging the logit choice model into the allocation rule (see Section~\ref{sec:size1:M}). In this paradigm, we can prioritize high-bidding users in a more natural and smooth way. The MNL-based allocation rule also fits well into our auxiliary mechanism method (Section~\ref{sec:roadmap}) and yields a constant-factor expected revenue compared to the optimal revenue.

In addition, in different fields of mechanism design, the MNL choice model is also adopted for other purposes. For example, \cite{huang2012exponential} utilize a similar mechanism to achieve differential privacy requirements, and the mechanism is also called \emph{exponential mechanism} in their work. Not surprisingly, they also replace the allocation rule of ``the highest-bidder gets the item'' with a soft-max relaxation while preserving the near-optimal property.

\subsection{Bayesian Mechanism Design}\label{sec:bmd}

From the famous \emph{revelation principle} \citep{myerson1981optimal, myerson1979incentive}, it is desirable to design incentive-compatible mechanisms. While the strongest notion of DSIC guarantees agents to report true types even if they have the complete information of other agents, this requirement could be a bit too restrictive in blockchain systems as users might not have such sufficient information about others. 

In the Bayesian game setting, we assume that the distribution of agents' types is known by the public. At the beginning of the game, each agent's type is assigned by nature following the corresponding distribution. Then, each agent only knows her own type and the distribution of others' types conditioned on her type, and seeks to maximize her expected utility. In this scope, a mechanism is BNIC if everyone truthfully reporting their true types forms a Bayesian Nash equilibrium.

While the design of BNIC mechanisms is less restrictive than DSIC mechanisms, the \emph{revenue equivalence} theorem \citep{myerson1981optimal} shows that the Bayesian game setting cannot gain extra revenue in conventional auctions when users have \emph{i.i.d.} valuation distributions, which is also the basis of a series of ``equivalence'' results between conventional BNIC and DSIC auctions, e.g. \citep{eso1999auction}. Furthermore, \cite{manelli2010bayesian,gershkov2013equivalence} show that the BNIC and DSIC conditions are equivalent in the conventional auction setting without the involvement of burning and collusion-proofness requirements. In the scope of blockchain transaction fee mechanism design, however, the existence of \emph{burning} allows partial ``revenue'' (total payment from users) to be rewarded to the miner while still keeping ex-post budget feasibility. Hence, while the revenue equivalence theorem \citep{myerson1981optimal} dictates that the Bayesian game setting cannot increase the total user payment, our research shows that a TFM can indeed increase the miner revenue with a decreased level of burning.

Two recent works \citep{yotam, cryptoeprint:2022/1294} on blockchain transaction fee mechanism design also consider the Bayesian setting. In particular, \cite{yotam} argue that a simple variation of the first-price auction can simultaneously satisfy U-BNIC and another collusion-proofness named \emph{OCA-proofness}, and \cite{cryptoeprint:2022/1294} show that if we relax all the incentive conditions (to almost U-BNIC, almost interim MIC and almost interim $c$-SCP), it is also possible to achieve positive revenue.  However, our result is different from \citep{cryptoeprint:2022/1294}. We have \emph{ex-post} almost MIC and SCP guarantees, which are crucial as discussed in Remark~\ref{remark:expost} in Appendix~\ref{sec:crypto}; their work in turn considers a different MPC-assisted setting{, but a weaker notion of SCP (as discussed in Section~\ref{sec:posted:price})}. Besides, the paper by \cite{yotam} is incomparable to ours as the OCA-proofness is fundamentally a different model from our work.

Similar to the currently used EIP-1559 TFM of Ethereum \citep{roughgarden2020transaction}, and papers of \cite{yotam} and \cite{cryptoeprint:2022/1294}, our work also uses a \emph{prior-dependent} mechanism that requires a parameter $c_\rho$ (as defined in Section~\ref{sec:alg:one_size}) that depends on the prior distribution. While blockchain mechanisms are usually hard-coded into the system, the distributional parameter can still be implemented to update adaptively based on historical data, similar to the base fee in EIP-1559. As shown by \cite{maheshwari2022inducing}, adaptiveness is indeed crucial in the development of social optimality in large-scale network mechanisms in the presence of selfish agents.

\section{Preliminaries}
\label{sec:prelim}

\subsection{Overview and Classification of Dishonest Behavior} \label{subsec:classif:dishonest}

In the blockchain system, either a user or miner may \emph{deviate} from the supposed behavior and behave dishonestly. First of all, as the blockchain system is anonymous, either type of agent may conduct \emph{sybil attack} \citep{zhang2014sybil} by creating multiple fake identities to influence the performance of the system. However, in the blockchain system, the PoW or PoS mechanism makes it costly to create a fake identity as a (winning) miner, so we mainly consider creating fake user identities, i.e. injecting fake bids, as also mentioned by \cite{roughgarden2021transaction} and \cite{shi}. On the other hand, even if every agent takes their true identities, they may do their jobs dishonestly: the users may bid differently from their true valuation, and the miner may purposely ignore some bids. Furthermore, the miner and users may also collude, i.e. conduct such dishonest behavior as a party in seek of increasing their total utility.  

In summary, dishonest behavior (deviations) can include untruthful bidding, fake identities and dishonest confirmation. Therefore, the deviations can be classified into the following three types: 
\begin{itemize}
    \item Untruthful Bids: proposing a bid different from the true valuation.
    \item Fake Transactions (Sybil attack): injecting fake transactions.
    \item Transaction Deletion: ignoring certain transactions proposed by users.
\end{itemize}

The dishonest behavior can be conducted by the miner, a user, or a colluding party of them. A user, or a colluding party of multiple users, can make untruthful bids and inject fake transactions, and the miner can inject fake transactions and delete existing transactions. A colluding party that consists of the miner and users, can do all these three deviations. Therefore, there can be 9 types of deviations in the system, as shown in Table~\ref{table:dishonst}.


We say a dishonest action is \emph{profitable} when it strictly increases the total utility of all agents participating in it. Precisely,

\begin{itemize}
    \item If it is an individual deviation, the agent strictly increases her utility via that deviation.
    \item If it is a collusion, the colluding party strictly increases its total utility via that collusion.
\end{itemize}

In this sense, for a strategy space $\mathcal{S}$, we say a TFM is $\mathcal{S}$-proof if all deviations in $\mathcal{S}$ are not profitable in this TFM.

\textbf{Information sets of agents.}
Before specifying the strategy-proof conditions, we need to discuss the \emph{information sets} of agents. In traditional auctions, the notion of \emph{incentive compatibility} means that the mechanism would optimize any bidder's utility when they bid their true valuations. On the other hand, the subtle meaning of \emph{incentive compatibility}  also depends on \emph{{what the bidders know}}. 
In this scope, the strongest notion is {Dominant Strategy Incentive Compatibility} (DSIC), which means that it optimizes any individual bidder's utility even if they know all others' bids, i.e. they have the complete information. Nevertheless, in sealed-bid auctions, bidders would not actually know what other bids, so the \emph{complete-information  (deterministic) setting} may be too strong. Nevertheless, we may still assume that they can perceive the \emph{distributions} of others' bids. In this so-called \emph{Bayesian-game setting}, if the mechanism can guarantee that any bidder would maximize their \emph{expected} utility via bidding their true valuation, we call the property Bayesian-Nash Incentive Compatibility (BNIC). 

In our paper, the bidding is conducted by users, so we name the DSIC and BNIC properties for users as U-DSIC and U-BNIC, respectively. The formal definitions are in Section~\ref{subsec:strategy:proof}.

\subsection{The Basic Model}\label{sec:basic-model}

There are $n$ users numbered by $1,2, \dots, n$ and each user proposes a transaction to compete for a block. There is also a winning miner owning the block. The block has size $k$, the maximum number of transactions it can confirm. For user $i$, w.l.o.g.\ we assume her valuation $v_i$ is in $[0,1]$ and drawn from an \emph{i.i.d.}\ distribution $V_i=V_0$ with pdf $\rho_i(\cdot)=\rho(\cdot)$. We let $V = V_1 \times V_2 \times \dots \times V_n$ be the distribution of the valuation vector $\v = (v_1, v_2, \dots, v_n)$.

By the revelation principle (\citep{myerson1981optimal, myerson1979incentive}, see Appendix~\ref{sec:crypto}), we only need to consider direct mechanisms in which users propose bids, the miner collects the bids and the system decides which transactions to confirm and processes the payments. Formally, we can model any Transaction Fee Mechanism w.r.t.\ its allocation, payment and miner revenue rules, as follows.

\begin{definition}[Transaction Fee Mechanism] \label{def:TFM}
    For a fixed number $n$ of users, a Transaction Fee Mechanism is defined by $M(\a,\p,r)$, where 
    \begin{itemize}
        \item the \emph{allocation rule} $\a: [0,1]^n \to [0,1]^n$ maps the bid vector to the allocation vector indicating the probability each user's transaction to be confirmed;
        \item the \emph{payment rule} $\p: [0,1]^n \to \R^n$ maps the bid vector to the payment vector indicating the payment of a user \emph{if her transaction is confirmed};\footnote{This definition is different from some literature (where the ``payment rule'' indicates the expected payment of a user whether she gets confirmed or not, which can be transformed to $a_i(\cdot)p_i(\cdot)$ in our notation). Besides, our definition naturally guarantees that unconfirmed users do not pay transaction fees.}
        \item the \emph{miner revenue rule} $r: [0,1]^n \to \R$ maps the bid vector to the miner's revenue.
    \end{itemize}
\end{definition}

In the na\"ive implementation of transaction collection in blockchains, users propose transactions (including bids) publicly and sequentially in a mempool and miners pack them into blocks, so it acts as an auction format between open bidding and sealed bidding --- users can see bids submitted before them but not after them, which may lead to several issues (see Appendix~\ref{sec:crypto}). In our mechanism, we implement sealed bids via a \emph{commitment scheme} as described in Appendix~\ref{sec:crypto} so that no bid can be viewed by the miner or other users until all transactions that compete for the block are finalized. In the execution of the mechanism, the system essentially elicits users for their (sealed) bids $\{b_i\}$, draw
the confirmed transactions
according to probabilities from $\{a_i(b_i,\b_{-i})\}$ (for $k>1$, one follows the sampling method discussed in Section~\ref{sec:drawing:rule} to ensure exactly $k$ transactions are confirmed), and then charge transaction fees from confirmed bidders according to $\{p_i(b_i,\b_{-i})\}$ and give the miner revenue $r(\b)$ to the miner. Due to the size constraint, we need to guarantee $\sum_{i=1}^n a_i(b_i,\b_{-i}) \le k$. Since the transactions are naturally anonymous and unsorted, in this paper we only consider the mechanisms satisfying the following symmetric condition:

\begin{definition}[Symmetry]
A TFM is symmetric if the allocation and payment rules do not depend on the order of users, i.e. when we swap any pair of users, each should still have the same allocation probability and payment as in their original positions.
\end{definition}

In Definition~\ref{def:TFM}, we assume that (in the usual case) each of the $n$ users makes one bid, and therefore the allocation, payment, and miner revenue rules are functions of exactly $n$ bids. While this definition is enough for the discussion of the main strategy-proof properties (e.g., U-BNIC, $1$-SCP) concerned in this paper, we also expect the proposed TFMs to be strategy-proof against deviations such as the Sybil Attack and the deletion of user bids by the miner (namely, {\tt FT} and {\tt TD} in Table~\ref{table:dishonst}). These deviations could change the number of bids presented to the TFM, and we need to define the following \emph{variable-bid-size TFM} to deal with this technical issue.

\textsc{Definition~\ref{def:TFM}' (Variable-bid-size TFM).} A variable-bid-size TFM $\mathcal{M}(\a, \p, r)$ is similar to the regular TFM defined in Definition~\ref{def:TFM} where the only difference is that the allocation rule $\a : [0, 1]^* \to [0, 1]^*$,\footnote{For any set $A$, we use $A^* = \cup_{\ell=0}^{\infty} A^{\times \ell}$ to denote the set of all finite sequences where the elements are drawn from $A$.} the payment rule $\p : [0, 1]^* \to \R^*$ and the miner revenue rule $r : [0, 1]^* \to \R$ may take sequences of bids of any size, while the size of the output of $\a$ ($\p$) should be the same as the input, where each entry in the output is the allocation (the bid respectively) of the corresponding bid.

Throughout the paper when we refer to a TFM $M$, unless specially noted, we assume that $M$ is a regular TFM (Definition~\ref{def:TFM}). We will mostly focus on the design of a TFM for any fixed bid size. To construct a variable-bid-size TFM $\mathcal{M}$, we will first choose an $M_\eta = (\a_\eta, \p_\eta, r_\eta)$ for $\eta$ bids for each $\eta$ according to the regular TFM design, and then let $\mathcal{M} = (\a, \p, r) = \cup_\eta \{M_\eta\}$, or more concretely, for any bidding vector $\b$, let $|\b|$ denote the size of $\b$ and we set
\begin{align}\label{eq:regular-to-variable-size-TFM}
\a(\b) = \a_{|\b|}(\b), 
\p(\b) = \p_{|\b|}(\b), r(\b) = r_{|\b|}(\b).
\end{align}

Given a variable-bid-size TFM $\mathcal{M}$, for any fixed number of bids, namely $n$, there is a \emph{natural restriction} of $\mathcal{M}$ to a regular TFM, namely $M$. To derive $M$, we simply let its allocation, payment, and miner rules be the corresponding functions of $\mathcal{M}$ when restricted to inputs of size $n$.

\subsection{Incentive and Collusion-Proof Conditions}\label{subsec:strategy:proof}

We now discuss the desired properties we would like the mechanism to enjoy, i.e. the properties that agents would not gain additional utility via dishonest behavior (\UB, \FT, \TD).

As a basis, we note that the users have \emph{quasi-linear} utility: when user $i$ has valuation $v_i$ and the bidding vector is $(b_i, \b_{-i})$, user $i$'s (expected) utility is
\begin{align}\label{eq:def-u}
u_i(b_i,\b_{-i};v_i) = a_i(b_i,\b_{-i})\cdot(v_i - p_i(b_i, \b_{-i})) .    
\end{align} 

{For the miner's utility, in real-world blockchains, the miner's reward comes from the combination of the \emph{block reward} and the \emph{miner revenue} from the transaction fees, and the miner also pays a {mining cost} due to the computational consumption / token staking in PoW or PoS protocols, respectively. Since the block reward and the mining cost are fixed due to the blockchain protocol and is not affected by the transactions, for simplicity of expression, we just denote the miner's utility as the miner revenue provided by the TFM, i.e.,
\begin{equation}
    u^{(miner)}(\b) = r(\b).
\end{equation}
}

We now formally define U-DSIC and U-BNIC as follows.
\begin{definition}[User {Dominant-Strategy-Incentive-Compatibility} (U-DSIC)] \label{def:udsic}
For any user $i$, assuming the miner follows the inclusion rule truthfully, a TFM is U-DSIC if and only if it is a {dominant} strategy for any user 
to bid their valuations, i.e.
$    \forall \b_{-i}, v_i \in \arg\max_{b_i} [u_i(b_i,\b_{-i};v_i)]$. 
\end{definition}

\begin{definition}[User Bayesian-Nash-Incentive-Compatibility (U-BNIC)]  \label{def:ubnic}
Assume $\Omega$ is the type space of nature, and there is a public mapping $B:\Omega \to \R_{\ge 0} ^n$ that determines the valuation of all users. Thus, the valuation vector $\mathbf{v} \sim V = V_1 \times V_2 \times \dots \times V_n$ where each $V_i = V_0$ due to our model assumption.

For each user $i$, she only knows her own valuation and the distribution of other users' valuations conditioned on $v_i$, denoted as $V_{-i} = V|v_i$. A TFM is (interim) U-BNIC if and only if, when other users all bid their valuations, it maximizes user $i$'s expected utility if she bids her valuation too, i.e.
    $v_i \in \arg\max_{b_i}\mathbb{E}_{\b_{-i} \sim V_{-i}}[u_i(b_i,\b_{-i};v_i)]$. 
\end{definition}

We also characterize the user Sybil-proofness property, which guarantees that the user cannot increase via injecting fake bids, her expected utility over the distribution of other users. Formally,

{
\begin{definition}[User Sybil-Proofness (U-SP)]  \label{def:usp}
Assuming each user $i$ only knows her own valuation and the distribution of other users' valuations conditioned on $v_i$, denoted as $V_{-i}$, and assuming all users bid their true valuations. Then, we call a variable-bid-size TFM (interim) $(C,N)$-U-SP {for a fixed $(C,N)$} when $n>N$, user $i$ cannot increase her expected utility (over the distributions of other users' bids) via injecting $l \le Cn$ fake bids $\b^\# = \{b_{n+1}, \cdots, b_{n+l}\}$. Here we still denote $\b = (b_1,\cdots,b_{n})$, and define $\b^{+} = (b_1,\cdots,b_{n+l})$ containing all real and fake bids.

Notice that fake bids {generally} have a valuation of 0 because getting the fake transaction confirmed does not have any value for the user. 
{A possible exception is repeating the same transaction when the block size $k=1$, as at most one of them can be confirmed. However, in the general case where $k\ge 2$, the adverse consequence of having both transactions confirmed (e.g., paying twice for the same transaction) far outweighs the benefit of transaction confirmation. On the other hand, preventing this type of strategy when $k=1$ is impossible for the following reason: if all transactions have the same bid, the anonymity of the blockchain system ensures that any mechanism can only randomly select a transaction to confirm. Therefore, duplicating a transaction will always increase the likelihood of it being confirmed.
Since our study is mainly motivated by real-world blockchains with $k\ge 2$ block sizes, we exclude these types of strategies from consideration in our model.}

The utility of user $i$ is the total utility of $i$ herself and all fake bids. Therefore, a variable-bid-size TFM is $(C,N)$-U-SP if and only if for any valid $\b^+$, {if $b_i=v_i$, then
\begin{align*}
    &\mathbb{E}_{\b_{-i} \sim V_{-i}}[u_i(b_i,\b_{-i};v_i)] \\
    \ge &\mathbb{E}_{\b_{-i} \sim V_{-i}}\left[u_i(b_i,\b^+_{-i};v_i) + \sum_{j=n+1}^{n+l} u_j(b_{j},\b^+_{-j};0)\right]. 
\end{align*}}
As a shorter notion, we call a variable-bid-size TFM U-SP when there exist constants $C>0, N>0$ such that the TFM is $(C, N)$-U-SP.

\end{definition}
}

To describe the collusion-proofness, we use the notation of $c$-SCP in \cite{shi}, defined as:
\begin{definition}[$c$-Side-Contract-Proofness ($c$-SCP)] \label{def:cscp}
We call a TFM (ex-post) $c$-SCP when is impossible for the miner to collude with at most $c$ users to strictly increase their total utility when other users bid according to the Bayesian Nash equilibrium, even if the miner knows all users' valuations and bids.
\end{definition}

Note that successful collusion only needs the party to have increased total utility rather than individual utilities, because the members can make payments among themselves to make everyone get increased utility.

\textbf{Collusion-proofness notions for non-truthful TFMs.} It might be tricky to define the collusion-proofness for a TFM that does not satisfy U-BNIC. For example, \cite{shi} claimed that the first-price auction mechanism is $c$-SCP in the sense that $c$ users and the miner could not increase their joint utility via any deviation \emph{when all other users bid truthfully}. However, in the first-price auction, the users may not report their real valuation, leading to a different Bayesian Nash equilibrium. 

In our Bayesian setting, we assume that users who do not conduct the collusion would bid according to the Bayesian Nash equilibrium to maximize their expected utility. We show in {Appendix~\ref{sss:aux:fpa}} that \emph{at a Bayesian Nash equilibrium}, a user and the miner could increase their joint utility via deviation, even if burning is allowed. Therefore, we state that the first-price auction is not even 1-SCP in our notion.

Nevertheless, for TFMs that satisfy U-BNIC, assuming non-colluding users to bid truthfully or to bid as the Bayesian Nash equilibrium does not make a difference.

\textbf{Miner-Incentive-Compatibility.}
As is mentioned in the papers of \cite{shi} and \cite{roughgarden2021transaction}, the Miner-Incentive-Compatibility (MIC) is the property that assuming the users bid truthfully, the miner could not increase her utility via deviations from the inclusion rule (i.e. \MFT~and \MTD). It will be rigorously defined in Section~\ref{subsec:almost:mic} (Definition~\ref{def:mic}).

\subsection{Rationality and Feasibility Requirements}\label{sec:irbf}

Besides truthfulness and collusion-proofness, the mechanism also needs to satisfy more general properties, e.g. the balance must be feasible, the users should not pay more than their bid, etc. Formally, the following properties should also be satisfied.

\noindent {\textbf{(Ex-post) \textbf{User Individually Rationality (UIR).}} Each user gets non-negative utility when truthful bidding, no matter how others bid, i.e. ${\forall \b_{-i}},$
$    u_i(v_i,\b_{-i};v_i) \ge 0$.
Equivalently,  $a_i(v_i,\b_{-i})>0 \implies p_i(v_i,\b_{-i}) \le v_i$.

\noindent{\textbf{(Ex-post) Budget Feasibility (BF).}} {For all bidding vector $\b$, the miner's revenue $r(b_i,\b_{-i})$ should not be greater than the total user payment:} 
\begin{align}\label{eqn:total_payment}
P(\b) = \sum_{i=1}^n a_i(b_i,\b_{-i})\cdot p_i(b_i,\b_{-i}).
\end{align}
In other words, we should have {$\forall \b$,} $P(\b) \ge r(\b)$.
    
Here we allow the miner revenue to be \emph{less} than the total fee paid by users, in which the difference will be \emph{burnt}. The burning can decouple payments on miners' and users' sides, which is an effective way to broaden the design space and allow additional strategy-proof properties to be satisfied \citep{zishuo2022dynamic}. Actually, the burning has been used in the EIP-1559 TFM of Ethereum.

Additionally, while we expect the transaction \emph{fee} a user pays should be non-negative, it is okay as long as the users have a non-negative \emph{expected} payment to prevent users from submitting transactions to gain money out of nothing, which is guaranteed in our mechanisms for both $k=1$ and general $k$ (as (\ref{eqn:bnic:suff:cond}) is satisfied). However, as UIR requires the payment of the zero-valuation user to be no greater than zero,  the payment of the zero-valuation user is always zero (rather than negative). Therefore, we need the following NFL condition.

\noindent\textbf{No-Free-Lunch (NFL).} We call a TFM $(\a,\tilde{\p},\tilde{r})$ NFL when the payment of a zero-bidding user is always zero {no matter how other users bid}, i.e.,
\begin{align}
    a_i(0,\b_{-i})\tilde{p}_i(0,\b_{-i}) = 0,\quad {\forall \b_{-i}}. \label{eqn:nfl}
\end{align}
\subsection{Deterministic and Randomized Mechanisms}
\label{subsec:determ:discussion}
We assume generic positions of bids, which means that all bids are distinct. For simplicity, we would want the mechanism to be deterministic, i.e., the same input bidding vector leads to the same allocation outcome, equivalently $a_i(b_i,\b_{-i})\in \{0,1\}$. However, we can prove that even if we relax U-DSIC to U-BNIC, no \emph{deterministic} U-BNIC and 1-SCP TFM that satisfy mild conditions can achieve positive miner revenue, indicating that the randomness in our main mechanism is necessary. The formal discussion is in Appendix~\ref{sec:imp}.

{An intuitive explanation about how this works is as follows: to construct a U-BNIC TFM we essentially ``adjust'' the users' payments from a U-DSIC mechanism in a way that increases the miner revenue while preserving U-BNIC and 1-SCP (via the auxiliary mechanism method discussed in Section~\ref{sec:roadmap}); however in a deterministic mechanism, the payments of users with $a_i(b_i,\b_{-i})=0$ are fixed at $0$ and cannot be adjusted, rendering the auxiliary mechanism method inapplicable.}


\section{The Auxiliary Mechanism Method}\label{sec:roadmap}


In this section, we introduce our main technique to construct the desired U-BNIC mechanism, named as \emph{auxiliary mechanism method}. We will make the connection between BNIC and DSIC mechanisms by developing a decomposition of many U-BNIC TFMs into an \emph{auxiliary} U-DSIC TFM and a so-called \emph{variation term}. In light of this, we develop Theorem~\ref{thm:decomposition}, the key theorem of this section, to provide sufficient conditions for any combination of an auxiliary U-DSIC TFM and a variation term to form a desired TFM that is simultaneously U-BNIC and $1$-SCP (Section~\ref{sec:auxiliary-dominant-auxiliary} and Section~\ref{sec:auxiliary-auxiliary-variation-decomposition}). Our Theorem~\ref{thm:decomposition} will provide a general framework to facilitate the construction of our desired U-BNIC TFMs, and we will discuss more about this framework in Section~\ref{sec:auxiliary-construction-framework}. Finally, in Section~\ref{sss:aux:1scp} and Appendix~\ref{app:add_persp}, we will provide more explanations and concrete examples to help the readers understand our auxiliary mechanism method, which, however, is not a pre-requisite of the constructions in the later sections. 


\subsection{The Dominant Auxiliary of a BNIC TFM and Their Relations} \label{sec:auxiliary-dominant-auxiliary}

For simplicity, let us first consider the mechanism on the users' side, and recall the famous Myerson's Lemma \citep{myerson1981optimal} that characterizes the sufficient and necessary condition for a TFM to be U-DSIC, stated as follows. (We note that the original Myerson's Lemma is stated for auctions. However, if we only focus on the users' incentive compatibility constraints, a TFM reduces to an ordinary auction. Indeed, in our statement of Lemma~\ref{lem:myerson}, the miner's revenue function $r$ is irrelevant.)
\begin{lemma}[Myerson's Lemma   \citep{myerson1981optimal}]\label{lem:myerson}
Any TFM $M = (\a, \p, r)$ is U-DSIC if and only if the following conditions are satisfied.
  \begin{itemize}
      \item Monotone allocation: $a_i(\cdot,\b_{-i})$ is monotonic non-decreasing,
      \item Constrained payment function: 
      \begin{align}\label{eq-lem-myerson-constrained-payment-function}
      \begin{aligned}
      &a_i(b_i,\b_{-i})p_i(b_i,\b_{-i}) \\
      =& \int_{0}^{b_i} t \pd{a_i(t,\b_{-i})}{t} dt + a_i(0,\b_{-i})p_i(0,\b_{-i}).
      \end{aligned}
      \end{align}
  \end{itemize}
\end{lemma}

Motivated by Lemma~\ref{lem:myerson}, for any monotonic non-decreasing allocation rule $\a$, we define a payment rule $\p$ to be its \emph{dominant association}. In particular, we set
\begin{align}\label{eqn:aux:payment}
p_i(b_i,\b_{-i}) &= \begin{cases}
    \frac{\int_{0}^{b_i} t \pd{a_i(t,\b_{-i})}{t}dt}{a_i(b_i,\b_{-i})},&\:a_i(b_i,\b_{-i})>0\\
	0,&\:a_i(b_i,\b_{-i})=0\\
 \end{cases}  .
\end{align}
Since our definition Eq.~\eqref{eqn:aux:payment} satisfies the condition in Eq.~\eqref{eq-lem-myerson-constrained-payment-function}, for any monotonic non-decreasing $\a$, together with its dominant association $\p$, we get a U-DSIC mechanism $(\a, \p, 0)$. (The miner reward function here is set to be constantly $0$ only for illustration purpose. It could be a different $r$.) We also note that there seems to be little freedom for the payment rule $\p$ in order to form a U-DSIC TFM with $\a$. Indeed, $p_i(b_i, \b_{-i})$ is relevant only when $a_i(b_i, \b_{-i}) > 0$ (i.e., when the $i$-th bidder has a chance to be confirmed). In this case, if we additionally add the natural boundary condition $p_i(0, \b_{-i}) = 0$ (similar to the NFL assumption), then $p_i(\cdot, \b_{-i})$ defined in Eq.~\eqref{eqn:aux:payment} is the unique solution to Eq.~\eqref{eq-lem-myerson-constrained-payment-function}.

For any U-BNIC mechanism $\tilde{M} = (\a, \tilde{\p}, \tilde{r})$ with monotonic non-decreasing $\a$, we let $M = (\a, \p, 0)$ to be its \emph{dominant auxiliary} mechanism where $\p$ is the dominant association of $\a$. Let us compare the payment rules of the two mechanisms and define a ``payment difference'' function that denotes the over-payment of each user according to $\tilde{M}$ compared to $M$, as 
\begin{align}\label{eq:def-theta}
    \theta_i(b_i,\b_{-i}) = a_i(b_i,\b_{-i})(\tilde{p}_i(b_i,\b_{-i})-p_i(b_i,\b_{-i})).
\end{align}
Note that the Revenue Equivalence Theorem 
indicates that for the same \emph{i.i.d.}\ path-connected distribution of valuations and given boundary conditions, all BNIC mechanisms with the same allocation rule should have the same \emph{expected} payment for a bidder with a fixed valuation (where the expectation is taken over the valuation of the other bidders). Since $M = (\a, \p, 0)$ is a DSIC mechanism and therefore also BNIC, we have that $\tilde{M} = (\a, \tilde{\p}, \tilde{r})$ and $(\a, \p, 0)$ make the same expected payment for any bidder the fixed valuation. Formally for any bidder $i$, we have that 
\begin{align}\label{eq:exp-theta-equals-zero}
\mathbb{E}_{\b_{-i} \sim V_{-i}}[\theta_i(b_i,\b_{-i})] &= 0,
\end{align}
which marks the close relation between the BNIC mechanism $\tilde{M}$ and its dominant auxiliary $M$.

\subsection{The Auxiliary-Variation Decomposition} \label{sec:auxiliary-auxiliary-variation-decomposition}



We have just shown that a U-BNIC mechanism $\tilde{M} = (\a, \tilde{\p}, \tilde{r})$ has a corresponding dominant auxiliary U-DSIC mechanism $M = (\a, \p, 0)$ (when $\a$ is monotone) and defined their payment difference to be $\bmtheta$ defined in Eq.~\eqref{eq:def-theta}. Formally, we summarize this decomposition and make the following definition.

\begin{definition}[Auxiliary-Variation Decomposition] \label{def:auxiliary-variation-decomposition}
Given a mechanism $\tilde{M} = (\a, \tilde{p}, \tilde{r})$ where $\a$ is monotonic non-decreasing, {we set up the \emph{auxiliary mechanism} $M=(\a,\p,0)$ and the \emph{variation term}  $T=(\bmtheta, \tilde{r})$.} Suppose we have the following conditions met, we call $(M, T)$ an \emph{auxiliary-variation decomposition} of $\tilde{M}$ and also write $\tilde{M} = M + T$ for short.
\begin{enumerate}
    \item In the auxiliary mechanism $M=(\a,\p,0)$, $\p$ is the dominant association of $\a$.
    \item The variation term  $T=(\bmtheta, \tilde{r})$ satisfies that
      $  \tilde{p}_i(b_i,\b_{-i}) = p_i(b_i,\b_{-i}) + \frac{\theta_i(b_i,\b_{-i})}{a_i(b_i,\b_{-i})}$,
    where we require that $\theta_i(b_i,\b_{-i}) = 0$ whenever $a_i(b_i,\b_{-i}) = 0$ and treat $0/0$ as $0$.
\end{enumerate}
\end{definition}

We will use Definition~\ref{def:auxiliary-variation-decomposition} in a reverse way: given a TFM $M = (\a, \p, 0)$ such that $\p$ is the dominant association of $\a$, if we could design a variation term $T = (\bmtheta, \tilde{r})$ that satisfies the additional \emph{admissibility} conditions, then we would expect that $\tilde{M} = M + T$ is not only U-BNIC but also $1$-SCP.
Formally, we define the admissibility conditions as follows. 
\begin{definition}\label{def:admissibility}
    We call the variation term $T = (\bmtheta,\tilde{r})$ \emph{admissible} if it satisfies the following conditions for every $\b$.
    \begin{align}
        \tilde{r}(b_i,\b_{-i}) - \tilde{r}(0,\b_{-i}) &= \theta_i(b_i,\b_{-i}), \label{eq:def-admissibility-1}\\
        \mathbb{E}_{\b_{-i} \sim V_{-i}}[\theta_i(b_i,\b_{-i})] &= 0. \label{eq:def-admissibility-2}
    \end{align}
\end{definition}

We note that the second admissibility condition (Eq.~\eqref{eq:def-admissibility-2}) derives from Eq.~\eqref{eq:exp-theta-equals-zero} which is necessary for the composed TFM $\tilde{M}$ to be U-BNIC. The first admissibility condition (Eq.~\eqref{eq:def-admissibility-1}) is key to guarantee the $1$-SCP property, which will be further explained in Section~\ref{sss:aux:1scp}.

The following theorem states that the admissibility conditions for the variation term are enough to guarantee the composed TFM is U-BNIC and $1$-SCP, and is the basis of our TFM constructions later.
\begin{theorem} \label{thm:decomposition}
Suppose the variation term $T$ is admissible. For any auxiliary mechanism $M$ that may form an auxiliary-variation decomposition with $T$, the composed TFM $\tilde{M} = M + T$ is U-BNIC and $1$-SCP.
\end{theorem}

The formal proof of Theorem~\ref{thm:decomposition} is deferred to Appendix~\ref{app:proof:decomposition}. 

\subsection{Using the Auxiliary-Variation Decomposition to Construct TFMs} \label{sec:auxiliary-construction-framework}
Given a U-DSIC TFM $M = (\a, \p, 0)$ where $\a$ is monotonic non-decreasing and $\p$ is the dominant association of $\a$, it is trivial to see that $T_{\bot} = (0, 0)$ is a trivial admissible variation term and $M + T_{\bot} = (\a, \p, 0)$ is both U-BNIC and $1$-SCP. However, in this trivial construction, the miner revenue is always $0$, which is not desirable. 

In order to achieve a larger miner revenue, we would like to explore more choices of $\tilde{M}$. Now, Theorem~\ref{thm:decomposition} provides an approach to create a class of U-BNIC and $1$-SCP TFMs based on $M$, so that we could hope to find a desired TFM from this class (e.g., with large miner revenue, and other properties such as UIR, BF, etc.). In particular, for any fixed $M$, we will be able to able to perturb the payment function (via $\bmtheta$) and create more design choices for $\tilde{r}$, which jointly enrich the space of admissible variation terms $\{T\}$, as well as the corresponding class $\{\tilde{M}\}$ of the composed U-BNIC and $1$-SCP TFMs, thanks to Theorem~\ref{thm:decomposition}. 

One nice thing about the above approach is that it is \emph{almost modular} in the design of the auxiliary mechanism $M$ and the variation term $T$. We note that the constraints for $M$ are almost independent of that for $T$, while the only correlating constraint is that $a_i(b_i,\b_{-i}) = 0 \Rightarrow \theta_i(b_i,\b_{-i}) = 0$, which is very mild and holds for most natural choices of $M$ and $T$. This modular framework decouples our design tasks for $M$ and $T$, greatly reduces the design complexity and renders our final mechanism more interpretable.

To describe more details about our construction framework, note that we would like to construct a TFM that also satisfies the UIR and BF conditions. Note that the auxiliary $M (\a, \p, 0)$ always satisfies UIR, and the zero miner revenue indicates it also satisfies BF. Now, intuitively, if we make the variation term $T$ ``small enough'' so that $\tilde{M} = M + T$ is close to $M$, $\tilde{M}$ is also likely to satisfy UIR and BF conditions, although having lower miner revenue as well. We further notice that the admissibility condition of $T$ is \emph{scale-free}, i.e., if $T = (\bmtheta, \tilde{r})$ is admissible and for any $h > 0$, $hT = (h \cdot \bmtheta, h \cdot \tilde{r})$ is also admissible. Therefore, we will construct our desired TFM along the following steps.
\begin{enumerate}
\item Construct an allocation rule $\a$, derive the corresponding dominant association $\p$ and the auxiliary TFM $M=(\a,\p,0)$.
\item Find a ``good'' admissible variation term $T = (\bmtheta, \tilde{r})$.
\item Compute the (approximately) optimal $h$ so that $M + hT$ maximizes miner revenue while still obeying UIR and BF conditions.
\end{enumerate}

In Sections~\ref{sec:size1}--\ref{section:size:k}, we will present our  constructions of $M$ and $T$, which, together with the optimal
choice of $h$, can achieve a constant-fraction approximation of the optimal miner revenue (with the desired properties: U-BNIC, $1$-SCP, UIR, BF, etc.).






\subsection{Further Explanation of the Condition Eq.~\eqref{eq:def-admissibility-1}.} \label{sss:aux:1scp}

In this part, we explain the relationship between the payment difference function $\bmtheta$ and the miner revenue function $\tilde{r}$ in an admissible variation term, as well as the condition Eq.~\eqref{eq:def-admissibility-1} for admissibility. 

To characterize the properties of the TFM $\tilde{M}$, we first look into the relations between $\tilde{\p}$ and $\tilde{r}$. Recall that in our model, users and the miner have different information on the bids: a user only knows the distribution of other users' valuations (and bids, if the mechanism is U-BNIC), but the miner knows all bids accurately. Therefore, in mechanism $\tilde{M}$, for fixed $\b_{-i}$, truthfully bidding $v_{i}$ does not guarantee to maximize user $i$'s utility $\tilde{u}(b_i,\b_{-i};v_i)$ (which is defined to be $a_i(b_i,\b_{-i})\cdot(v_i - \tilde{p}_i(b_i, \b_{-i}))$ following the definition in Eq.~\eqref{eq:def-u}), but it must maximize the total utility of her and the miner, as $\tilde{u}(b_i,\b_{-i};v_i) + \tilde{r}(b_i,\b_{-i})$, so that the miner is not incentivized to ask her to deviate. Therefore, if we further assume smoothness for  simplicity (a formal proof without the smoothness assumption is provided in Appendix~\ref{app:proof:decomposition}), 
we have
\begin{align}
    \left.\pd{}{b_i}\left(\tilde{u}(b_i,\b_{-i};v_i) + \tilde{r}(b_i,\b_{-i})\right)\right|_{b_i=v_i} = 0.
\end{align}

However, in the auxiliary TFM $M$, which is U-DSIC, bidding $b_i=v_i$ maximizes user $i$'s utility, hence
\begin{align}
    \left.\pd{}{b_i}u_i(b_i,\b_{-i};v_i) \right|_{b_i=v_i} = 0.
\end{align}

Since TFMs $\tilde{M}$ and $M$ have the same allocation rule, users' utilities only differ in payments, we have $\tilde{u}(b_i,\b_{-i};v_i) = u(b_i,\b_{-i};v_i) - \theta_i(b_i,\b_{-i})$. Therefore, we get the relation between $\theta_i(b_i,\b_{-i})$ and $\tilde{r}(b_i,\b_{-i})$:
\begin{align}\label{eqn:am:differential}
    \pd{}{b_i}\theta_i(b_i,\b_{-i}) = \pd{}{b_i}\tilde{r}(b_i,\b_{-i}).
\end{align}
That is, if user $i$ would benefit from an infinitesimal deviation from truthful bidding, the miner would lose the same amount in turn, so that the miner has no incentive to let user $i$ deviate, even though she has additional information about other users' bids. With the boundary condition $\tilde{p}_i(0,\b_{-i})=0$ (thus $\theta_i(0,\b_{-i})=0$), we get
\begin{equation*}
    \theta_i(b_i,\b_{-i}) = \tilde{r}(b_i,\b_{-i}) - \tilde{r}(0,\b_{-i}), \quad \forall i. 
\end{equation*}
This characterizes the relation between user payments and miner revenue in $1$-SCP TFMs, and also shows the need of condition Eq.~\eqref{eq:def-admissibility-1} in an admissible variation term.

Following the discussion of this part, we can actually find a critical challenge in constructing an admissible variation term, and develop an alternative field-theoretic perspective of the admissibility condition. The detailed discussion is in Appendix~\ref{app:add_persp}.

\section{The Proposed Mechanism for Block Size 1} \label{sec:size1}

In this section, we consider the case with block size $k=1$, where exactly one transaction is confirmed, to give a simple and intuitive understanding of our mechanism. We follow the pipeline of auxiliary mechanism method in construction. In Section~\ref{sec:size1:M} we construct the auxiliary mechanism named soft second-price mechanism, and in Section~\ref{sec:alg:one_size} we compute the variation term, thus finishing the construction of the proposed mechanism and compute its miner revenue.



\subsection{Auxiliary: the Soft Second-Price Mechanism}\label{sec:size1:M}

The second-price auction mechanism has been widely used in traditional auctions, in which the highest bidder gets confirmed but pays the second-highest bid. However, as we prove that any deterministic TFM which is U-BNIC and 1-SCP satisfying mild assumptions has non-positive miner revenue (Appendix~\ref{sec:imp}), we try to introduce randomness into the allocation rule.

Here we consider the widely used multinomial logit choice model 
in which the choice probability of an item is proportional to the exponential of a parameter $m$ times its value. If we set the $m$ to infinity, then the item with the highest value is deterministically chosen, coinciding with the allocation rule of second-price auction; if we set $m$ to zero, then all items are randomly chosen with uniform chances. For $m\in (0,+\infty)$, the choice is random, but higher-valued items are more likely to be chosen.

As a basis of our main mechanism, we first develop a U-DSIC and 1-SCP mechanism named soft second-price mechanism, which is the auxiliary mechanism of our proposed TFM. It adopts the multinomial logit choice model as the allocation rule. After fixing the allocation rule $\a$, we derive the corresponding dominant association $\p$ (according to Eq.~\eqref{eqn:aux:payment}), and form the auxiliary mechanism $M = (\a, \p, 0)$, which is explicitly presented in Mechanism~\ref{mec:1:aux}.

\begin{Mechanism}[!t]
\begin{framed}
    \begin{align}
        a_i(b_i,\b_{-i}) &= \frac{e^{mb_i}}{\sum_{j=1}^{n} e^{mb_j}}  \label{ssp_begin}\\
        p_i(b_i,\b_{-i}) &= b_i - \frac{\sum_{j=1}^{n}e^{mb_j}}{me^{mb_i}} \cdot  \ln\frac{\sum_{j=1}^{n}e^{mb_j}}{1+\sum_{j\neq i}e^{mb_j}}  \label{ssp_payment}\\
        r(\b) &= 0. \label{ssp_end}
    \end{align}
\end{framed}
\caption{Auxiliary Mechanism for block size 1}
\label{mec:1:aux}
\end{Mechanism}

One good thing about our auxiliary mechanism is that every entry of the allocation function $\a$ is always positive for all $m < \infty$. Therefore, it automatically satisfies the requirement in the second item of Definition~\ref{def:auxiliary-variation-decomposition} and can be combined with any variation term (to be designed soon) in our auxiliary-variation decomposition.

Although the soft second-price mechanism has zero miner revenue, we can modify it via the auxiliary mechanism method that preserves U-BNIC and 1-SCP properties and yields positive expected miner revenue, as in Section~\ref{sec:alg:one_size}.

\ifdefined\ConnectionToSPA

\subsubsection{Connection to the Ordinary Second-Price Auction}

We notice that when $m\to +\infty$, the highest bidder will be confirmed. Actually, we can observe that when $m\to \infty$, and bids are \emph{generic}(distinct in $(0,1)$),  the user payment rule converges to the second price auction, i.e.

\begin{observation} \label{obs:1}
  Without loss of generality, we assume that $b_{(1)}>b_{(2)}>\cdots >b_{(n)}$ is a permutation of $\b$. We then have that
  \begin{align}
      \lim_{m\to +\infty} a_i(b_i,\b_{-i}) &= \begin{cases}
    1,\:b_i = b_{(1)}\\
	0,\:b_i\ne b_{(1)}
  \end{cases}, \label{ssp:lim:alloc}\\
  \lim_{m\to +\infty} p_i(b_i,\b_{-i}) &= \min\{b_i,b_{(2)}\}. \label{ssp:lim:payment}
  \end{align}
\end{observation}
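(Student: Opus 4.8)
The plan is to handle the allocation limit \eqref{ssp:lim:alloc} and the payment limit \eqref{ssp:lim:payment} separately, writing $S := \sum_{j=1}^{n} e^{m b_j}$ and $S_{-i} := \sum_{j \ne i} e^{m b_j}$ so that $S = e^{m b_i} + S_{-i}$ and the payment rule becomes $p_i(b_i,\b_{-i}) = b_i - \frac{S}{m e^{m b_i}} \ln \frac{S}{1 + S_{-i}}$. For the allocation I would factor out the dominant exponential and write $a_i = \bigl(1 + \sum_{j \ne i} e^{m(b_j - b_i)}\bigr)^{-1}$. Genericity forces a strict gap between $b_{(1)}$ and $b_{(2)}$: if $b_i = b_{(1)}$ every exponent $b_j - b_i$ is strictly negative, so each summand vanishes and $a_i \to 1$; if $b_i < b_{(1)}$ the index attaining $b_{(1)}$ contributes $e^{m(b_{(1)} - b_i)} \to +\infty$, forcing $a_i \to 0$. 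This gives \eqref{ssp:lim:alloc}.

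For the payment I would split on whether $i$ is the winner. In the winning case $b_i = b_{(1)}$, I first observe $S_{-i}/e^{m b_i} = \sum_{j \ne i} e^{m(b_j - b_i)} \to 0$, hence $S/e^{m b_i} \to 1$. For the logarithm I would read off leading orders via $\ln \frac{S}{1 + S_{-i}} = \ln S - \ln(1 + S_{-i})$: the dominant term of $S$ is $e^{m b_{(1)}}$, giving $\ln S = m b_{(1)} + o(1)$, and the dominant term of $S_{-i}$ is $e^{m b_{(2)}}$ with $b_{(2)} > 0$ (so $S_{-i} \to \infty$), giving $\ln(1 + S_{-i}) = m b_{(2)} + o(1)$. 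Thus $\ln \frac{S}{1 + S_{-i}} = m(b_{(1)} - b_{(2)}) + o(1)$, and multiplying by $\frac{S}{m e^{m b_i}}$, whose leading order is $\frac{1}{m}$, cancels the factor $m$ and yields $\frac{S}{m e^{m b_i}} \ln \frac{S}{1 + S_{-i}} \to b_{(1)} - b_{(2)}$. Therefore $p_i \to b_{(1)} - (b_{(1)} - b_{(2)}) = b_{(2)} = \min\{b_i, b_{(2)}\}$.

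In the losing case $b_i < b_{(1)}$ we have $b_i \le b_{(2)}$, so the target is $\min\{b_i, b_{(2)}\} = b_i$, and it suffices to show the correction term $Q := \frac{S}{m e^{m b_i}} \ln \frac{S}{1 + S_{-i}}$ vanishes. Writing $\frac{S}{1 + S_{-i}} = 1 + \frac{e^{m b_i} - 1}{1 + S_{-i}}$, the argument tends to $1$; using $0 < \ln(1 + x) \le x$ for $x > 0$ together with $S_{-i} \ge e^{m b_{(1)}}$ I would bound $0 < Q \le \frac{1}{m} \cdot \frac{S}{S_{-i}} = \frac{1}{m}\bigl(1 + \tfrac{e^{m b_i}}{S_{-i}}\bigr) \le \frac{1}{m}\bigl(1 + e^{m(b_i - b_{(1)})}\bigr) \to 0$, so $p_i \to b_i$, completing \eqref{ssp:lim:payment}.

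The main obstacle is the payment limit in the winning case, which is an $\infty \cdot 0$ indeterminate form once the vanishing factor $1/m$ meets a logarithm diverging linearly in $m$. The crux is to verify that the logarithm's leading behavior is \emph{exactly} linear in $m$ with slope $b_{(1)} - b_{(2)}$, so that the $m$ cancels cleanly and the sub-leading corrections are genuinely $o(1)$ rather than contributing at order $m$; this is precisely where genericity — a strict gap $b_{(1)} > b_{(2)}$ and positivity of all bids — does the real work, guaranteeing the clean exponential separation of $S$, $S_{-i}$, and $1 + S_{-i}$.
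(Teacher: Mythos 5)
Your proposal is correct and follows essentially the same route as the paper's proof: the same case split on whether $i$ is the top bidder, the same extraction of the leading linear-in-$m$ behavior $m(b_{(1)}-b_{(2)})$ of the logarithm in the winning case, and the same identification of the correction term as $O(1/m)$ in the losing case (where you use the explicit bound $\ln(1+x)\le x$ in place of the paper's chain of asymptotic equivalences, a cosmetic difference that if anything is slightly more rigorous). You also correctly flag that the winning-case estimate $\ln(1+S_{-i}) = m b_{(2)} + o(1)$ relies on $b_{(2)}>0$, i.e.\ on the genericity assumption that bids lie in $(0,1)$.
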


Then we consider another extreme case when $m= 0$. However, the denominator of $p_i(b_i,\b_{-i})$ is zero, so we consider its limit of $m\to 0$. In this case, the mechanism converges to a random-free-allocation mechanism to confirm a transaction uniformly at random and charge no fee, i.e.
\begin{observation}\label{obs:2}
  \begin{align}
      \lim_{m\to 0} a_i(b_i,\b_{-i}) &= \frac{1}{n}, \label{ssp:lim0:alloc}\\
  \lim_{m\to 0} p_i(b_i,\b_{-i}) &= 0. \label{ssp:lim0:payment}
  \end{align}
\end{observation}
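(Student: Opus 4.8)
The plan is to handle the two limits separately. The allocation limit \eqref{ssp:lim0:alloc} is immediate: since $e^{mb_j}\to 1$ as $m\to 0$ for every $j$, the numerator of $a_i$ in \eqref{ssp_begin} tends to $1$ and the denominator $\sum_{j=1}^{n} e^{mb_j}$ tends to $n$, so $a_i\to 1/n$ by continuity. The payment limit \eqref{ssp:lim0:payment} is the substantive part, because the explicit formula \eqref{ssp_payment} is an indeterminate form: as $m\to 0$ the prefactor $\frac{\sum_{j}e^{mb_j}}{m\,e^{mb_i}}$ blows up like $n/m$, while the logarithm $\ln\frac{\sum_{j}e^{mb_j}}{1+\sum_{j\ne i}e^{mb_j}}$ tends to $\ln 1 = 0$. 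The product is thus a $0\cdot\infty$ form whose limit must be extracted carefully, and I will describe two routes.

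My preferred route avoids expanding \eqref{ssp_payment} directly and instead uses the Myerson characterization behind it. Since the payment rule is exactly the one produced by \eqref{eqn:aux:payment}, we have $a_i(b_i,\b_{-i})\,p_i(b_i,\b_{-i}) = \int_0^{b_i} t\,\pd{a_i(t,\b_{-i})}{t}\,dt$. Writing $S_{-i}=\sum_{j\ne i}e^{mb_j}$, which is independent of $t$, a direct differentiation gives $\pd{a_i(t,\b_{-i})}{t} = \frac{m\,e^{mt}S_{-i}}{(e^{mt}+S_{-i})^2}$. This carries an explicit factor of $m$, and since the remaining factor is bounded by $1/4$ (by AM--GM, $(e^{mt}+S_{-i})^2\ge 4\,e^{mt}S_{-i}$), we get $0\le \pd{a_i}{t}\le m/4\to 0$ uniformly for $t\in[0,1]$. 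Hence the integrand $t\,\pd{a_i}{t}$ is uniformly bounded on the fixed interval $[0,b_i]$ and converges pointwise to $0$, so by bounded convergence the integral, and therefore $a_i\,p_i$, tends to $0$. Combining this with $a_i\to 1/n>0$ from the first step yields $p_i\to 0$, which is \eqref{ssp:lim0:payment}.

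As an alternative working directly from \eqref{ssp_payment}, one can Taylor-expand in $m$ using $e^{mb_j}=1+mb_j+\tfrac12 m^2 b_j^2+O(m^3)$; this gives $\ln\frac{\sum_{j}e^{mb_j}}{1+\sum_{j\ne i}e^{mb_j}} = \frac{m b_i}{n}+O(m^2)$ and $\frac{\sum_{j}e^{mb_j}}{m\,e^{mb_i}}=\frac{n}{m}+O(1)$, so their product tends to $b_i$ and $p_i=b_i-b_i\to 0$. The main obstacle in this second route is the bookkeeping forced by the $1/m$ divergence of the prefactor: the logarithm must be expanded to the correct order, and one must verify that both the cross terms $\frac{n}{m}\cdot O(m^2)=O(m)$ and $O(1)\cdot\frac{mb_i}{n}=O(m)$ genuinely vanish rather than contribute to the limit. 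The Myerson route sidesteps this delicate expansion entirely by replacing it with the single uniform bound $\pd{a_i}{t}\le m/4$, and I expect it to be the cleaner argument to present.
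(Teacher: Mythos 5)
Your proposal is correct, and your preferred route is genuinely different from the paper's. The paper proves \eqref{ssp:lim0:payment} by exactly your ``alternative'' route: it substitutes $e^{mb_j}=1+mb_j+O(m^2)$ into \eqref{ssp_payment}, expands the logarithm as $\frac{mb_i}{n}(1+o(1))$ and the prefactor as $\frac{n}{m}(1+o(1))$, and concludes $p_i = b_i - b_i(1+o(1)) = o(1)b_i$ --- i.e.\ it works head-on through the $0\cdot\infty$ form you flag as delicate. Your main argument instead goes through the Myerson representation $a_i p_i = \int_0^{b_i} t\,\pd{a_i(t,\b_{-i})}{t}\,dt$, computes $\pd{a_i(t,\b_{-i})}{t} = \frac{m e^{mt}S_{-i}}{(e^{mt}+S_{-i})^2} \le \frac{m}{4}$ by AM--GM, and deduces $a_i p_i \to 0$ while $a_i \to 1/n > 0$. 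This is valid: the payment rule \eqref{ssp_payment} is indeed the Myerson payment for the softmax allocation (one can verify by integration by parts that $\int_0^{b_i} t\,a_i'(t)\,dt = b_i a_i(b_i) - \frac{1}{m}\ln\frac{e^{mb_i}+S_{-i}}{1+S_{-i}}$, which reproduces \eqref{ssp_payment} on dividing by $a_i(b_i)$), so the representation you invoke holds exactly, not just formally. What each approach buys: your uniform bound $\pd{a_i}{t}\le m/4$ replaces careful order-of-$m$ bookkeeping with a one-line estimate and is more robust (it would survive perturbations of the formula), whereas the paper's direct expansion stays entirely within the closed form \eqref{ssp_payment} and exhibits the cancellation $p_i = b_i - b_i(1+o(1))$ explicitly, which makes the mechanism's behavior near $m=0$ (payments vanish linearly in $m$) more transparent. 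Both yield the claimed limit; there is no gap in either.
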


The proofs of observations above are deferred to Appendix \ref{appendix:proof:obs}. In fact, we can observe that when $m$ increases from $0$ to $+\infty$, the mechanism continuously shifts from random free allocation to all-burn second-price auction, while preserving 1-SCP and U-DSIC properties. When we consider fixed $m \in (0,+\infty)$, we can modify the mechanism via the auxiliary mechanism method to make it still U-BNIC but with positive expected miner revenue.
\fi
\subsection{The Variation Term and Our Proposed Mechanism for Block Size 1}
 \label{sec:alg:one_size}
 
Following the auxiliary mechanism method in Section~\ref{sec:roadmap}, we can construct a mechanism $\tilde{M}=(\mathbf{a},\tilde{\mathbf{p}},\tilde{r})$ via the composition of its auxiliary mechanism $M$ and the variation term $T = (\bmtheta,\tilde{r})$. In this section, we construct the variation term of our proposed mechanism.

\if 0
\begin{align}
    \theta_i(b_i,\b_{-i}) - \theta_i(0,\b_{-i})= \tilde{r}(b_i,\b_{-i})-\tilde{r}(0,\b_{-i}). \label{eqn:condition_1scp}
\end{align}

\begin{proof}
    Since $p_i(0,\b_{-i})=0$ and $M$ is U-DSIC, we have
    \begin{align}
        a_i(b_i,\b_{-i})p_i(b_i,\b_{-i}) = \int_{0}^{b_i} t \pd{a_i(t,\b_{-i})}{t} dt, \label{eqn:theta:condition:1}
    \end{align}
    and $\tilde{M}$ is 1-SCP if and only if
    \begin{align}
        a_i(b_i,\b_{-i})\left(p_i(b_i,\b_{-i}) +\frac{\theta_i(b_i,\b_{-i})}{a_i(b_i,\b_{-i})}\right) - \tilde{r}(b_i,\b_{-i}) = \int_{0}^{b_i} t \pd{a_i(t,\b_{-i})}{t} dt - \tilde{r}(0,\b_{-i}). \label{eqn:theta:condition:2}
    \end{align}
    (\ref{eqn:theta:condition:1})$-$(\ref{eqn:theta:condition:2}) $=$ (\ref{eqn:condition_1scp}).

\end{proof}
\fi

When the distributions of all users' valuations are \emph{i.i.d.}, i.e. $V = V_1 \times V_2 \times \cdots \times V_n$ and $\forall V_i = V_0$ has identical pdf $\rho:[0,1]\to [0,+\infty)$, we denote
\begin{align} \label{eqn:crho}
    c_\rho = \int_0^1 \rho^2(t) dt.
\end{align}

Now, for any scaling parameter $h\in [0,+\infty)$, we construct $T=(\bmtheta,\tilde{r})$ as follows. {The intuition in the construction is elaborated in Appendix~\ref{app:am:intuition}.}
\begin{align}\label{eqn:theta}
    \theta_i(b_i,\b_{-i}) = -\frac{1}{2}hb_i^2\left(\frac{\sum_{j\ne i}b_j^2}{c_\rho(n-1)}-1\right) ,
\end{align}
\begin{align}\label{eqn:r}
    \tilde{r}(\b) = \frac{1}{2} h \left(\sum_{i=1}^{n}b_i^2 - \frac{\sum_{1\le i < j \le n}{b_i^2b_j^2}}{c_\rho(n-1)}\right).
\end{align}

As a sanity check, we note that $\tilde{r}$ is the potential of $\D_\bmtheta$ (as in Appendix~\ref{app:add_persp}). Formally, the following lemma verifies that the above variation term $T$ is admissible. The proof of Lemma~\ref{lem:admis} is deferred to Appendix~\ref{app:proof:admis}.

\begin{lemma}\label{lem:admis}
    The variation term $T=(\bmtheta, \tilde{r})$ defined in Eqs.~(\ref{eqn:theta}-\ref{eqn:r}) is admissible.
\end{lemma}

Now we combine the auxiliary mechanism $M$ defined in Mechanism~\ref{mec:1:aux} and our variation term $T$ to form the mechanism $\tilde{M}=(\a,\tilde{\p},\tilde{r}) = M + T$, which is explicitly presented in Mechanism~\ref{mec:size1}. (We assume $n\ge 2$. If $n=0$ then nobody can be confirmed and there is no payment, and if $n=1$ then we set $(a, \tilde{p}, \tilde{r}) = (1, 0, 0)$.)

\begin{Mechanism}[!t]
\begin{framed}
\begin{align*}
        a_i(b_i,\b_{-i}) &= \frac{e^{mb_i}}{\sum_{j=1}^{n} e^{mb_j}}  \\
        \tilde{p}_i(b_i,\b_{-i}) &= b_i - \frac{\sum_{j=1}^{n}e^{mb_j}}{me^{mb_i}}\left(  \ln\frac{\sum_{j=1}^{n}e^{mb_j}}{1+\sum_{j\neq i}e^{mb_j}}\right. \\
        &\qquad\qquad \left.+ \frac{1}{2}hmb_i^2 \left(\frac{\sum_{j\ne i}b_j^2}{c_\rho(n-1)}-1\right)\right)  \\
        \tilde{r}(\b) &= \frac{1}{2} h \left(\sum_{i=1}^{n}b_i^2 - \frac{\sum_{1\le i < j \le n}{b_i^2b_j^2}}{c_\rho(n-1)}\right)
\end{align*}
\end{framed}
\caption{Transaction Fee Mechanism for block size 1}
\label{mec:size1}
\end{Mechanism}

By Lemma~\ref{lem:admis} and Theorem~\ref{thm:decomposition}, we have that the TFM $\tilde{M}$ is U-BNIC and 1-SCP for all $h\in [0,+\infty)$. We can also compute the expected miner revenue as follows.
\begin{align} \label{eq:block-size-1-miner-revenue}
    \E_{\b\sim V} [\tilde{r}(\b)] = \frac{1}{4} hnc_{\rho}>0.
\end{align}
From Eq.~\eqref{eq:block-size-1-miner-revenue}, we see that the expected miner revenue is always positive and it grows linearly with our scaling parameter $h$.

However, we have to be careful about choosing the value of $h$. Intuitively, the value of $h$ describes the extent of perturbation from the original U-DSIC mechanism, and when the perturbation is too large, the \emph{individual rationality} ($p_i(b_i,\b_{-i}) \le b_i$) and \emph{budget feasibility} properties may not hold. Actually, since the block size is $1$, the payment cannot exceed the valuation of the accepted bid. Then we have $\E_{\b\sim V} [\tilde{r}(\b)] \le 1$. Therefore, we have the following natural upper bound for $h$:
\begin{align}
    h \le O(1/(c_\rho n)). \label{eqn:h:upper:bound}
\end{align}

For the best miner revenue, we want to make $h$ as large as possible while keeping the mechanism feasible. Fortunately, for fixed $c_\rho$, we have an estimation of optimally feasible $h$ that enables a constant approximation ratio of the optimal revenue while preserving UIR and BF constraints.  {Hence, in the setting of block size $1$, we can design a TFM that satisfies desirable incentive properties and has a constant fraction of optimal revenue. The formal result is:}

\begin{theorem}\label{thm:h:value:1}
    For $n\ge 2$, we consider Mechanism~\ref{mec:size1} {with parameter $m=1$. Then for any $h \in[0, \frac{2c_\rho (n-1)^2}{en^3}]$}, the corresponding mechanism $\tilde{M}=(\mathbf{a},\tilde{\mathbf{p}},\tilde{r})$ is U-BNIC, 1-SCP, UIR and BF. 
    
    {
    Furthermore, for any $C\in [0,1)$, the mechanism is $(C,\frac{6C+5}{1-C^2})$-U-SP.
    }
\end{theorem}

The proof of Theorem~\ref{thm:h:value:1} is deferred to Appendix~\ref{app:proof:param:1}. {Note that when we set $h=\frac{2c_\rho (n-1)^2}{en^3}$, the expected miner revenue is $\frac{1}{4}hnc_{\rho} = \frac{c_\rho^2(n-1)^2}{2en^2} = \Theta(c_\rho^2)$}. As the optimal miner revenue is at most $\max\{v_i\}\le 1$, for fixed distribution (fixed $c_\rho$), our mechanism yields a constant-ratio approximation of the optimal miner revenue for $n\to \infty$. {Particularly, for the case of uniform distribution with $c_\rho=\frac{1}{3}$ and $n\to\infty$, our approximation ratio is $\frac{1}{18e}$. Furthermore, the U-SP results show that for large $n$, a user cannot gain more utility unless she injects as many fake transactions as the total amount of honest transactions competing for the block, which is unrealistic for the real-world blockchain ecosystem. This result also matches with a basic concept of blockchains: \emph{the security of a blockchain system is fundamentally based on the assumption that (at least) 50\% of participants are honest.}}

\section{Mechanism for General Block Size $k$}\label{section:size:k}

In most blockchains, a block usually contains multiple transactions. Therefore, it is desirable to extend our Mechanism~\ref{mec:size1} to general block size $k$. Recall that when the block size is $1$, we adopted a simple soft second-price mechanism as the auxiliary. However, it seems trickier to extend this auxiliary to a general block size $k$, as the softmax function does not have a straightforward extension for soft-top-$k$. In Section~\ref{sec:drawing:rule}, we will work on the details of the auxiliary mechanism for general block size $k$. The high-level idea of this step is natural -- we adopt a $k$-step weighted sampling without replacement approach \citep{ben2018weighted} to confirm the $k$ bids in a block, where in each step, we still apply the logit choice rule.

Once we figure out the details of the allocation rule $\a$ for general block size $k$, the rest construction will follow our auxiliary mechanism -- We first straightforwardly compute its dominant auxiliary mechanism $M = (\a, \p, 0)$ where the corresponding dominant association $\p$ is defined by Eq.~\eqref{eqn:aux:payment}. Then, we will still use the variation term $T$ defined previously in Section~\ref{sec:alg:one_size}, and combine the auxiliary and the variation term to derive the final mechanism.  

In Section~\ref{sec:general-block-size-miner-revenue}, we will show that the generalized mechanism enjoys the same incentive compatibility properties as the basic version (for block size $1$). We will also analyze the expected miner revenue of the generalized mechanism.




\subsection{Allocation Rule: Weighted Sampling without Replacement} \label{sec:drawing:rule}

In this section, we assume the bidding vector $\b$ and block size $k$ are fixed. For bidder $i\in B=[n]$, we set her weight $w_i=e^{mb_i}$. Now we compute $a_i$, the probability user $i$ has her transaction confirmed.

\def\J{\mathcal{J}}

Denote $\delta_t(i)$ as the probability that user $i$ in the $t$-th round and $W=\sum_{i=1}^n w_i$, then $\delta_1(i)=\frac{w_i}{W}$. For fixed $t\ge 2$, we consider $j=(j_1,\cdots,j_t)$ as the \emph{sampling vector} describing the outcome of the weighted sampling without replacement in the first $t$ rounds, in which $j_s$ is the user confirmed in the $s$-th round, and denote $\J$ as the distribution of $j$. Therefore, we get
$    \delta_t(i;b_i,\b_{-i}) = \Pr_{j\sim \J} [j_t=i]$.

We use the notation $\delta_t(i)= \delta_t(i;b_i,\b_{-i})$ when $(b_i,\b_{-i})$ is fixed, and denote $J_t(i)=\{j \textup{ is a sampling vector}: j_{t}=i\}$, then $\forall j\in J_t(i)$, denote $\delta_t(i;j) = \Pr_{u\sim \J}[u=j]$, then we have
\begin{align}
    \delta_t(i) = \sum_{j\in J_t(i)} \delta_t(i;j).\label{eqn:delta_decomposition}
\end{align}

Note that $\delta_t(i;j)$ denotes the probability that the sampling outcome is $(j_1,j_2,\cdots,j_{t-1},i)$, thus the probability is
\begin{small}
\begin{align}
    \delta_t(i;j) &= \frac{w_{j_1}}{W}\cdot \frac{w_{j_2}}{W-w_{j_1}} \cdot \cdots  \cdot \frac{w_i}{W-w_{j_1}-\cdots-w_{j_{t-1}}}. \label{eqn:delta_component}
\end{align}
\end{small}
Since we know that
\begin{align}
    a_i = \sum_{t=1}^{k} \delta_t(i), \label{eqn:delta_sum}
\end{align}
the allocation rule $\a$ can be computed from Eqs.~(\ref{eqn:delta_decomposition}-\ref{eqn:delta_sum}). According to Eq.~\eqref{eqn:aux:payment}, we compute the corresponding dominant association payment rule $\tilde{\mathbf{p}}$. We use the same variation term $T$ as in Section~\ref{sec:alg:one_size}. Thus, the final TFM can be described as in Mechanism~\ref{mec:sizek} (and we note that Mechanism~\ref{mec:size1} exactly the same as Mechanism~\ref{mec:sizek} when $k=1$). 
{Also} note that every entry of our constructed allocation rule $\a$ is always positive, and therefore the mechanism is well defined.

\begin{Mechanism}
\begin{framed}
\begin{small}
\begin{align*}
a_i(b_i,\b_{-i}) &= \sum_{t=1}^{k}  \delta_t(i;b_i,\b_{-i}) \\
\tilde{p}_i(b_i,\b_{-i}) &= \frac{1}{a_i(b_i,\b_{-i})} \Big[a_i(b_i, \b_{-i}) b_i - \int_{0}^{b_i} a_i(t, \b_{-i}) dt\\
&\qquad \qquad\qquad  -\frac{1}{2}hb_i^2\left(\frac{\sum_{j\ne i}b_j^2}{c_\rho(n-1)}-1\right) \Big]  \\
\tilde{r}(\b) &= \frac{1}{2} h \left(\sum_{i=1}^{n}b_i^2 - \frac{\sum_{1\le i < j \le n}{b_i^2b_j^2}}{c_\rho(n-1)}\right)
\end{align*}
\end{small}
\end{framed}

\caption{Transaction Fee Mechanism for general block size $k$}
\label{mec:sizek}
\end{Mechanism}

\subsection{Estimation of $h$: How Much Revenue Can Miner Get?} \label{sec:general-block-size-miner-revenue}


For the case of general block size $k$, we also have a similar result on the value of $h$, which additionally requires the number of users $n$ to be at least $\left(\frac{e}{e-1} + \Theta(1)\right) k \approx 1.582 k$. {Hence, for general block size $k$, as long as $n\ge 1.582k$, we can still design a TFM that satisfies desirable incentive properties with a constant fraction of optimal revenue.} The formal result is shown below:

\begin{theorem}[Main Theorem] \label{thm:h:value:k}
    For any block size $k$ and any parameters $m, h \in [0, +\infty)$, the mechanism $\tilde{M}=(\mathbf{a},\tilde{\mathbf{p}},\tilde{r})$ defined in Mechanism~\ref{mec:sizek} is U-BNIC and 1-SCP. For any fixed $\lambda_0 > \frac{e}{e-1}\approx 1.582$, and any 
    
    $n\ge \max\{\lambda_0 k,30\},$
    if we set
    $m=\min\left\{\frac{1}{2}\ln \frac{1}{\ln \frac{\lambda_0}{\lambda_0-1}},1 \right\}$, then for every
    $$
    h  \in \left[0,g(\lambda_0)\frac{2kc_\rho(n-1)}{en^2}\right],
    $$
    where
    $$g(\lambda_0)=\frac{m\cdot\max\left\{1-\sqrt{\ln\frac{\lambda_0}{\lambda_0-1}}, 1-e\ln\frac{\lambda_0}{\lambda_0-1}\right\} }{e^{m-1+\frac{e}{0.9\lambda_0-1}}},$$
    then our Mechanism~\ref{mec:sizek} $\tilde{M}$ is UIR and BF.

    Furthermore, for any $C\in [0,1)$, the mechanism is $\big(C,O(\frac{1}{1-C})\big)$-U-SP.
    \color{black}
\end{theorem}

\begin{figure}[tb]
    \centering
    \includegraphics[width = 0.9\textwidth]{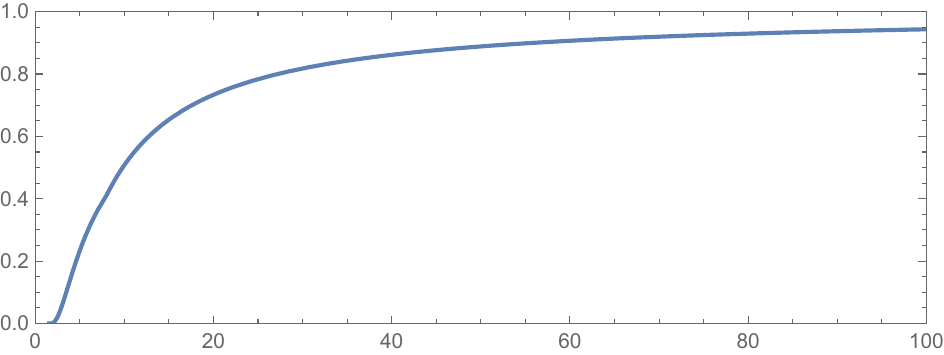}
    \caption{The plot of $g(\lambda_0)$ in Theorem~\ref{thm:h:value:k}.}
    \label{fig:thm4_plot2}
\end{figure}

The proof of Theorem~\ref{thm:h:value:k} is technically complicated and deferred to Appendix~\ref{app:proof:param:k}. {We plot $g(\lambda_0)$ in Figure~\ref{fig:thm4_plot2} and it holds that $\lim_{\lambda_0\to+\infty}g(\lambda_0)=1$. Combined with $\E[\tilde{r(\b)}]=\frac{1}{4}hnc_\rho$, the expected miner revenue is $\Theta(g(\lambda_0)c_\rho^2 k)$.} Because the optimal revenue for block size $k$ is at most $k \max\{v_i\}\le k$, for fixed $c_\rho>0$ and $\lambda_0 > \frac{e}{e-1}$, our mechanism yields a constant-factor approximation of the optimal revenue as long as {$n>\max\{\lambda_0 k,30\}$. While Mechanism~\ref{mec:size1} is essentially a special case of Mechanism~\ref{mec:sizek} with $k=1$, we can also notice that when $k=1$ and $n\to \infty$, the range of $h$ in Theorem~\ref{thm:h:value:k} is $[0,(1-o(1))\frac{2c_\rho}{en}]$,  matching with the result of Theorem~\ref{thm:h:value:1}.}

\section{Additional Properties of Our Mechanism}\label{sec:almost:mic}

In this section, we discuss the incentive and revenue properties for the miner in our proposed TFM. In Section~\ref{subsec:almost:mic}, we show that our proposed TFM is almost miner incentive compatible (MIC) and it is impossible to achieve the strict MIC property for any TFM. In Section~\ref{subsec:revenue:stability} we show that although the miner may get negative revenue in our TFM, it is only of a negligible probability and the miner will actually get a stable revenue close to the expectation. Therefore, our TFM can satisfy the miner's expectation on stable mining rewards in practice.

\subsection{Almost Miner Incentive Compatibility}\label{subsec:almost:mic}

In the previous parts of our paper, we mainly focused on the prevention of an individual user's deviations and the miner-and-single-user collusion (i.e., the deviation set \{\UUB, \UFT, \MUUB{1}\} in Table \ref{table:dishonst}). We now show that our TFM is also able to greatly reduce the additional miner utility derived from injecting and deleting (a limited number of) bids, and therefore achieving an \emph{almost MIC} property.

In particular, let us first fix the block size $k$ and the number of users $n$.  Since the injection-and-deletion deviation may change the $n$ parameter presented to the mechanism, we have to consider a variable-bid-size TFM $\tilde{\mathcal{M}}$ (as in Definition~\ref{def:TFM}'). To construct $\tilde{\mathcal{M}}$, we follow the method described above Eq.~\eqref{eq:regular-to-variable-size-TFM} -- we first choose a parameter $m > 0$; for each integer $\eta \geq 0$, we also choose a parameter $h_\eta > 0$ so that $\tilde{M}_\eta$ is fully determined following the description of Mechanism~\ref{mec:sizek}. In order to establish our almost MIC property, we need to choose $\{h_\eta\}$ in a way so that there exists $L$ satisfying
\begin{align}\label{eq:assumption-almost-MIC-L}
L = \frac{h_\eta \eta}{c_\rho k} \qquad \forall \eta.
\end{align}
Note that it is possible to appropriately set $\{h_\eta\}$ to meet the above condition while each $\tilde{M}_\eta$  also satisfies the conditions in Theorem~\ref{thm:h:value:k} (for $\eta > \lambda_0 k$, where $\lambda_0$ is defined in the theorem statement).  We finally let  $\mathcal{M}  = (\a, \tilde{\p}, \tilde{r}) =\cup_\eta \{\tilde{M}_\eta\}$ as formally defined in Eq.~\eqref{eq:regular-to-variable-size-TFM}.

We also need to characterize the degree of injection-and-deletion deviation by the miner. For any positive integer $\Delta$, we denote $B_\Delta(\b)$ as the set of all bidding vectors generated via injecting and deleting a total of at most $\Delta$ transactions to/from $\b$. Given the original bidding vector $\b$, for any $\b'\in B_{\Delta}(\b)$ that could result from the miner's injection-and-deletion deviation, we note that change to the miner's utility consists of the following two parts:
\begin{enumerate}
\item the change of the miner's reward: $\tilde{r}(\b') - \tilde{r}(\b)$;
\item the cost for the miner to inject fake bids: $
\sum_{b'_j \in \b' \backslash \b} \tilde{u}_{j}(b'_{j}, \b'_{-j}; 0) = - \sum_{b'_j \in \b' \backslash \b}   a_i(b_j',\b_{-j}')\cdot \tilde{p}_j(b_j', \b_{-j}')$,
where we extend the definition in Eq.~\eqref{eq:def-u} to our TFM sequence $\tilde{\mathcal{M}} = (\a,\tilde{\p},\tilde{r})$.
\end{enumerate}  
Note that the miner does not need to directly pay any cost for deleting a bid, but her reward $\tilde{r}(\cdot)$ may be changed due to this deletion.

We first define the (strict) Miner-Incentive-Compatibility (MIC) notion as follows:
\begin{definition}[Miner-Incentive-Compatibility (MIC)]\label{def:mic}
    Suppose there are $n$ real users and the block size is $k$. A variable-bid-size TFM $\tilde{\mathcal{M}} = \cup_{\eta} \{\tilde{M}_\eta\} = (\a,\tilde{\p},\tilde{r})$ satisfies the MIC property if and only if for any bidding vector $\mathbf{b}$ and $\Delta \ge 1$, we have
        $\sup_{\b' \in B_{\Delta}(\b)}\left(\tilde{r}(\b') - \tilde{r}(\b) + \sum_{b'_j \in \b' \backslash \b} \tilde{u}_{j}(b'_{j}, \b'_{-j}; 0)\right) \le 0$.  
\end{definition}


Relaxing from the strict MIC notion, are now able to introduce the almost MIC property for our TFM $\tilde{\mathcal{M}}$. We show that for if $\Delta = o(n)$ (i.e., the number of injected and deleted bids is a tiny fraction of the total number of users), then, as $n \to \infty$, with overwhelming probability, the additional utility for the miner is also $o(1)$.  {Particularly, by injecting or deleting $\Delta$ transactions when the total number of honest users is $n$, with high probability among the distribution of bids, the miner cannot gain an increase of revenue above $O(\frac{k\Delta^{4/3}}{n^{4/3}}).$ Hence, for example, if the miner can inject a constant number of fake transactions (without being caught, as discussed in Appendix~\ref{sec:crypto}), the additional revenue she may get is $O(\frac{k}{n^{4/3}})$ which is negligible when $n$ is large.} Formally, we have the following theorem.

\begin{theorem}[Our TFM is Almost MIC]
\label{thm:almost:mic}
Suppose there are $n$ real users and the block size is $k$ so that $n > \lambda_0 k + \Delta$. 
{Let} the variable-bid-size TFM $\tilde{\mathcal{M}} = \cup_{\eta} \{\tilde{M}_\eta\} = (\a,\tilde{\p},\tilde{r})$ be defined above with the $L$ parameter satisfying Eq.~\eqref{eq:assumption-almost-MIC-L}.  There exist universal constants $C_{M0}, C_{M1}, C_{M2} > 0$ such that for any $\epsilon \in (0,\frac{1}{2})$, if 
$n \ge \max \left\{C_{M1} \Delta, C_{M2} \log^3 \frac{1}{\epsilon}\right\}$,
we have that
\begin{align}
\begin{small}
\begin{aligned}
&\Pr_{\b \sim V}\left[\sup_{\b' \in B_{\Delta}(\b)}\left(\vphantom{ \sum_{b'_j \in \b' \backslash \b}}\tilde{r}(\b') - \tilde{r}(\b) \right.\right.\\
&\qquad\left.\left.+ \sum_{b'_j \in \b' \backslash \b} \tilde{u}_{j}(b'_{j}, \b'_{-j}; 0)\right) > C_{M0} L \cdot  \frac{k\Delta^{4/3}}{ n^{4/3}}\right] < \epsilon.
\end{aligned}
\end{small}
\end{align}

\end{theorem}

In other words, Theorem~\ref{thm:almost:mic} states that given a moderately large $n$, with probability at least $(1 - \epsilon)$ (over the realization of the $n$ real user valuations), the additional miner utility that could be gained from injecting and deleting at most $\Delta$ bids is at most $C_{M0} L \cdot  \frac{k\Delta^{4/3}}{ n^{4/3}}$, which is $o(1)$ for $\Delta = o(n)$ and fixed $k$, $L$. This result is quite non-trivial as one would naturally expect the relative revenue advantage should be $\Theta(\Delta/n)$ for usual {(incentive compatible)} mechanisms. {For example, in the $k$-item second-price auction with valuation distribution $\mathrm{Unif}[0,1]$, the expected $(k+1)$-th price is $\frac{n-k}{n+1}$, but if the miner injects a fake bid to be infinitesimally lower than the $k$-th bid, the expected price increases to $\frac{n-k+1}{n+1}$, gaining an $\Theta(1/n)$ relative advantage via injecting one bid.} It is also more useful since it gives additional incentive restrictions to the miner when we narrow the range of ``acceptable deviations'' for the miner. Here, the notion of the acceptable deviation is the range of small $\Delta$ (compared to $n$), as a large amount of injected or missing transactions (greater than the acceptable threshold) can be detected by the blockchain system via cryptographic schemes and the miner would be penalized for injection/deletion deviation. Please refer to Section~\ref{sec:crypto} for more details. The proof of Theorem~\ref{thm:almost:mic} is deferred to Appendix~\ref{app:proof:almost:mic}.


On the other hand, we can show that achieving strict MIC together with the main strategy-proof properties studied in this paper (U-BNIC and 1-SCP) is impossible if we additionally assume the natural NFL condition defined in Section~\ref{sec:irbf} (\cite{cryptoeprint:2022/1294} have independently proven a similar result). In particular, we prove that any TFM satisfying the above-mentioned properties has non-positive expected miner revenue, even if we only allow the miner to inject one zero-bidding fake transaction in the MIC property. Formally, we have the following theorem. 

\begin{theorem} [Impossibility of MIC] \label{thm:mic:impossibility}
Suppose there are $n$ real users and the block size is $k$. Consider any variable-bid-size TFM $\tilde{\mathcal{M}}=(\a,\tilde{\p},\tilde{r})$. Assume that for any $\eta \in \{1, 2, \dots, n+1\}$,  the natural restriction (formally defined in Section~\ref{sec:basic-model}) of $\mathcal{M}$ to a regular TFM for $\eta$ bids is U-BNIC and $1$-SCP, and NFL. If
$\tilde{r}(\b,0) - \tilde{r}(\b) \le 0$ holds for all $\b \in [0, 1]^0 \cup [0, 1]^1 \cup \cdots \cup [0, 1]^n$,
then we have that 
$\mathbb{E}_{\b \sim V} [\tilde{r}(\b)] \le 0$.
\end{theorem}
    
The proof of Theorem~\ref{thm:mic:impossibility} is deferred to Appendix~\ref{app:proof:no:mic}. Nevertheless, if we do not allow the miner to inject fake transactions, but allow her to delete existing transactions, whether there exists a TFM that is U-BNIC, 1-SCP and \{\MTD\}-proof remains open.

\subsection{Almost Miner Individual Rationality and Stability of Miner Revenue}\label{subsec:revenue:stability}

In previous sections, we have discussed the \emph{expected} miner revenue, which is in general a constant-fraction approximation of optimum, which naturally implies interim Miner Individual Rationality (MIR). In this section, we will be concerned about the guarantee on the \emph{worst-case} miner revenue. We define the (ex-post) MIR as a worst-case specification for the miner, which requires that the miner revenue is always non-negative no matter how the users bid. Formally,

\begin{definition}[(Ex-Post) Miner Individual Rationality]
Suppose there are $n$ users and the block size is $k$. A TFM $\tilde{M} = (\a,\tilde{\p},\tilde{r})$ satisfies the Miner Individual Rationality (MIR) property if and only if 
$    \tilde{r}(\b) \ge 0$ holds for all $\b \in [0, 1]^n$.
\end{definition}

The MIR condition requires that the miner revenue is non-negative for \emph{every} realization of the bidding vector $\b$. Unfortunately, our Mechanism~\ref{mec:sizek} do not satisfy such a strong condition. Recall that, in  Mechanism~\ref{mec:sizek}, we have $\tilde{r}(\b) = \frac{1}{2} h \left(\sum_{i=1}^{n}b_i^2 - \frac{\sum_{1\le i < j \le n}{b_i^2b_j^2}}{c_\rho(n-1)}\right)$ where $h > 0$.  For the particular bidding vector $\b = (1, 1, \dots, 1) \in [0, 1]^n$, we have that
$\tilde{r}(\b) = \frac{h}{2}  \cdot \left(n - \frac{{n(n-1)/2}}{c_\rho(n-1)}\right) =  \frac{hn}{2} \cdot\left(1-\frac{1}{2c_\rho}\right)$.

We now observe that, for the user valuation distribution $V_0$ with $c_\rho \in (0, 1/2)$ (for example, the uniform distribution over $[0, 1]$ leads to $c_\rho = 1/3$), our Mechanism~\ref{mec:sizek} is not MIR. In fact, we have the following sufficient and necessary condition of when our mechanism is MIR, the proof of which is deferred to Appendix~\ref{app:proof:mir:condition}.
\begin{theorem} \label{thm:mir:condition}
Our Mechanism~\ref{mec:sizek} is MIR if and only if  $c_\rho \ge \frac{1}{2}$.
\end{theorem}

Nevertheless, even when $c_\rho < 1/2$, so long as it is not too small, we are able to show that the miner gets a non-negative revenue with overwhelming probability, and the probability that the miner is ``not lucky'' to receive a negative revenue diminishes exponentially as the growth of $n$. Formally, we have the following theorem.
\begin{theorem}[Concentration of Miner Revenue]\label{thm:rev:conc}
Fix $n$ to be the number of users. For any block size $k$, let the TFM $\tilde{M} = (\a, \tilde{\p}, \tilde{r})$ be defined according to Mechanism~\ref{mec:sizek}. Assume that each user's bid follows the \emph{i.i.d.}\ distribution $V_0$ and let $c_\rho$ be correspondingly defined according to Eq.~\eqref{eqn:crho}. For any $\lambda > 0$, we have that
$\Pr\left[ \frac{\tilde{r}(\b)}{\E [\tilde{r}(\b)]} \le 1-\frac{\lambda}{c_\rho^2 n}\right] \le 2\exp(- \lambda)$.
\end{theorem}

The proof of Theorem~\ref{thm:rev:conc} is deferred to Appendix~\ref{app:thm:rev:conc}. Let $\lambda = c_\rho^2 n$, we immediate get the \emph{almost-MIR} property of our mechanism:

\begin{corollary}[Almost Ex-Post Miner Individual Rationality]\label{cor:alm:mir}
Following the setup in Theorem~\ref{thm:rev:conc}, we have that
$\Pr\left[\tilde{r}(\b) < 0\right] \le 2\exp(-c_\rho^2 n)$.
\end{corollary}
\begin{figure}[htb]
    \centering
    \includegraphics[width = 0.8\textwidth]{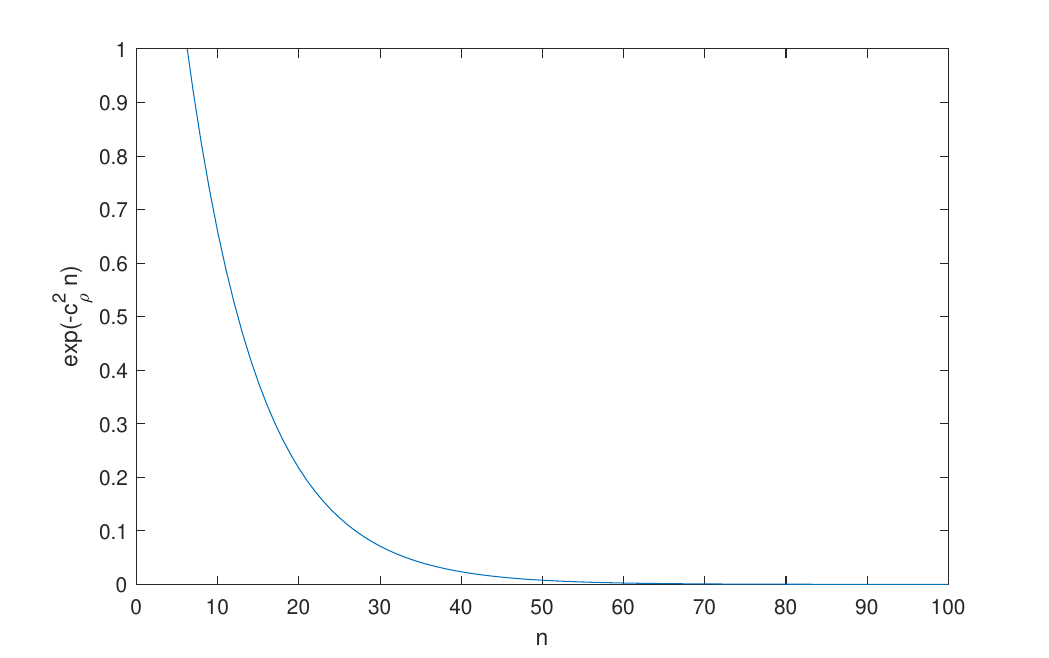}
    \caption{The diminishing probability of MIR being violated for the uniform distribution $b_i\sim \mathrm{Unif}[0,1]$.}
    \label{fig:mir}
\end{figure}

{From Corollary~\ref{cor:alm:mir}, we can see that the probability of $\tilde{r}(\b)<0$ diminishes exponentially as $n$ increases. As a demonstration, we show the plot of $\exp(-c_\rho^2 n)$ for the uniform distribution $b_i\sim \mathrm{Unif}[0,1]$ in which $c_\rho = \frac{1}{3}$. Considering the practical scenario that the Bitcoin and Ethereum blockchains typically have thousands of transactions in each block, in practice $\tilde{r}(\b)\ge 0$ would indeed hold with overwhelming probability.}

\textbf{Consideration of Mining Costs.}
In Section~\ref{subsec:strategy:proof} we assume that the miner's utility is equal to the miner revenue and ignore the mining costs and block rewards. In actual cases, we denote $r_0$ as the block reward minus mining cost, and as long as $r_0\ge 0$, the Corollary~\ref{cor:alm:mir} is still valid. In case that $r_0<0$, from Theorem~\ref{thm:rev:conc} we see that as long as $\E[\tilde{r}(\b)] \ge (1+\Theta(1))|r_0|$, the probability that the miner gets a negative utility still diminishes exponentially with $n$.

\color{black}

\section{Discussion}

In this paper, we model each transaction as constant-sized. However, in modern blockchain systems, transactions -- especially with smart contracts in modern applications, have variant sizes. The knapsack auction problem \citep{aggarwal2006knapsack} for the setting of blockchains is still open to future study.

{Our mechanism mainly considers the valuation distributions of bounded supports. While our methodology may extend to distributions of unbounded supports as long as $c_\rho = \E[b_i^2]$ is finite, the bounded support assumption is necessary for our estimation of $h$ and the revenue approximation guarantees. Whether it is possible to extend our approach to more general valuation distributions is also open to future study.}

While we assume symmetry of the joint distribution of users' valuations, the actual distribution can still be correlated, which is not discussed in our work. Furthermore, in the real-world scenario, as blockchain users may value the blockchain space depending on the expectation of the market, they may even have interdependent valuations, as modeled by the work by \cite{eden2022private}. Consideration of correlated or interdependent valuations in the design of blockchain TFMs can be a challenging but interesting future direction.

\theendnotes

\ACKNOWLEDGMENT{Zishuo Zhao would like to appreciate Shuran Zheng for delightful discussions that inspire his interest in a wide range of relevant areas. Xi Chen would like to thank the support from NSF via the Grant IIS-1845444. The authors would like to thank the area editor, associate editor and referees for useful comments that help improve this paper.}

\bibliographystyle{apalike}
\bibliography{ref}

\subsection*{Biography}

Xi Chen is the Andre Meyer Professor in the Department of Technology, Operations, and Statistics at
the Stern School of Business, New York University. His research spans the intersection of data science
and operations, with particular focus on statistical machine learning, stochastic optimization, data-driven
operations management, and blockchain technology.

David Simchi-Levi is a professor of engineering systems at Massachusetts Institute of Technology, where he leads the MIT Data Science Laboratory. He is an INFORMS Fellow, MSOM Fellow, and a member of the National Academy of Engineering. Algorithms developed by the Data Science Laboratory have been implemented by a large number of companies in a variety of industries such as airline, insurance, manufacturing, and retail.

Zishuo Zhao is a Ph.D. candidate in the Industrial \& Enterprise Systems Engineering Department at
University of Illinois Urbana-Champaign. He received his B.Eng. degree from the Yao Class of Institute for Interdisciplinary Information Sciences, Tsinghua University. His research interest lies in the intersection of mechanism design, blockchain technologies, and trustworthy AI, with an emphasis on the design and theoretical boundaries of incentive-compatible decentralized platforms in the perspective of social responsibilities, e.g., social fairness, economical efficiency, and environmental sustainability.

Yuan Zhou received his B.Eng. degree from Tsinghua University and his Ph.D. in Computer Science from Carnegie Mellon University. He has held academic positions as an Assistant Professor at the University of Illinois at Urbana–Champaign and Indiana University Bloomington, as well as serving as an Instructor in Applied Mathematics at the Massachusetts Institute of Technology. Currently, he is an Associate Professor at the Yau Mathematical Sciences Center and the Department of Mathematical Sciences, Tsinghua University. His research interests lie in stochastic and combinatorial optimization, with applications in online learning \& decision-making theory, and operations research \& management.

\ECSwitch





\begin{center}
    \Large \textbf{Appendix}
\end{center}

\section{Cryptographic Protocols for On-Chain Implementation} \label{sec:crypto}

Ideally, to design a credible blockchain TFM, we seek to discourage all kinds of dishonest behavior by either \emph{systematically} preventing them from being conducted, or \emph{economically} discouraging them by making them non-profitable.

Fortunately, the transparency property of a blockchain \citep{bertino2019data} and its implementation of many cryptographic protocols \citep{luntovskyy2018cryptographic} have already helped prevent several types of dishonest behaviors. For example, since the blockchain is public, it is not possible for the miner to behave in a Byzantine manner via commuting different bidding vectors to different users (see the discussion in \citep{credibility}), and the slashing rule in the Ethereum blockchain also discourage the miner from conducting certain classes of dishonest behavior via monetary penalties \citep{cassez2022formal}.

Also, \cite{credibility} propose to adopt a secure commitment scheme, which uses cryptographic protocols to guarantee that a bid cannot be modified after proposal. This scheme has the following advantages:

\begin{enumerate}
    \item It restricts the strategy space of the miner to merely adding fake transactions and concealing transactions, ruling out strategies for the miner to collude with users and change existing bids.
    \item It implements a sealed-bidding auction format that not only makes the Bayesian game modeling valid but also guarantees fairness among users' information sets, restricting users' strategy space and preventing the MEV issue in which the miners strategically manipulate transaction orders to increase their utility.
\end{enumerate}

\begin{remark}
\label{remark:expost}
while we only need to prevent individual user deviations in the interim setting, for $c$-SCP and MIC properties we want a stronger ex-post version.
\end{remark}

Particularly, we can implement the \emph{commitment scheme} in the way as follows:
\begin{enumerate}
    \item Users submit the (salted) \emph{hash values} of their transactions.
    \item The miner packs and broadcasts all the hash values of the transactions that compete for the block, following by a hash value of the all packed hash values. 
    \item The users reveal their transactions and the miner uploads them. If the uploaded transactions deviates from the hash values too much ($\Delta\ge \epsilon_3 n$ for a pre-set $\epsilon_3\in (0, 1)$, with $\Delta$ defined in Section~\ref{subsec:almost:mic}), the miner is penalized.
    \item The system processes the TFM.
\end{enumerate}

For the miner-only deviation, the miner may behave dishonestly in Steps 1-3, and the number of deviations can be restricted in the way as follows:

\begin{enumerate}
    \item The miner may submit fake transactions in Step 1, without seeing the honest transactions (interim \MFT). The system can restrict the number of transactions proposed by an identity in any block, and require any identity to have a deposit before proposing any transaction, so that the miner cannot create a large number of identities to submit too many fake transactions. We assume that the miner would not afford to inject more than $\epsilon_1 n$ transactions.
    \item The miner may ignore some hashes in Step 2, without seeing their bids (interim \MTD). In this way, the system effectively runs with a smaller $n$. But if we set the parameter $h$ in the way described in Section~\ref{subsec:almost:mic}, reducing $n$ cannot benefit the miner's revenue. Besides, the users who have their hashes ignored can also report this behavior and get the miner penalized. We assume that the miner will be caught if she ignores more than $\epsilon_2 n$ hashes.
    \item The miner may insert or ignore transactions after she sees the bids in Step 3 (ex-post \MFT~and~\MTD), but this type of behavior will be detected. If the number of deviations goes beyond an acceptable level, the miner will be penalized. On the other hand, an acceptable level $\epsilon_3 > 0$ is necessary because a missing transaction might also be simply due to the unstable connection from the user.
\end{enumerate}

Hence, our protocol can restrict the miner individual deviation into a low level compared to $n$, and from the argument in Section~\ref{subsec:almost:mic}, the relative advantage in miner revenue from \{\MFT, \MTD\} is bounded below $O\left(\left(\frac{\epsilon_1+\epsilon_3}{1-\epsilon_2}\right)^{4/3}\right)$. However, the miner-user collusion cannot be effectively prevented in this way, as they may conduct the collusion off-chain before Step 1.

Therefore, we can remark that:

\begin{remark}
    \label{remark:no:mic}
    Existing cryptographic protocols can effectively prevent miner individual deviations, but can only prevent part of miner-user collusions.
\end{remark}

On the other hand, one may feel that the individual user's deviation is a ``least destructive'' honest behavior, because it happens in users' minds and does not seemingly disrupt the blockchain system. Hence, it also cannot be detected or prevented on the system level at all. However, we still argue that a desirable TFM should satisfy \emph{truthfulness}, i.e., no individual user's deviation should be profitable. One key reason to design truthful mechanisms is the Revelation Principle \citep{myerson1981optimal, myerson1979incentive}: informally, for any non-truthful mechanism, we can construct an ``equivalent'' direct truthful mechanism that incorporates agents' optimal strategies into the mechanism itself, so that agents would maximize their utilities by reporting their true types (bidding their valuations). It renders untruthfulness unable to gain more advantage revenue.\footnote{As long as there exists a mechanism whose outcome can achieve certain desired properties, we can indeed construct the equivalent truthful mechanism that both prevents agents from strategic behavior, and simplify the analysis as we can assume rational agents who seek to maximize their individual utilities will indeed follow the mechanism as we expect.}  Additionally, by the argument of the Revelation Principle, we also only need to consider single-round mechanisms.
Hence, we remark that:

\begin{remark}\label{remark:revelation}
    The optimal revenue for any single-round truthful TFM  is optimal even considering the class of non-truthful and multi-round mechanisms.
\end{remark}

Furthermore, due to the anonymity of the blockchains \citep{khalilov2018survey}, it is difficult for users to collude with each other, as argued by \cite{shi}. Thus, user-user collusion is not a critical issue in the design of blockchain transaction fee mechanisms. Therefore, the remaining challenge to resolve is the prevention of user individual deviation and miner-user collusion, but as we have discussed, such dishonest behavior cannot be effectively prevented at the systematic level, so we have to discourage them in an economic way. In conclusion, we can remark that:

\begin{remark} \label{remark:objective}
    To design a desirable blockchain transaction fee mechanism, the most critical challenge is to \textbf{discourage \underline{individual user's deviation} and \underline{miner-user collusion} via economic methods}.
\end{remark}

\section{Impossibility Result on Deterministic TFM} \label{sec:imp}

In this section, we propose an impossibility result that under certain conditions, any deterministic TFM which is U-BNIC and 1-SCP cannot have positive miner revenue. Here we additionally introduce several notions. Although this impossibility does not fully rule out deterministic mechanisms, it does motivate us to introduce randomness into our main mechanism.

\textbf{Deterministic.} When bids are distinct, the outcome of the auction is deterministic, i.e., $a_i \in \{0,1\}$.

\textbf{Symmetric.} When we swap the bids of two users, their allocations and payments are exactly swapped.

\textbf{Continuous.} $\p$ and $r$ are continuous functions of $\b$, and $V$ has bounded, strictly positive PDF on a simply connected support $\textup{dom} (V)$.

\textbf{Strongly Monotone.} If we raise the bid of bidder $i$ while leave other bids unchanged, $a_i, p_i$ do not decrease and $a_j(\forall j\ne i)$ does not increase.

\begin{theorem} \label{thm:impossible}
  For all deterministic, symmetric, continuous, strongly monotone, user-individually-rational and budget-feasible TFMs, if $\mathbf{0} \in V$, then U-BNIC and 1-SCP implies non-positive miner revenue.
\end{theorem}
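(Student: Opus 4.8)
The plan is to run the auxiliary-mechanism machinery of Section~\ref{sec:roadmap} in reverse: because the allocation is deterministic, the over-payment functions $\theta_i$ become so rigid that U-BNIC forces the revenue to be constant. First I would attach to $\tilde{M}=(\a,\tilde{\p},\tilde{r})$ its auxiliary U-DSIC mechanism $M=(\a,\p,0)$ through the Myerson formula (\ref{eqn:aux:payment}), and set $\theta_i(b_i,\b_{-i}) = a_i(b_i,\b_{-i})(\tilde{p}_i(b_i,\b_{-i})-p_i(b_i,\b_{-i}))$. Two identities then drive everything. From Lemma~\ref{lem:1scp} together with the boundary condition $\tilde{p}_i(0,\b_{-i})=0$, 1-SCP yields the pointwise relation (\ref{eqn:theta:r}), $\theta_i(b_i,\b_{-i}) = \tilde{r}(b_i,\b_{-i})-\tilde{r}(0,\b_{-i})$; and U-BNIC yields $\mathbb{E}_{\b_{-i}\sim V_{-i}}[\theta_i(b_i,\b_{-i})]=0$ for every $b_i$, equivalently that $\mathbb{E}_{\b_{-i}}[\tilde{r}(b_i,\b_{-i})]$ is independent of $b_i$.

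The crucial use of determinism is that a \emph{losing} bidder contributes nothing. If $a_i(b_i,\b_{-i})=0$ then, by monotonicity, $a_i(\cdot,\b_{-i})\equiv 0$ on $[0,b_i]$, so $\theta_i(b_i,\b_{-i})=0$ and hence $\tilde{r}(b_i,\b_{-i})=\tilde{r}(0,\b_{-i})$: lowering any losing bid to $0$ leaves the revenue unchanged. I would then iterate this over all losers. Since the winner set is the top bids, zeroing out a loser keeps that bidder a loser and keeps the winner set intact, so coordinate by coordinate every losing bid can be set to $0$ without changing $\tilde{r}$. For block size one this reduces any generic $\b$ to the configuration in which only the highest bidder bids, giving $\tilde{r}(\b)=\phi(b_{(1)})$ with $\phi(x):=\tilde{r}(x,0,\dots,0)$ by symmetry; for general block size the same reduction shows $\tilde{r}$ depends only on the winning bids.

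Next I would feed $\tilde{r}(\b)=\phi(b_{(1)})$ into the U-BNIC identity. Writing $M_{-i}=\max_{j\ne i}b_j$ with CDF $G$, constancy of $\mathbb{E}_{\b_{-i}}[\phi(\max(b_i,M_{-i}))] = \phi(b_i)G(b_i) + \int_{b_i}^{1}\phi(m)\,dG(m)$ in $b_i$ forces, upon differentiation, $\phi'(b_i)\,G(b_i)=0$. Since $\mathbf{0}\in\mathrm{dom}(V)$ and $V$ has strictly positive density, $G(b_i)>0$ for $b_i>0$, so $\phi'\equiv 0$ and $\phi$ is constant; by continuity $\phi\equiv\tilde{r}(\mathbf{0})$. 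Hence $\tilde{r}(\b)=\tilde{r}(\mathbf{0})$ almost everywhere, so $\mathbb{E}_{\b\sim V}[\tilde{r}(\b)]=\tilde{r}(\mathbf{0})$. Finally, at $\b=\mathbf{0}$ every truthful bidder has valuation $0$, so UIR forces $a_i(\mathbf{0})\tilde{p}_i(\mathbf{0})\le 0$; budget feasibility then gives $\tilde{r}(\mathbf{0})\le\sum_i a_i(\mathbf{0})\tilde{p}_i(\mathbf{0})\le 0$, completing the argument.

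The main obstacle I anticipate is rigor at the boundary rather than the core idea. The relation $\theta_i=\tilde{r}(b_i,\b_{-i})-\tilde{r}(0,\b_{-i})$ and the vanishing of $\theta_i$ on the losing region must be justified through Lemma~\ref{lem:1scp} at the allocation jumps, where $\pd{a_i(t,\b_{-i})}{t}$ is a point mass; and the zeroing-out of losers should be carried out on the generic distinct-bid set and then extended by continuity of $\tilde{r}$ to all of $\mathrm{dom}(V)$, using $\mathbf{0}\in\mathrm{dom}(V)$ to reach the all-zero configuration. For general block size $k$ the reduction produces a symmetric function of the $k$ winning bids rather than a single-variable $\phi$, so the differentiation step becomes a system of identities; I expect it still collapses, because the U-BNIC condition pins down the dependence on each winning coordinate separately, but this is the step that needs the most care.
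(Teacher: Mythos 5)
Your route is genuinely different from the paper's. The paper works coordinate by coordinate: it first pins down the payment rule (the deterministic allocation is a step function at a threshold $\theta(\v_{-i})$, and a collusion argument, Lemma~\ref{lem:det:thres}, forces $p_i$ to equal that threshold wherever the bidder wins), and then shows directly from the 1-SCP first-order condition that $\pd{r(v_i,\v_{-i})}{v_i}=0$ whenever $v_i\ne\theta(\v_{-i})$, so continuity of $r$ gives $r(\v)=r(\mathbf{0})\le 0$. It never needs to identify who wins, never aggregates over order statistics, and is insensitive to the block size. Your argument instead runs the $\theta$/auxiliary-mechanism identities of Section~\ref{sec:roadmap}, observes that determinism forces $\theta_i=0$ on the losing region so that $\tilde{r}$ depends only on the winning bids, and then kills the remaining dependence with the U-BNIC expectation identity $\phi'(b_i)G(b_i)=0$. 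That computation is correct, and your closing step $\tilde{r}(\mathbf{0})\le 0$ from UIR and BF matches the paper's.

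Two steps, however, need more than you give them. First, the zeroing-out iteration silently assumes that when a loser's bid is lowered to $0$, every other loser remains a loser and the revenue reduces to a function of the top bid(s). Strong monotonicity only says the \emph{other} allocations do not decrease, so a loser could in principle become a winner; you need a separate lemma (via symmetry, strong monotonicity and the capacity constraint at generic profiles) that the winner set is exactly the top bidders and is preserved under the reduction. Second, and more importantly, the theorem is not restricted to block size one, and your own closing paragraph concedes that for general $k$ the final step becomes an unresolved ``system of identities'' on a symmetric function of the $k$ winning bids; as written the proof is complete only for $k=1$, whereas the paper's coordinate-wise argument covers all block sizes uniformly. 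A smaller point: the identity $\theta_i(b_i,\b_{-i})=\tilde{r}(b_i,\b_{-i})-\tilde{r}(0,\b_{-i})$ carries an extra boundary term $a_i(0,\b_{-i})\tilde{p}_i(0,\b_{-i})$, since the theorem does not assume NFL; that term is constant in $b_i$ and vanishes on the losing region, so your argument survives, but the claim $\mathbb{E}_{\b_{-i}}[\theta_i(b_i,\b_{-i})]=0$ should be replaced by the statement you actually use, namely that $\mathbb{E}_{\b_{-i}}[\tilde{r}(b_i,\b_{-i})]$ is independent of $b_i$.
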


\subsection{Proof of Theorem~\ref{thm:impossible}}

\textbf{Proof sketch.} To prove the non-positive-miner-revenue property of all satisfying mechanisms, we first show that all satisfying mechanisms must obey certain restrictive conditions, as the payment (Sec. \ref{sec:imp:payment}) and revenue (Sec. \ref{sec:imp:rev}) rules both must follow corresponding closed-form formulas; then we show that this type of mechanisms have non-positive miner revenue.

In this section, we introduce the $\delta$-function with

\begin{align}
    \int_{-\epsilon}^{\epsilon} \delta(t)dt=1,\qquad \forall \epsilon>0.
\end{align}

We assume there exists a transaction fee mechanism $M_0(\a,\p,r)$ that satisfies all conditions.

\subsubsection{Pinning down the payment rule} \label{sec:imp:payment}

From definition we know that if $M_0$ is BNIC, then

\begin{align}
    \left.\pd{E_{\v_{-i}\sim V_{-i}}\left[u_i(b_i,v_i,\v_{-i})\right]}{b_i} \right|_{b_i=v_i} = 0, \qquad \forall v_i
\end{align}

i.e.,

\begin{align}
    &\int_{\v_{-i}} \left( \left(v_i-p_i(v_i,\v_{-i})\right)\pd{a_i(v_i,\v_{-i})}{v_i} \right.\nonumber\\
    &\qquad\quad\left. - a_i(v_i,\v_{-i})\pd{p_i(v_i,\v_{-i})}{v_i} \right)\rho_{-i}(\v_{-i}) d\v_{-i} = 0,
\end{align}

in which $\rho_{-i}(\cdot)$ is the pdf of $V_{-i}$.

For fixed $\v_{-i}$, since the mechanism is deterministic, we have that $a_i(\cdot,\v_{-i})\in \{0,1\}$ almost everywhere. Additionally because $a_i(\cdot, \v_{-i})$ is monotonic increasing, we have

\begin{align}
    a_i(v_i,\v_{-i}) =   \begin{cases}
    0,\:v_i<\theta(\v_{-i})\\
	1,\:v_i>\theta(\v_{-i}),
  \end{cases}
\end{align}

in which $\theta(\v_{-i})$ is a constant for fixed $\v_{-i}$. Therefore,

\begin{align} \label{eqn:imp:alloc}
    \pd{a_i(v_i,\v_{-i})}{v_i} = \delta(v_i-\theta(\v_{-i})).
\end{align}

Now we have a lemma:

\begin{lemma} \label{lem:det:thres}
    For $\forall \v_{-i}$,
    \begin{align}
        p_i(\theta(\v_{-i}),\v_{-i}) = \theta(\v_{-i}).
    \end{align} 
\end{lemma}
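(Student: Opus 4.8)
The identity $p_i(\theta(\v_{-i}),\v_{-i})=\theta(\v_{-i})$ is an \emph{ex-post} statement, pointwise in each realization of $\v_{-i}$, whereas the U-BNIC first-order condition displayed just above it only constrains \emph{interim} quantities: after multiplying by $\rho_{-i}$ and integrating, it pins down averages of $p_i$ over the level set $\{\v_{-i}:\theta(\v_{-i})=v_i\}$, not the payment at a single realization. Consequently U-BNIC alone cannot prove the lemma, and the right tool is the pointwise characterization of 1-SCP in Lemma~\ref{lem:1scp}, which is available realization-by-realization precisely because the miner knows all bids. The plan is therefore to fix an arbitrary $\v_{-i}$, write $\theta=\theta(\v_{-i})$, substitute the deterministic step allocation into the constrained-payment equation of Lemma~\ref{lem:1scp}, and extract the value of $p_i$ at the threshold by a one-sided limit.

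Concretely, I would plug (\ref{eqn:imp:alloc}), namely $\pd{a_i(t,\v_{-i})}{t}=\delta(t-\theta)$, into the payment identity of Lemma~\ref{lem:1scp}. Assuming $\theta>0$ we have $a_i(0,\v_{-i})=0$, so the boundary term drops, and $\int_0^{b_i} t\,\delta(t-\theta)\,dt$ equals $0$ for $b_i<\theta$ and equals $\theta$ for $b_i>\theta$. Splitting into the two regions then yields: for $b_i<\theta$, where $a_i=0$, the relation $r(b_i,\v_{-i})=r(0,\v_{-i})$, so the miner revenue is constant in $b_i$ below the threshold; and for $b_i>\theta$, where $a_i=1$, the relation $p_i(b_i,\v_{-i})=\theta+r(b_i,\v_{-i})-r(0,\v_{-i})$. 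Letting $b_i\to\theta^-$ and using continuity of $r$ gives $r(\theta,\v_{-i})=r(0,\v_{-i})$; letting $b_i\to\theta^+$ in the second relation and using continuity of both $\p$ and $r$ then gives $p_i(\theta,\v_{-i})=\theta+r(\theta,\v_{-i})-r(0,\v_{-i})=\theta$, which is the claim.

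The main obstacle is conceptual rather than computational: one must recognize that the interim U-BNIC condition set up immediately before the lemma is \emph{insufficient} to force an ex-post payment identity, and that it is 1-SCP (via Lemma~\ref{lem:1scp}) that does the real work. The remaining difficulty is careful bookkeeping at the single point $b_i=\theta$, since the deterministic allocation equals $0$ or $1$ only for $b_i\neq\theta$, so $p_i(\theta,\v_{-i})$ must be recovered as a one-sided limit; this is exactly where the ``Continuous'' hypothesis on $\p$ and $r$ is used, and it also guarantees $r(\theta,\v_{-i})=r(0,\v_{-i})$ by matching the constant value from below. Finally, one should dispose of the degenerate case $\theta(\v_{-i})=0$ (where user $i$ is confirmed for every positive bid and $a_i(0,\v_{-i})$ need not vanish) by a short separate argument combining UIR with the boundary term of the same 1-SCP identity.
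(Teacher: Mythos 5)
Your argument is correct, but it reaches the conclusion by a genuinely different route than the paper. The paper's proof works directly from the \emph{definition} of 1-SCP: assuming $p_i(\theta(\v_{-i}),\v_{-i})\neq\theta(\v_{-i})$, it exhibits an explicit profitable coalition deviation near the threshold (a user with valuation $\theta(\v_{-i})+\epsilon$ who would be confirmed at a price exceeding her value and whom the miner would ask to bid $\theta(\v_{-i})-\epsilon$, or symmetrically a user with valuation $\theta(\v_{-i})-\epsilon$ asked to bid above the threshold), using continuity only to argue that the miner's revenue changes by an arbitrarily small amount, hence total coalition utility strictly increases --- contradiction. You instead invoke the integral characterization of 1-SCP (Lemma~\ref{lem:1scp}), substitute the step allocation $\pd{a_i(t,\v_{-i})}{t}=\delta(t-\theta(\v_{-i}))$, solve the identity separately on $\{b_i<\theta(\v_{-i})\}$ and $\{b_i>\theta(\v_{-i})\}$, and recover $p_i(\theta(\v_{-i}),\v_{-i})=\theta(\v_{-i})$ by matching one-sided limits via the continuity hypothesis. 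Both proofs are valid; the paper's is more elementary and does not rely on the distributional-derivative formalism at the jump point, while yours is more economical globally: as a byproduct you obtain $r(b_i,\v_{-i})=r(0,\v_{-i})$ below the threshold and $p_i(b_i,\v_{-i})-r(b_i,\v_{-i})=\theta(\v_{-i})-r(0,\v_{-i})$ above it, which essentially subsumes the paper's subsequent Lemma~\ref{lem:imp:const_rev} and Section~\ref{sec:imp:rev}. Your opening observation --- that the interim U-BNIC first-order condition displayed before the lemma cannot yield an ex-post identity and that 1-SCP must do the work --- is accurate and consistent with the paper, which likewise uses only 1-SCP here. Your flagged edge case $\theta(\v_{-i})=0$ is a real (minor) gap that the paper's own proof also glosses over; handling it via UIR as you suggest is reasonable.
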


\proof{Proof.}
    If $p_i(\theta(\v_{-i}),\v_{-i}) > \theta(\v_{-i})$, let $t = p_i(\theta(\v_{-i}),\v_{-i}) - \theta(\v_{-i})$. Then by continuity, there exists a small $\epsilon>0$ s.t. $p_i(\theta(\v_{-i})+\epsilon ,\v_{-i}) > \theta(\v_{-i}) + \frac{t}{2}$ and $a_i(\theta(\v_{-i})+\epsilon ,\v_{-i}) = 1$, and the user $i$ would have negative utility. In this scenario, the miner would want to collude with user $i$ and ask him to change his bid to $\theta(\v_{-i}) - \epsilon$, so that user $i$ would now have $0$ utility.
    
    But by continuity, the change of the miner's revenue is arbitrarily small, increasing their total utility. So the 1-SCP property is violated.
    
    If $p_i(\theta(\v_{-i}),\v_{-i}) < \theta(\v_{-i})$, similarly there exists a scenario where user $i$ has valuation $\theta(\v_{-i})-\epsilon$ but the miner would want to let her bid $\theta(\v_{-i})+\epsilon$ instead, also violating 1-SCP.
    
    Therefore, it must hold that $p_i(\theta(\v_{-i}),\v_{-i}) = \theta(\v_{-i})$.

\hfill $\square$
\endproof
From Lemma \ref{lem:det:thres} we have
\begin{small}
\begin{align}
    \int_{\v_{-i}} \left( \left(v_i-p_i(v_i,\v_{-i})\right)\pd{a_i(v_i,\v_{-i})}{v_i} \right)\rho_{-i}(\v_{-i}) d\v_{-i} = 0,
\end{align}
\end{small}
so
\begin{align}
    \int_{\v_{-i}} \left( a_i(v_i,\v_{-i})\pd{p_i(v_i,\v_{-i})}{v_i} \right)\rho_{-i}(\v_{-i}) d\v_{-i} = 0.
\end{align}
Since monotonicity implies $\pd{p_i(v_i,\v_{-i})}{v_i} \ge 0$, we know that $\forall v_i> \theta(\v_{-i}), \pd{p_i(v_i,\v_{-i})}{v_i}=0$. Therefore, 
\begin{small}
\begin{align}
    \forall b_i>\theta(\v_{-i}), \epsilon > 0,\quad p_i(b_i,\v_{-i}) = p_i\left(\theta(\v_{-i}) + \epsilon, \v_{-i}\right).
\end{align}
\end{small}
Combined with Lemma \ref{lem:det:thres}, from continuity we get

\begin{align} \label{eqn:imp:payment}
    \forall b_i\ge\theta(\v_{-i}), \quad p_i(b_i,\v_{-i}) = \theta(\v_{-i}).
\end{align}

\subsubsection{Pinning down the miner revenue rule} \label{sec:imp:rev}

In this part, we mainly use the 1-SCP property to prove that the miner revenue is a constant with regard to any user. To show this, we prove a lemma:

\begin{lemma} \label{lem:imp:const_rev}
    If $v_i \ne \theta(\v_{-i})$, then $\pd{r(v_i,\v_{-i})}{v_i} = 0$.
\end{lemma}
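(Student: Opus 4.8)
The plan is to exploit the $1$-SCP condition directly, using the closed forms for the allocation rule (\ref{eqn:imp:alloc}) and the payment rule (\ref{eqn:imp:payment}) already established. The governing observation is that $1$-SCP for the coalition of the miner and user $i$ means precisely that, for every true valuation $v_i$ and every alternative bid $b_i$, truthful bidding is (weakly) optimal for the \emph{total} utility of the pair, i.e.
\[
a_i(v_i,\v_{-i})\bigl(v_i-p_i(v_i,\v_{-i})\bigr) + r(v_i,\v_{-i}) \;\ge\; a_i(b_i,\v_{-i})\bigl(v_i-p_i(b_i,\v_{-i})\bigr) + r(b_i,\v_{-i}).
\]
This is the only property of $1$-SCP I would need, and it is the weak (non-strict) form supplied by Definition~\ref{def:cscp}.

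First I would fix $\v_{-i}$ and its threshold $\theta(\v_{-i})$, and record what the pinned-down rules say about the user-utility term $a_i(b_i,\v_{-i})(v_i-p_i(b_i,\v_{-i}))$. For any bid $b_i>\theta(\v_{-i})$ we have $a_i=1$ and $p_i=\theta(\v_{-i})$ by (\ref{eqn:imp:payment}), so this term equals $v_i-\theta(\v_{-i})$ and, crucially, does not depend on the particular value of $b_i$ as long as it stays above the threshold; for any $b_i<\theta(\v_{-i})$ the term is simply $0$. Thus on each side of the threshold the user-utility term is a function of the true valuation $v_i$ alone and is insensitive to where on that side the bid lies.

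Next, I would take two valuations $v_i,v_i'$ lying on the \emph{same} side of $\theta(\v_{-i})$ and apply the inequality above twice: once with true valuation $v_i$ deviating to the bid $b_i=v_i'$, and once with true valuation $v_i'$ deviating to $b_i=v_i$. Because the user-utility terms cancel on each line (they depend only on the true valuation, and both bids sit on the same side), the two inequalities reduce to $r(v_i,\v_{-i})\ge r(v_i',\v_{-i})$ and $r(v_i',\v_{-i})\ge r(v_i,\v_{-i})$, forcing $r(v_i,\v_{-i})=r(v_i',\v_{-i})$. Hence $r(\cdot,\v_{-i})$ is constant on $\{v_i>\theta(\v_{-i})\}$ and constant on $\{v_i<\theta(\v_{-i})\}$, which gives $\pd{r(v_i,\v_{-i})}{v_i}=0$ for every $v_i\ne\theta(\v_{-i})$.

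The step I expect to carry the weight of the argument is the cancellation in the previous paragraph: it is exactly the fact that, above the threshold, the payment is the \emph{constant} $\theta(\v_{-i})$ rather than a bid-dependent quantity (a consequence of BNIC pinning the payment down in Section~\ref{sec:imp:payment}) that makes the user's contribution to the coalition utility identical at $b_i=v_i$ and $b_i=v_i'$, isolating $r$. I would also take small care that $v_i\neq\theta(\v_{-i})$ guarantees a one-sided neighborhood on which these comparisons are available, so that the conclusion about the derivative at $v_i$ is legitimate rather than merely a statement about two isolated points.
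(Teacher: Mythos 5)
Your proof is correct, and it reaches the paper's conclusion by a somewhat different and arguably cleaner route. The paper's own proof writes the coalition utility $C_i(b_i,v_i,\v_{-i}) = a_i(b_i,\v_{-i})(v_i-p_i(b_i,\v_{-i})) + r(b_i,\v_{-i})$ and imposes the first-order condition $\left.\pd{C_i}{b_i}\right|_{b_i=v_i}=0$ from 1-SCP; it then kills the first two terms using $\pd{a_i}{v_i}=\delta(v_i-\theta(\v_{-i}))=0$ away from the threshold (from (\ref{eqn:imp:alloc})) and $a_i\pd{p_i}{v_i}\equiv 0$ (from (\ref{eqn:imp:payment})), leaving $\pd{r}{v_i}=0$. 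You instead use the weak-inequality form of 1-SCP directly and run a two-point exchange argument between valuations on the same side of the threshold, exploiting exactly the same two pinned-down facts — that the allocation is locally constant and the payment above the threshold is the bid-independent constant $\theta(\v_{-i})$ — so that the user-utility terms cancel and $r(v_i,\v_{-i})=r(v_i',\v_{-i})$ follows. What your version buys is that it avoids differentiating the discontinuous allocation rule (the paper's delta-function manipulation is the kind of step it elsewhere flags as "slightly unrigorous"), and it yields the slightly stronger conclusion that $r(\cdot,\v_{-i})$ is constant on each side of the threshold rather than merely having vanishing derivative; the paper's version is shorter and slots directly into its differential bookkeeping. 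Both still rely on the paper's continuity assumption to later bridge $r$ across the threshold, so nothing is lost downstream.
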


\proof{Proof.}
We recall that the total utility of the miner and user $i$ is
\begin{small}
\begin{align}
        C_i(b_i,v_i,\v_{-i}) = a_i(b_i,\v_{-i}) (v_i - p_i(b_i,\v_{-i})) + r(b_i,\v_{-i}). 
    \end{align}
\end{small}
From 1-SCP we know that

\begin{align}
    0 &= \left.\pd{C_i(b_i,v_i,\v_{-i})}{b_i}\right|_{b_i=v_i} \\
    &= \left(\left(v_i - p_i(v_i,\v_{-i})\right)\pd{a_i(v_i,\v_{-i})}{v_i} \right. \nonumber \\
    &\qquad \qquad\left.- a_i(v_i,\v_{-i})\pd{p(v_i,\v_{-i})}{v_i}\right) + \pd{r(v_i,\v_{-i})}{v_i}.    
\end{align}
 
 From Eq.~(\ref{eqn:imp:payment}) we know $a_i(v_i,\v_{-i})\pd{p(v_i,\v_{-i})}{v_i} \equiv 0$, and from Eq.~(\ref{eqn:imp:alloc}) we know $v_i \ne \theta(\v_{-i}) \implies \pd{a_i(v_i,\v_{-i})}{v_i} = 0$. So we deduce
 
 \begin{align}
     v_i \ne \theta(\v_{-i}) \implies \pd{r(v_i,\v_{-i})}{v_i} = 0.
 \end{align}

\hfill $\square$
\endproof

Because the continuity condition guarantees $r(\b)$ is a continuous function of $\b$, from Lemma~\ref{lem:imp:const_rev} we know that for fixed $\v_{-i}$, $r(\cdot,\v_{-i})$ is a constant, hence
\begin{align} \label{eqn:imp:const_rev}
    r(v_i,\v_{-i}) = r(0,\v_{-i}).
\end{align}
By iteratively apply Eq.~(\ref{eqn:imp:const_rev}) to all components of $\v$, we get
\begin{align}
    r(\v) = r(\mathbf{0}).
\end{align}
We notice that from UIR,
\begin{align}
    r(\mathbf{0}) &\le \sum_{i=1}^n a_i(\mathbf{0}) p_i(\mathbf{0}) \\
    &\le \sum_{i=1}^n a_i(\mathbf{0}) \cdot 0 \\
    &= 0.
\end{align}
Therefore, we have
\begin{align}
    r(\v) \le 0, \qquad \forall \v.
\end{align}
Here we prove Theorem~\ref{thm:impossible}.

\section{Additional Perspectives of Auxiliary Mechanism Method}\label{app:add_persp}

\subsection{A Failed Example: the First-Price Auction} \label{sss:aux:fpa}

In this part, we use a simple example to help readers understand the constraints for an admissible variation term. In particular, we will demonstrate a $\bmtheta$ function that cannot be coupled with any $\tilde{r}$ to form an admissible variation term. The $\bmtheta$ function is constructed based on the natural first-price auction. As an interesting by-product, this example also shows that, although the first-price auction mechanism can be adapted to satisfy U-BNIC, it cannot be combined with a miner payment rule $\tilde{r}$ to further enjoy the $1$-SCP property.

We now define $\bmtheta$ based on the first-price auction. For simplicity, we consider only $n = 2$ users and the block size $k=1$. The first-price auction for the single block entry defines the following allocation rule $\a$ (both first-price and second-price auctions confirm the highest-bid user, also note that $b_{-i}$ is a scalar since there are only $2$ users):
\begin{align}
    a_i(b_i,b_{-i}) &= \begin{cases}
    1, \:b_i > b_{-i}\\
    \frac{1}{2}, \: b_i = b_{-i} \\
    0,\:b_i < b_{-i}
  \end{cases} .
\end{align}

We then consider the payment rules that will help us to finally define $\bmtheta$. The first payment rule $\p$ is the dominant association of $\a$. We calculate $\p$ via Eq.~\eqref{eqn:aux:payment} as follows.
\begin{align}
p_i(b_i,b_{-i}) &= \begin{cases}
b_{-i},&\:b_i \ge b_{-i}\\
0,&\:b_i < b_{-i}
\end{cases}. 
\end{align}
Indeed, $\a$ and $\p$ form the second-price auction which is DSIC. 

We now turn to the second payment rule $\tilde{\p}$ which is adapted from the payment rule of the first-price auction. It is well-known that the first-price auction is not truthful (DSIC) \citep{roughgarden2021transaction}: users would prefer to bid lower than their valuations, which is necessary for them to get any surplus even if they get the item. Nevertheless, there exist Bayesian Nash equilibria for specific settings when distributions of valuations are known. For example, when there are $n$ users with \emph{i.i.d.}\ uniformly random valuations over $[0,1]$, it is a Bayesian Nash equilibrium for each bidder to bid $\frac{n-1}{n}v_i$. By the Revelation Principle \citep{myerson1981optimal, myerson1979incentive}, we can derive a payment rule $\tilde{p}$ to make the confirmed user pay $\frac{n-1}{n}$ times her bid. For $n = 2$, we derive $\tilde{p}$ as follows.
\begin{align}
    \tilde{p}_i(b_i,b_{-i}) &= \begin{cases}
\frac12 b_i &\:b_i \ge b_{-i}\\
0 &\:b_i < b_{-i}
\end{cases}.
\end{align}

Finally, we define $\bmtheta$ according to Eq.~\eqref{eq:def-theta} and get that
\begin{align}
    \theta_i(b_i,b_{-i}) &= \begin{cases}
\frac{1}{2} b_i - b_{-i} &\:b_i > b_{-i}\\
-\frac{1}{4} b_i &\:b_i = b_{-i}   \\
0 &\:b_i < b_{-i}
\end{cases}. 
\end{align}

When the user valuation is uniformly random over $[0, 1]$, we have that $\mathbb{E}_{b_{-i} \sim U[0,1]}[\theta_i(0,b_{-i})] = 0$ for $i \in \{1, 2\}$, indicating that $\bmtheta$ satisfies the second condition (Eq.~\eqref{eq:def-admissibility-2}) of the admissibility property. Suppose that we could find a miner revenue function $\tilde{r}$ such that $T = (\bmtheta, \tilde{r})$ is admissible. Let $M = (\a, \p, 0)$. According to Theorem~\ref{thm:decomposition} and by the definition of $\bmtheta$, we have that the composed TFM 
\[
\tilde{M} = M + T = (\a, \p, 0) + (\bmtheta, \tilde{r}) = (\a, \tilde{\p}, \tilde{r})
\]
is U-BNIC and $1$-SCP. Then we could get the TFM $\tilde{M}$ which is a natural adaptation of the first-price auction (since its payment rule $\tilde{\p}$ is adapted from the first-price payment rule).

On the other hand, however, we show that this is impossible -- there exists no $\tilde{r}$ such that $(\bmtheta, \tilde{r})$ is admissible. We prove this by contradiction. Suppose there exists such an $\tilde{r}$, we compute $\tilde{r}(1,1)$ in two different ways. By the first condition of admissibility (Eq.~\eqref{eq:def-admissibility-1}), we have that
\begin{small}
\begin{align*}
    \tilde{r}(1,1) &= \tilde{r}(0,0) + (\tilde{r}(1,0) - \tilde{r}(0,0)) + (\tilde{r}(1,1) - \tilde{r}(1,0)) \\
    &= \tilde{r}(0,0) + \theta_1(1,0) + \theta_2(1,1) \\
    &= \tilde{r}(0,0) + 0.5-0.25 \\
    &= \tilde{r}(0,0) + 0.25.
\end{align*}
\end{small}
We can also invoke Eq.~\eqref{eq:def-admissibility-1} and compute $\tilde{r}(1,1)$ via a different path:
\begin{small}
\begin{align*}
    &\tilde{r}(1,1)\\
    &= \tilde{r}(0,0) + (\tilde{r}(0.5,0) - \tilde{r}(0,0)) + (\tilde{r}(0.5,1) - \tilde{r}(0.5,0)) \\
    &\qquad\quad+ (\tilde{r}(1,1) - \tilde{r}(0,1)) - (\tilde{r}(0.5,1) - \tilde{r}(0,1)) \\
    &= \tilde{r}(0,0) +\theta_1(0.5,0) + \theta_2(0.5,1) + \theta_1(1,1) - \theta_1(0.5,1) \\
    &= \tilde{r}(0,0) +0.25 + 0 - 0.25 - 0 \\
    &= \tilde{r}(0,0) + 0.
\end{align*}
\end{small}
Now we reach the contradiction. This example shows that using our auxiliary mechanism method, we are not able to extend the natural first-price auction to a U-BNIC and $1$-SCP TFM.\footnote{It is possible to prove a stronger statement: there exists no $\tilde{r}$ such that the TFM $(\a, \tilde{\p}, \tilde{r})$ is $1$-SCP (where $\a$ and $\tilde{\p}$ are defined based on the first-price auction as in Section~\ref{sss:aux:fpa}). Therefore, the does not exist a U-BNIC and $1$-SCP TFM extension based on the first-price auction. We omit the detailed proof of this statement since it is not directly related to the construction and the analysis of our TFM.} We will need to carefully design a different $\bmtheta$ to satisfy the admissibility conditions.

\subsection{A Conservative-field Perspective of the Payment Difference Function $\{\theta_i\}$} \label{sss:aux:ubnic}


In this part, we distill our experience in the trial in Appendix~\ref{sss:aux:fpa} and provide an additional perspective for the design of $\bmtheta$.
From the example, we see that if we sum up the differences of $\bmtheta$ along any path that consists of axis-aligned arcs, the summation should only depend on the two terminals of the path. This suggests the path-independence property of the $\bmtheta$ function. In particular, for any $\bmtheta$ in an admissible variation term $(\bmtheta, \tilde{r})$, if we define the vector field

\begin{align}
    \D_\bmtheta(\b) = \left( \pd{}{b_1}\theta_1(b_1,\b_{-1}), \cdots, \pd{}{b_n}\theta_n(b_n,\b_{-n})\right),
\end{align}
then $\D_\bmtheta$ should be a conservative field \citep{connell1983conservative}. In other words, for any closed curve $C$ (with parametrization $\mathbf{z}$), we have the following equality for the integration
\begin{align}
    \oint_{C} \D_\bmtheta \cdot d \mathbf{z} = 0.
\end{align}
According to Eq.~\eqref{eq:def-admissibility-1}, $\tilde{r}$ is actually the potential of $\D_\bmtheta$. From this conservative-field perspective, we see that in order to successfully construct an admissible variation term, we may consider first constructing a $\tilde{r}$ (as the \emph{potential} that determines the field), while guaranteeing the $\bm{\theta}$ functions satisfies Eq.~\eqref{eq:def-admissibility-2}. This intuition helps our design of the admissible variation term. Nevertheless, it is still quite challenging to construct a good variation term. Thanks to the almost-modular property of the auxiliary mechanism and the variation term, we can re-use an admissible variation term in different settings, as we do in Sections~\ref{sec:size1}-\ref{section:size:k}.

\if 0

By our research, we find out that the variation term $T^* = (\mathbf{\theta}^*, \tilde{r}^*)$ works well in both settings of block size one and general block size $k$:
\begin{framed}
\begin{align*}
    \theta^*_i(b_i,\b_{-i}) &= -\frac{1}{2}b_i^2\left(\frac{\sum_{j\ne i}b_j^2}{c_\rho(n-1)}-1\right),\\
    \tilde{r}^*(\b) &= \frac{1}{2} \left(\sum_{i=1}^{n}b_i^2 - \frac{\sum_{1\le i < j \le n}{b_i^2b_j^2}}{c_\rho(n-1)}\right) .
\end{align*}
\end{framed}
Here, $c_\rho$ is a parameter determined by the valuation distribution, as defined in (\ref{eqn:crho}). Furthermore, we have shown that in both settings, the optimal $h$ can make the mechanism achieve a constant-fraction approximation of optimal revenue for fixed $c_\rho$ and $n$ large enough.
\fi

\subsection{Intuition of Variation Term Construction in Section~\ref {sec:alg:one_size}} 
\label{app:am:intuition}

From the admissibility 
condition Eq.~\eqref{eq:def-admissibility-1}, i.e., $\theta_i(b_i,\b_{-i}) = \tilde{r}(b_i,\b_{-i}) - \tilde{r}(0,\b_{-i})$, we get
\begin{equation}
    \tilde{r}(b_i,\b_{-i}) = 
\tilde{r}(b_i,\mathbf{0}) + \theta(b_i,\b_{-i})
\end{equation}

From another admissibility condition of Eq.~\eqref{eq:def-admissibility-2}, i.e., $\E_{\b_{-i}}[\theta_i(b_i,\b_{-i})]=0$, for convenience we decouple $b_i$ and $\b_{-i}$ and construct $\theta_i$ in the following form
\begin{align}\label{eqn:theta:decomp}
    \theta_i(b_i,\b_{-i}) = h\cdot \alpha(b_i)\cdot\beta(\b_{-i}),
\end{align}
in which $\beta(\b_{-i})$ is a symmetric expression on $\b_{-i}$ and
\begin{align}\label{eq:am:intuit:beta}
    \E_{\b_{-i}} [\beta(\b_{-i})] = 0.
\end{align}

Now we consider the case of $\b_{-i}=\mathbf{0}$ and $m$ is large, i.e., the situation is close to a second-price auction in which all other users bid zero, and the user $i$'s payment in the auxiliary mechanism is close to zero.

However, as long as $b_i>0$, by intuition user $i$ is capable of paying more. From the allocation rule, for any fixed $m$ we can actually find a $K>0$ in which $a_i(b_i,\mathbf{0})\ge K b_i$, and hence user $i$ is able to pay at least $a_i(b_i,\mathbf{0}) \cdot b_i \ge K b_i^2$. On the other hand, from Myerson's Lemma (Lemma~\ref{lem:myerson}), in the auxiliary mechanism we also have $a_i(b_i,\mathbf{0})p_i(b_i,\mathbf{0}) = \Theta(b_i^2)$ when $b_i\to 0$, but quickly ``saturating'' when $b_i > \Theta(\frac{1}{m})$ and $a_i(b_i,\mathbf{0})$ become close to $1$. Hence, to uniformly exploit payment from user $i$ for different values of $b_i$, we would like to construct\endnote{We introduced a coefficient $\frac{1}{2}$ because we initially constructed the variation term via partial derivatives.}
\begin{align}\label{eq:alpha}
\alpha(b_i)=\frac{1}{2}b_i^2.
\end{align}

On the other hand, since the expression of $\theta_i(\cdot)$ will appear in the expression of $\tilde{r}(\cdot)$, and $\tilde{r}(\cdot)$ is a symmetric expression. In order to ensure symmetry, we construct
\begin{align}\label{eq:beta}
    \beta(\b_{-i}) = 1 - \mu \sum_{j:j\ne i} b_j^2.
\end{align}

Even if it indicated less payment when $\b_{-i}$ are large on the users' side, the negative fourth order terms in the expression of  $\tilde{r}(\b)$ are ``halved'' compared to the sum of $\{\theta_i(b_i,\b_{-i})\}$, yielding a positive expected miner revenue.

From Eq.\eqref{eq:am:intuit:beta}, we have 
\begin{align}
    \mu &= \frac{1}{\E_{b_{-i}}[\sum_{j:j\ne i}b_j^2]}\\
    &= \frac{1}{c_\rho (n-1)}.\label{eq:mu}
\end{align}

From Eqs.~(\ref{eqn:theta:decomp},\ref{eq:alpha},\ref{eq:beta},\ref{eq:mu}) we get the construction of the variation term as Eqs.~(\ref{eqn:theta},\ref{eqn:r}).

\color{black}

\section{Omitted Proofs}
\ifdefined\ConnectionToSPA
\subsection{Proofs of Observations of Soft Second-Price Mechanism}\label{appendix:proof:obs}

\subsubsection{Proof of Observation~\ref{obs:1}}

    We can obviously see Eq.~(\ref{ssp:lim:alloc}). Now we prove Eq.~(\ref{ssp:lim:payment}).
    
    We first consider the case when $b_i = b_{(1)}$. Notice that when $m\to \infty$, $\frac{\sum_{j=1}^{n}e^{mb_j}}{me^{mb_i}} \sim \frac{e^{mb_i}}{me^{mb_i}} = \frac{1}{m}$, and $\ln \frac{\sum_{j=1}^{n}e^{mb_j}}{1+\sum_{j\neq i}e^{mb_j}} \sim \ln \frac {e^{mb_{(1)}}}{e^{mb_{(2)}}} = \ln e^{m(b_{(1)}-b_{(2)})} = m(b_{(1)}-b_{(2)})$. Therefore, 
    
    \begin{align}
        \lim_{m\to +\infty} p_i(b_i,\b_{-i}) &= \lim_{m\to +\infty} \left(b_{(1)} - \frac{1}{m}\cdot m(b_{(1)}-b_{(2)}) \right)\\
        &= b_{(2)}.
    \end{align}
    
    Then we consider the case of $0 < b_i < b_{(1)}$. In this case, we have $\frac{\sum_{j=1}^{n}e^{mb_j}}{me^{mb_i}} \sim \frac{e^{mb_{(1)}}}{me^{mb_i}} = \frac{1}{m} e^{b_{(1)}-b_i}$ and 
    \begin{align}
        \ln \frac{\sum_{j=1}^{n}e^{mb_j}}{1+\sum_{j\neq i}e^{mb_j}} &= \ln\left(1 + \frac{e^{mb_i}-1}{1+\sum_{j\ne i}e^{mb_j}}\right) \\
        &\sim \ln \left(1+\frac{e^{mb_i}}{e^{mb_{(1)}}}\right) \\
        &\sim \frac{1}{e^{b_{(1)}-b_i}},
    \end{align}
    
    Therefore, $p_i(b_i,\b_{-i}) \sim b_i - \frac{1}{m}$, thus
    
    \begin{align}
        \lim_{m\to +\infty} p_i(b_i,\b_{-i}) &= \lim_{m\to +\infty} \left(b_i - \frac{1}{m}\right) \\
        &= b_i.
    \end{align}
    
    In conclusion, we have Eq.~(\ref{ssp:lim:payment}).

\subsubsection{Proof of Observation~\ref{obs:2}}

    We also only prove Eq.~(\ref{ssp:lim0:payment}). For fixed $\b$ and $m\to 0$, we have
    
    \begin{align}
        p_i(b_i,\b_{-i}) &= b_i - \frac{\sum_{j=1}^{n}e^{mb_j}}{me^{mb_i}} \cdot  \ln\frac{\sum_{j=1}^{n}e^{mb_j}}{1+\sum_{j\neq i}e^{mb_j}} \\
        &= b_i - \frac{\sum_{j=1}^n (1+mb_j + O(m^2))}{m(1+mb_i + O(m^2))} \cdot \ln \frac{\sum_{j=1}^n(1+mb_j+O(m^2))}{1+\sum_{j\ne i}(1+mb_j+O(m^2))}\\
        &= b_i - \frac{n+m\sum_{j=1}^n b_j + O(m^2)}{m(1+mb_i + O(m^2))} \cdot \ln\left(1+\frac{mb_i+O(m^2)}{n+m\sum_{j\ne i}b_j + O(m^2)}\right) \\
        &= b_i - \frac{n}{m} (1+ o(1))\cdot \frac{mb_i}{n}(1+o(1)) \\
        &= o(1) b_i.
    \end{align}
    
    Therefore, 
    
    \begin{align}
        \lim_{m\to 0} p_i(b_i,\b_{-i}) = 0.
    \end{align}

\fi
\subsection{Proof of Theorem~\ref{thm:decomposition}} \label{app:proof:decomposition}

First, we observe a sufficient condition for a TFM to be U-BNIC.

\begin{observation}\label{obs:3}
    $\tilde{M}$ is U-BNIC if 
    \begin{align}
        \mathbb{E}_{\b_{-i}\sim V_{-i}}[\theta_i(b_i,\b_{-i})] =0. \label{eqn:bnic:suff:cond}
    \end{align}
\end{observation}

\proof{Proof.}
    Because user $i$'s expected utility $\tilde{u}(b_i,\b_{-i};v_i) = u(b_i,\b_{-i};v_i) - \theta(b_i,\b_{-i})$, if $\mathbb{E}_{\b_{-i}\sim V_{-i}}[\theta_i(b_i,\b_{-i})]=0$, then for any bidding vector $\b$ and $i$'s valuation $v_i$, mechanisms $M$ and $\tilde{M}$ have the same expected utility 
    \begin{align}
        \mathbb{E}_{b_{-i} \sim V_{-i}}[u(b_i,\b_{-i};v_i)] = \mathbb{E}_{b_{-i} \sim V_{-i}}[\tilde{u}(b_i,\b_{-i};v_i)]. \label{eqn:condition_ubnic}
    \end{align}
    As mechanism $M$ is U-BNIC, it holds that $\tilde{M}$ is also U-BNIC.
    
    \hfill$\square$
\endproof

As we have characterized a sufficient condition for U-BNIC, now we consider the condition for 1-SCP.  We first introduce a lemma as a sufficient and necessary condition for a TFM to be 1-SCP:

\begin{lemma}\label{lem:1scp}
  The mechanism $M=\apr$ is 1-SCP if and only if the following conditions are satisfied:
  \begin{itemize}
      \item Monotone allocation: $a_i(\cdot,\b_{-i})$ is monotonic non-decreasing,
      \item Constrained payment function: 
      \begin{small}
      \begin{align}
      &a_i(b_i,\b_{-i})p_i(b_i,\b_{-i}) - r(b_i,\b_{-i}) \nonumber\\
      &= \int_{0}^{b_i} t \pd{a_i(t,\b_{-i})}{t} dt + a_i(0,\b_{-i})p_i(0,\b_{-i}) - r(0,\b_{-i}).
      \end{align}
      \end{small}
  \end{itemize}
\end{lemma}

\proof{Proof.}
    Consider another mechanism $M'=(\mathbf{a},\mathbf{p}-\frac{r}{\mathbf{a}},0)$. Since $M'$ has zero miner revenue, it is 1-SCP if and only if it is U-DSIC.
    
    From Lemma~\ref{lem:myerson}, $M'$ is U-DSIC if and only if the given conditions hold. So $M'$ is 1-SCP if and only if the conditions hold.
    
    Notice that for the same bidding vector $\b$, the miner and user $i$ have the same total utilities in mechanisms $M$ and $M'$. So $M$ is 1-SCP if and only if the conditions hold.
    
    \hfill$\square$
\endproof

From Lemma~\ref{lem:1scp} we know that for an 1-SCP mechanism $(\mathbf{a},\tilde{\mathbf{p}},\tilde{\mathbf{r}})$, if we fix $\b_{-i}$, the difference of $a_i(\cdot,\b_{-i})\tilde{p}_i(\cdot,\b_{-i})$ and $\tilde{r}_i(\cdot,\b_{-i})$ is a constant. Furthermore, since $a(\cdot,\b_{-i})$ is monotonic increasing, if we want $\tilde{M}$ to be 1-SCP, from Lemma~\ref{lem:1scp} we need and only need:
\begin{small}
\begin{align}
      &a_i(b_i,\b_{-i})\tilde{p}_i(b_i,\b_{-i}) - \tilde{r}(b_i,\b_{-i}) \nonumber\\
      &= \int_{0}^{b_i} t \pd{a_i(t,\b_{-i})}{t} dt + a_i(0,\b_{-i})\tilde{p}_i(0,\b_{-i}) - \tilde{r}(0,\b_{-i}).
      \end{align}
\end{small}
From the construction of $\p$ we have

\begin{align}
      a_i(b_i,\b_{-i})p_i(b_i,\b_{-i})  = \int_{0}^{b_i} t \pd{a_i(t,\b_{-i})}{t} dt.
\end{align}

Since we set the boundary condition $\tilde{p}_i(0,\b_{-i})=0$, and the definition of $\{\theta_i\}$ as $\theta_i(b_i,\b_{-i}) = a_i(b_i,\b_{-i})(\tilde{p}_i(b_i,\b_{-i})-p_i(b_i,\b_{-i}))$, we get a sufficient condition of 1-SCP as:
\begin{align}
    \theta_i(b_i,\b_{-i}) = \tilde{r}(b_i,\b_{-i}) - \tilde{r}(0,\b_{-i}), \quad \forall i. 
\end{align}
So $\tilde{M}$ is indeed U-BNIC and 1-SCP if $M$ is U-DSIC and 1-SCP and $T$ is admissible.

\subsection{Proof of Lemma~\ref{lem:admis}} \label{app:proof:admis}
We have
\begin{align}
\tilde{r}(b_i, \b_{-i}) - \tilde{r}(0, \b_{-i}) &= \frac{1}{2} h \left( b_i^2 - \frac{\sum_{j\ne i} b_i^2 b_j^2}{c_\rho(n-1)}\right) \\
&= \frac{1}{2} h b_i^2 \left( 1 - \frac{\sum_{j\ne i}  b_j^2}{c_\rho(n-1)} \right) \\
&= \theta_i(b_i, \b_{-i})
\end{align}
and
\begin{align}
&\mathbb{E}_{\b_{-i} \sim V_{-i}} \theta_i(b_i, \b_{-i}) \nonumber\\
&= \mathbb{E}_{\b_{-i} \sim V_{-i}}\left[-\frac{1}{2}hb_i^2\left(\frac{\sum_{j\ne i}b_j^2}{c_\rho(n-1)}-1\right) \right] \\
&= -\frac{1}{2}hb_i^2\left(\frac{\sum_{j\ne i}\mathbb{E}_{\b_{-i} \sim V_{-i}}[b_j^2]}{c_\rho(n-1)}-1\right) \\
&= -\frac{1}{2}hb_i^2\left(\frac{\sum_{j\ne i}c_\rho}{c_\rho(n-1)}-1\right) \\
&= 0.
\end{align}
Therefore, the variation term $T$ is admissible.

\subsection{Proof of Theorem~\ref{thm:h:value:1}} \label{app:proof:param:1}

From the auxiliary mechanism method, the mechanism $\tilde{M}=(\a,\tilde{\p},r)$ is U-BNIC and 1-SCP from Theorem~\ref{thm:decomposition}. Now we prove the UIR, BF and U-SP properties.

\subsubsection{Proof of UIR and BF}

From Eq.~(\ref{ssp_payment}) we know $p_i(0,\b_{-i})=0$. Then for $n\to \infty$, from Lemma~\ref{lem:myerson} {and $b_i\in [0,1]$} we get:

    \begin{align}
              a_i(b_i,\b_{-i})p_i(b_i,\b_{-i}) &= \int_{0}^{b_i} t \pd{a_i(t,\b_{-i})}{t} dt \\
              &= \int_{0}^{b_i} t\cdot \frac{e^t \sum_{j\ne i} e^{b_j}}{\left(e^t+\sum_{j\ne i} e^{b_j}\right) ^2} dt.
    \end{align}

    Since $t\in [0,1]$, it holds that

    \begin{align}
        \frac{e^t \sum_{j\ne i} e^{b_j}}{e^t+\sum_{j\ne i} e^{b_j} } &= \left( \frac{1}{e^t} + \frac{1}{\sum_{j\ne i} e^{b_j}} \right)^{-1} \\
        &\ge \left( \frac{1}{1} + \frac{1}{n-1} \right)^{-1} \\
        &= \frac{n-1}{n}.
    \end{align}

    Combined with $e^t+\sum_{j\ne i} e^{b_j} \le en$, we have
    \begin{align}
        \frac{e^t \sum_{j\ne i} e^{b_j}}{\left(e^t+\sum_{j\ne i} e^{b_j}\right) ^2} \ge \frac{n-1}{en^2}.
        \label{eqn:thm3:estim1}
    \end{align}
    Hence,
        \begin{align}
              a_i(b_i,\b_{-i})p_i(b_i,\b_{-i}) &\ge \int_{0}^{b_i} t \frac{n-1}{en^2 } dt \\
              &= \frac{n-1}{2en^2}\cdot b_i^2.
              \label{eqn:auxiliary:payment}
    \end{align}
    
    Therefore, the difference of the total collected fee and miner revenue in $\tilde{M}$ is
    \begin{align}
        &~~~~\sum_{i=1}^n a_i(b_i,\b_{-i})\tilde{p}_i(b_i,\b_{-i}) - \tilde{r}(\b) \nonumber \\
        &= \sum_{i=1}^n a_i(b_i,\b_{-i})p_i(b_i,\b_{-i}) + \sum_{i=1}^n \theta_i(b_i,\b_{-i}) -\tilde{r}(\b) \label{eqn:similar_begin} \\
        &\ge \frac{n-1}{2en^2}\cdot \sum_{i=1}^{n} b_i^2 -\left(h \frac{\sum_{1\le i < j \le n}b_i^2 b_j^2}{c_\rho (n-1)} - \frac{h}{2}\sum_{i=1}^n b_i^2 \right) \nonumber  \\
        &\quad~~~ - \frac{1}{2} h \left(\sum_{i=1}^{n}b_i^2 - \frac{\sum_{1\le i < j \le n}{b_i^2b_j^2}}{c_\rho(n-1)}\right)  \\
        &= \frac{n-1}{2en^2}\cdot \sum_{i=1}^{n} b_i^2 - \frac{h}{2c_\rho (n-1)} \sum_{1\le i < j \le n}b_i^2 b_j^2 \\
        & \ge \frac{n-1}{2en^2}\cdot \sum_{i=1}^{n} b_i^2 - \sum_{i=1}^n b_i^2 \left(\frac{h}{4c_\rho (n-1)}\sum_{i=1}^n b_i^2\right) \\
        & \ge \frac{n-1}{2en^2}\cdot \sum_{i=1}^{n} b_i^2 - \sum_{i=1}^n b_i^2 \left(\frac{h}{4c_\rho (n-1)}\cdot n\right) \\
        &= \sum_{i=1}^{n} b_i^2 \cdot \left( \frac{n-1}{2en^2} - \frac{hn}{4c_\rho(n-1)}\right).
    \end{align}
So $\tilde{M}$ is budget feasible as long as $h \le  \frac{2c_\rho (n-1)^2}{en^3}=\Theta(c_\rho/n)$.

    \color{black}

For user individual rationality, 
\vspace{-4em}
\begin{small}
\begin{align}
    &~~~~~b_i-\tilde{p}_i(b_i,\b_{-i}) \nonumber \\
    &= b_i - p_i(b_i,\b_{-i}) - \frac{\theta_i(b_i,\b_{-i})}{a_i(b_i,\b_{-i})} \\
    &= \frac{1}{a_i(b_i,\b_{-i})} \Big[b_i\left(a_i(0,\b_{-i}) + \int_{0}^{b_i} \pd{a_i(t,\b_{-i})}{t} dt\right)\nonumber\\
    &\qquad\qquad\qquad\qquad - \int_{0}^{b_i} t\pd{a_i(t,\b_{-i})}{t}\Big] \\
    &= \frac{1}{a_i(b_i,\b_{-i})} \Big[b_i a_i(0,\b_{-i}) + \int_{0}^{b_i} (b_i - t)\pd{a_i(t,\b_{-i})}{t} dt \nonumber\\
    &\qquad\qquad\qquad\qquad + \frac{1}{2} h b_i^2 \left(\frac{\sum_{j\ne i} b_j^2}{c_\rho(n-1)}-1\right)\Big].
 \label{eqn:similar_end}
\end{align}
\end{small}

From Eq.\eqref{eqn:thm3:estim1}, we also have
\begin{align}
&~~~~~\int_{0}^{b_i} (b_i-t) \pd{a_i(t,\b_{-i})}{t} dt \nonumber\\
&= \int_{0}^{b_i} (b_i-t)\cdot \frac{e^t \sum_{j\ne i} e^{b_j}}{\left(e^t+\sum_{j\ne i} e^{b_j}\right) ^2} dt \\
&\ge \int_{0}^{b_i} (b_i-t)\cdot \frac{n-1}{en^2} dt \\
&=  \frac{n-1}{2en^2}\cdot b_i^2.
\end{align}

Therefore, when $h = \frac{2c_\rho (n-1)^2}{en^3}$, since $c_\rho\le 1$, we have
\begin{align}
&~~~~~\tilde{u}_i(b_i,\b_{-i};b_i)\nonumber\\
&= b_i a_i(0,\b_{-i})+ \int_{0}^{b_i} (b_i - t)\pd{a_i(t,\b_{-i})}{t} dt \nonumber\\
&\qquad + \frac{1}{2} h b_i^2 \left(\frac{\sum_{j\ne i} b_j^2}{c_\rho(n-1)}-1\right) \\
&\ge \frac{b_i}{en} + \int_{0}^{b_i} (b_i - t)\pd{a_i(t,\b_{-i})}{t} dt \nonumber\\
&\qquad+ \frac{1}{2} h b_i^2 \left(\frac{\sum_{j\ne i} b_j^2}{c_\rho(n-1)}-1\right) \\
&\ge b_i^2 \left(\frac{1}{en}+\frac{n-1}{2en^2}-\frac{h}{2}\right) \\
&= b_i^2 \left(\frac{1}{en}+\frac{n-1}{2en^2}-\frac{c_\rho (n-1)^2}{en^3}\right)\\
&\ge b_i^2 \left(\frac{1}{en}+\frac{n-1}{2en^2}-\frac{1}{en}\right) \\
&\ge 0.
\end{align}

So the UIR also holds for $h = \frac{2c_\rho (n-1)^2}{en^3}$.

Therefore, we have shown $h_*(n,c_\rho) \ge \frac{2c_\rho (n-1)^2}{en^3} = \Omega (c_\rho / n)$.

\color{black}

\subsubsection{Proof of U-SP}\label{app:proof:size1:usp}
~

We firstly consider the auxiliary mechanism $(\a,\p,r)$. Denote $w_{-i} = \sum_{j\ne i} e^{m b_j}$, then we have
\begin{align}
    a_i(b_i,\b_{-i}) &= \frac{e^{m b_i}}{e^{m b_i} + w_{-i}} \\
    p_i(b_i,\b_{-i}) &= b_i - \frac{e^{m b_i} + w_{-i}}{m e^{m b_i}} \ln \frac{e^{m b_i} + w_{-i}}{1+w_{-i}}.
\end{align}

The utility of identity $i$ is $u_i(b_i,\b_{-i}; v_i) = a_i(b_i,\b_{-i}) (v_i - p_i(b_i,\b_{-i}))$. We can also regard as it as a function of $(b_i, w_{-i}, v_i)$, then we have
\begin{align}
    \left.\pd{u_i}{w_{-i}}\right|_{b_i = v_i} = \frac{-\frac{1}{1+w_{-i}} + \frac{1}{e^{m}+w_{-i}}}{m} \le 0.
\end{align}
As injecting fake bids is equivalent to increasing $w_{-i}$ for identity $i$ in the auxiliary mechanism, it cannot increase identity $i$'s utility in the auxiliary mechanism.

However, the injected fake bids can influence user $i$'s utility in two more aspects, as:

\begin{itemize}
    \item The variation term.
    \item The utilities of fake identities.
\end{itemize}

We denote $h$ as the scaling parameter for total user number $n+l$, hence, we have
\begin{align}
    h\le \frac{2c_\rho(n+l-1)^2}{e(n+l)^3}.
\end{align}

Without fake identities, the expectation of $\theta_i(b_i,\b_{-i})$ is zero. Therefore, denote $\Omega = \{i\} \cup \{n+1,\cdots, n+l\}$, then $\Omega$ is the set of all identities that the user has access to, and we only need to show that
\begin{align}
    \E_{\b_{-i}\sim V_{-i}}\Bigg[&\sum_{j=n+1}^{n+l} a_j(b_j,\b_{-j}^+)p_j(b_j,\b_{-j}^+)\nonumber \\
    &+ \sum_{j\in \Omega}  \theta(b_j,\b_{-j}^+)\bigg] \ge 0.
    \label{eqn:usp:cond}
\end{align}

For a refined analysis of constants, we denote the Sybil attacker has real identity $i$, and submits fake bids with identities $n+1,\cdots, n+l$. We denote that: 
\begin{align*}
    \sigma = \sum_{j\le n, j\ne i} b_j^2, \\
    \sigma_\# = \sum_{j=n+1}^{n+l} b_j^2,
\end{align*}

Then $\sigma$ is a random variable independent to any $b_j$ for $j\in \Omega$, and it holds that 
\begin{equation*}
\E_{\b_{-i}\sim V_{-i}} [\sigma] = c_\rho (n-1).
\end{equation*}

From Eq.~(\ref{eqn:auxiliary:payment}) , we have
\begin{align}
\sum_{j=n+1}^{n+l} a_j(b_j,\b_{-j}^+)p_j(b_j,\b_{-j}^+) &\ge \frac{n+l-1}{2e(n+l)^2} \sum_{j=n+1}^{n+l} b_j^2 \\
& \ge \frac{1}{2e(n+l+2)}\cdot \sigma_\#.\label{eqn:usp:1}
\end{align}

For $j\in \Omega$,
we have
\begin{align}
    \theta(b_j,\b_{-j}^+) &= -\frac{1}{2}hb_j^2 \left(\frac{\sum_{t\le n+l, t\ne j} b_t^2}{c_\rho(n+l-1)}-1\right) \\
    &= -\frac{1}{2}hb_j^2 \left(\frac{\sigma + \sigma_\# + b_i^2 - b_j^2}{c_\rho(n+l-1)}-1\right).
\end{align}

Here, $\sigma$ is the only random variable in the expression, and
\begin{align}
&~~~~~\E_{\b_{-i}\sim V_{-i}}\left[\sum_{j\in \Omega}  \theta(b_j,\b_{-j}^+)\right]\nonumber\\
&=\E_{\b_{-i}\sim V_{-i}}\left[-\sum_{j\in\Omega} \frac{1}{2}hb_j^2 \left(\frac{\sigma + \sigma_\# + b_i^2 - b_j^2}{c_\rho(n+l-1)}-1\right)\right]\\
&=-\frac{h}{2}\sum_{j\in\Omega}b_j^2 \cdot\left(\frac{\E_{\b_{-i}\sim V_{-i}}\left[\sigma\right] + \sigma_\# + b_i^2}{c_\rho(n+l-1)}-1\right)
\nonumber\\
&\qquad + \frac{h}{2}\sum_{j\in \Omega}\frac{b_j^4}{c_\rho(n+l-1)} \\
&= -\frac{h}{2}(\sigma_\# + b_i^2)\cdot\frac{\sigma_\# + b_i^2-c_\rho l}{c_\rho(n+l-1)}+ \frac{h}{2}\cdot\frac{\sum_{j\in \Omega}b_j^4}{c_\rho(n+l-1)}
\\
&= \frac{h}{2c_\rho(n+l-1)}\Bigg(-\sigma_\#^{2} - 2b_i^2 \sigma_\# -b_i^4 \nonumber\\
&\qquad\qquad\qquad\quad~~~~+c_\rho l (\sigma_\# + b_i^2) + b_i^4 + \sum_{j=n+1}^{n+l} b_j^4\Bigg)\\
&\ge \frac{h}{2c_\rho(n+l-1)}\left(-\sigma_\#^{2} - 2b_i^2 \sigma_\#\right)\\
&\ge \frac{n+l-1}{e(n+l)^3}\left(-\sigma_\#^{2} - 2 \sigma_\#\right).
\label{eqn:usp:2}
\end{align}

From Eqs.~(\ref{eqn:usp:1},\ref{eqn:usp:2}), Eq.~\eqref{eqn:usp:cond} is implied by
\begin{align}
    \frac{1}{2e(n+l+2)}\sigma_\# \ge \frac{n+l-1}{e(n+l)^3}(\sigma_\#^2 +2 \sigma_\#).
\end{align}

Noticing that $\forall b_i\le 1$, so $\sigma_\# \le l$. We only need
\begin{align}
    (n+l)^3 \ge 2(n+l+2)(n+l-1)(l+2).
\end{align}

Now for any $C\in [0,1)$, we assume $n\ge \frac{6C+5}{1-C^2}$, then denote $\varphi = \frac{l}{n} \le C$, and we have

\begin{align}
    &(1+\varphi)^3 n^3 - 2((1+\varphi)n+2)((1+\varphi)n-1)(\varphi n+2) \\
    &=(1+\varphi)((1-\varphi^2)n-(6\varphi+4))n - 4) n + 8 .
\end{align}

Since $n\ge \frac{6C+5}{1-C^2}$, we see that $n\ge 5$, and $(1-\varphi^2)n - (6\varphi + 4) \ge (1-C^2)n - (6C + 4) \ge 1$. Hence,
\begin{align}
&(1+\varphi)((1-\varphi^2)n-(6\varphi+4))n - 4) n + 8 \\
&\ge 1\cdot (1\cdot n - 4) n +8\\
&\ge 13 \\
&> 0.
\end{align}

Now we prove that the mechanism is $(C,\frac{6C+5}{1-C^2})$-U-SP for any $C\in [0,1)$.

\color{black}

\subsection{Proof of Theorem~\ref{thm:h:value:k}} \label{app:proof:param:k}
From the auxiliary mechanism method, we have the U-BNIC and 1-SCP properties as long as the allocation rule is monotone. Hence, our proof for Theorem consists of 3 parts:
\begin{itemize}
    \item Proof of monotonicity of allocation rule.
    \item Proof of UIR and BF.
    \item Proof of U-SP.
\end{itemize}

\subsubsection{Proof of Monotonicity of Allocation Rule.}

For monontonicity, we just need to show that for any $\b_{-i}$, $a_i(b_i,\b_{-i}) \ge a_i(b'_i,\b_{-i})$ if $b_i \ge b'_i$.

If $1\le n \le k$, we have $a_i(b_i, \b_{-i}) = a_i(b'_i, \b_{-i}) = 1$, so the monotonicity holds. Now we consider $n>k$.

For convenience denote $w_i = e^{m b_i}$ and without loss of generality we assume $i=n$. Now For any map $X: \mathbb{N}_+ \to [0,1)$, vector $\mathbf{t}$ s.t. $0=t_0<t_1<t_2<\cdots<t_{n-1}<t_n=1$, $B_0 \subseteq [0,1)$ and $k \le n-1$, define an algorithm as Algorithm~\ref{Algo:draw}:

\setlength{\textfloatsep}{8pt}
\def\la{\leftarrow}
\begin{algorithm}[htb] 
\caption{$Draw(X, \mathbf{t}, B_0, k)$}
\label{Algo:draw}
\begin{algorithmic}[1]
\State Input $X, \mathbf{t}, B_0, k$;
\State $B \la B_0$;  $S \la \emptyset$; 
\State $u \la 1$;  $v \la 1$;
\While{$v\le k$}
    \State $x \la X(u)$;
    \If{$x \notin B$}
        \State Find $i$ s.t. $x \in [t_{i-1} , t_i)$;
        \State $S \la S \cup i$;     
        \State $v\la v+1$;
    \EndIf
    \State $B \la B \cup [t_{x-1}, t_x)$;
    \State $u\la u+1$;
\EndWhile
\State Output $S$;
\end{algorithmic}
\end{algorithm}
\def\t{\mathbf{t}}
Now we denote $W_i = \frac{w_i}{\sum_{i=1}^n w_i}$ for $1\le i \le n$, and 

\[
    W'_i = \begin{cases}
    W_i,\quad & i\le n-1 \\
    \frac{e^{m b'_{n}}}{\sum_{i=1}^n w_i}, \quad & i=n. 
    \end{cases}
\]

Then, we define $\t, \t'$ as
\begin{align*}
    t_i &= \sum_{j=1}^i W_j,\quad 0\le i \le n \\
    t'_i &= \begin{cases}
        \sum_{j=1}^i W'_j,\quad & 0\le i \le n \\
        1,\quad & i = n+1.
    \end{cases}
\end{align*}

Then when $X$ is a \emph{i.i.d.}\ uniform random sequence in $[0,1)$, we can see that

\begin{itemize}
    \item $Draw(X, \mathbf{t}, \emptyset, k)$ randomly samples $k$ items among $\{1, \cdots, n\}$ with weights $\{W_i\}$ without replacement.
    \item $Draw(X, \mathbf{t}', [t'_{n}, 1), k)$ randomly samples $k$ items among $\{1, \cdots, n\}$ with (relative) weights $\{W'_i\}_{i\in [n]}$ without replacement.
\end{itemize}

In fact, Algorithm~\ref{Algo:draw} performs random drawing without replacement in the following way. Every round an item in $\{1, \cdots, n\}$ is drawn, and in the second scenario the total weights is less than 1 so that a ``placeholder'' item $n+1$ with weight $1-t'_n$ is added. If the item is already drawn or is the ``placeholder'', we draw again; other wise, we finalize it and add it to $S$.

In the rest of the proof, we prove that
\begin{align*}
&\Pr[n \in Draw(X, \mathbf{t}, \emptyset, k)]\\
&\qquad \ge  \Pr[n \in Draw(X, \mathbf{t}', [t'_{n}, 1), k)]
\end{align*}
by actually showing
\[
n \in Draw(X, \mathbf{t}', [t'_{n}, 1), k)  \implies  n \in Draw(X, \mathbf{t}, \emptyset, k).
\]

In fact, assume $n \in Draw(X, \mathbf{t}', [t'_{n}, 1), k)$. By the time the drawing process $Draw(X, \mathbf{t}', [t'_{n}, 1), k)$ stops, if no value $X(u) \in [t'_{n}, 1)$ is obtained, then $Draw(X, \mathbf{t}, \emptyset, k)$ has exactly the same outcome, so it also contains $n$.

If in some round $X(u) \in [t'_{n}, 1)$ is obtained in $Draw(X, \mathbf{t}', [t'_{n}, 1), k$, we consider the first round that happens. 

Before that round, $Draw(X, \mathbf{t}, \emptyset, k)$ have the same outcome, so it is not stopped either. In that round, $Draw(X, \mathbf{t}, \emptyset, k)$ adds $n$ to $S$, so $n \in Draw(X, \mathbf{t}, \emptyset, k)$.

So we have shown that $n \in Draw(X, \mathbf{t}', [t'_{n}, 1), k)  \implies  n \in Draw(X, \mathbf{t}, \emptyset, k)$, implying the monotonicity of the allocation rule.

\subsubsection{Proof of UIR and BF.}

From Lemma~\ref{lem:myerson}, similar to the case of block size 1, we essentially need to derive a lower bound on $\pd{a_i(t,\b_{-i})}{t}$, in order to lower bound the total payment. Therefore, we only need to analyze the partial derivative of $\delta_t(i;j)$ on $w_i$.

When we fix $j$ and $\b_{-i}$ (i.e., $\mathbf{w}_{-i}$), we can regard $\delta_t(i;j)$ as a function of $w_i$. Here we make a notation of $X_s$ for $0\le s \le k-1$ as

\begin{align}
    X_s = W-w_i - \sum_{z=1}^{s} w_{j_z},
\end{align}

then $X_s$ is a constant.

From Eq.~(\ref{eqn:delta_component}) we get (note that $W-\sum_{z=1}^{s}w_{j_s} = X_s + w_i$)

\begin{align} \label{eqn:delta_comp}
    \pd{\delta_t(i;j)}{w_i} = \left(\prod_{s=1}^{t-1} w_{j_s}\right) \cdot \pd{}{w_i}\frac{w_i}{\prod_{s=0}^{t-1}(X_s+w_i)},
\end{align}

and 
\begin{small}
\begin{align}
    &~~~~~\pd{}{w_i}\frac{w_i}{\prod_{s=0}^{t-1}(X_s+w_i)} \nonumber\\
    &= \pd{}{w_i}\left(w_i \cdot \prod_{s=0}^{t-1}\frac{1}{X_s+w_i}\right) \\
    &= \prod_{s=0}^{t-1}\frac{1}{X_s+w_i} + w_i\cdot\pd{}{w_i}\prod_{s=0}^{t-1}\frac{1}{X_s+w_i} \\
    &= \prod_{s=0}^{t-1}\frac{1}{X_s+w_i} - w_i \cdot \left( \sum_{s=0}^{t-1} \frac{1}{X_s+w_i}\right) \cdot \left( \prod_{s=0}^{t-1} \frac{1}{X_s+w_i}\right) \\ 
    &= \left(\prod_{s=0}^{t-1}\frac{1}{X_s+w_i}\right) \cdot\left(1- w_i \sum_{s=0}^{t-1} \frac{1}{X_s+w_i}\right)
\end{align}
\end{small}
Notice that $X_s+w_i$ is a sum of $(n-s)$ weights, each one no less than $1$, so $\frac{1}{X_s+w_i} \le \frac{1}{n-s} \le \ln\frac{n-s}{n-s-1}$, and $w_i=e^{mb_i}\le e^m$. Therefore,

\begin{align}
    1- w_i \sum_{s=0}^{t-1} \frac{1}{X_s+w_i} &\ge 1 - e^m \sum_{s=0}^{t-1}\ln\frac{n-s}{n-s-1} \\
    &= 1-e^m \ln \frac{n}{n-t}.
\end{align}

Denote
\begin{align}
    D(m, \lambda) = 1- e^m \ln \frac{\lambda}{\lambda-1},
\end{align}

then $\forall \frac{n}{k}<\frac{e}{e-1}$, $\exists m>0$ s.t. $D\left(m,\frac{n}{k}\right)>0$.

Therefore, from Eq.~(\ref{eqn:delta_comp}) we have 

\begin{align}
    &\pd{\delta_t(i;j)}{w_i} \ge  D\left(m,\frac{n}{k}\right) \left(\prod_{s=1}^{t-1} w_{j_s}\right)  \left(\prod_{s=0}^{t-1}\frac{1}{X_s+w_i}\right) \\
    =& D\left(m,\frac{n}{k}\right) \cdot \frac{w_{j_1}}{X_0+w_i} \cdot \frac{w_{j_2}}{X_1+w_i} \cdot \cdots \cdot \frac{1}{X_{t-1}+w_i}. 
\end{align}

We notice that $\frac{w_{j_1}}{X_0+w_i} \cdot \frac{w_{j_2}}{X_1+w_i} \cdot \cdots \cdot \frac{w_{j_{t-1}}}{X_{t-2}+w_i}$ is just the probability that the sampling outcome of the first $t-1$ rounds are $(j_1,j_2,\cdots,j_{t-1})$, denoted as $P(j_{[t-1]})$. Furthermore, from $X_{t-1}+w_i \le e^m\cdot n$, we have

\begin{align}
    \pd{\delta_t(i;j)}{w_i} \ge \frac{D\left(m,\frac{n}{k}\right)}{e^m n} P(j_{[t-1]}).
\end{align}

Therefore from Eq.~(\ref{eqn:delta_decomposition}):

\begin{align}
    \pd{\delta_t(i)}{w_i} &= \sum_{j\in J_t(i)} \pd{\delta_t(i;j)}{w_i} \\
    &\ge \frac{D\left(m,\frac{n}{k}\right)}{e^m n} \sum_{j\in J_t(i)} P(j_{[t-1]}).
\end{align}

For $j\in J_t(i)$, we observe that $j_{[t-1]}$ iterates through all $(t-1)$-permutations of $[n]$ that does not contain element $i$. Therefore, $\sum_{j\in J_t(i)} P(j_{[t-1]})$ is the probability that $i$ is not chosen in the first $(t-1)$ rounds.

To compute the probability that $i$ is not chosen in the first $(t-1)$ rounds, we consider each round. In each round, there are at least $(n-k)$ users each with weight at least $1$, and user $i$ has weight at most $e^m$, so $i$ is chosen with probability at most $\frac{e^m}{n-k}$. Therefore for $t$ rounds, the probability that $i$ is not ever chosen is at most $\left(1-\frac{e^m}{n-k}\right)^t \ge \left(1-\frac{e^m}{n-k}\right)^k = (1-o(1)) e^{-\frac{e^m k}{n-k}}$. That implies:

\begin{align}
    \pd{\delta_t(i)}{w_i} \ge (1-o(1)) \frac{D\left(m,\frac{n}{k}\right)}{e^m n} e^{-\frac{e^m k}{n-k}},
\end{align}

so
\begin{small}
\begin{align}
    \pd{a_i(b_i,\b_{-i})}{b_i} &= \pd{w_i}{b_i} \cdot \pd{a_i(b_i,\b_{-i})}{w_i} \\
    &= me^{mb_i} \cdot \sum_{t=1}^k\pd{\delta_t(i)}{w_i} \\
    &\ge  me^{mb_i} \cdot \sum_{t=1}^k \left((1-o(1)) \frac{D\left(m,\frac{n}{k}\right)}{e^m n} e^{-\frac{e^m k}{n-k}}\right) \\ 
    &= \frac{k}{n} \left((1-o(1)) me^{mb_i}  \frac{D\left(m,\frac{n}{k}\right)}{e^m } e^{-\frac{e^m k}{n-k}}\right)
\end{align}
\end{small}
For any fixed $\lambda_0 > \frac{e}{e-1}$, let $\lambda = \frac{n}{k}$. If $\lambda\ge \lambda_0$, let 

\begin{align}
    m = m_\#(\lambda_0) = \min\left\{\frac{1}{2} \ln \frac{1}{\ln\frac{\lambda_0}{\lambda_0-1}},1\right\}
\end{align}

be a constant. 
{Then we have:
\begin{align}
    D(m,\lambda) = \max\left\{1-\sqrt{\ln\frac{\lambda}{\lambda-1}},1-e\ln\frac{\lambda}{\lambda-1}\right\}.
\end{align}}

Because $m_\#(\cdot)$ and $D(m,\cdot)$ are non-decreasing, we have
\begin{small}
\begin{align}
\pd{a_i(b_i,\b_{-i})}{t} &\ge \frac{k}{n} \left((1-o(1)) me^{mb_i}  \frac{D\left(m,\frac{n}{k}\right)}{e^m } e^{-\frac{e^m k}{n-k}}\right) \\
&\ge \frac{k}{n} \left((1-o(1)) m   \frac{D\left(m,\lambda_0\right)}{e } e^{-\frac{e^m}{\lambda_0 - 1}}\right).
\end{align}
\end{small}
Because $m,\lambda_0,D(m,\lambda_0)$ are all positive constants, we get
\begin{align}
    \pd{a_i(b_i,\b_{-i})}{t} \ge \frac{k}{n} f(\lambda_0)(1-o(1)).
\end{align}

Therefore, from Lemma~\ref{lem:myerson} and $p_i(0,\b_{-i})=0$, we get
\begin{align}
    a_i(b_i,\b_{-i})p_i(b_i,\b_{-i}) &= \int_{0}^{b_i} t \pd{a_i(t,\b_{-i})}{t} dt \\
      &\ge \int_{0}^{b_i} t \frac{k}{n} f(\lambda_0)(1-o(1)) dt \\
      & = f(\lambda_0) \Theta\left(\frac{k}{n}b_i^2 \right). \label{eqn:k:payment}
\end{align}

\begin{figure}[tb]
    \centering
    \includegraphics[width = 0.6\textwidth]{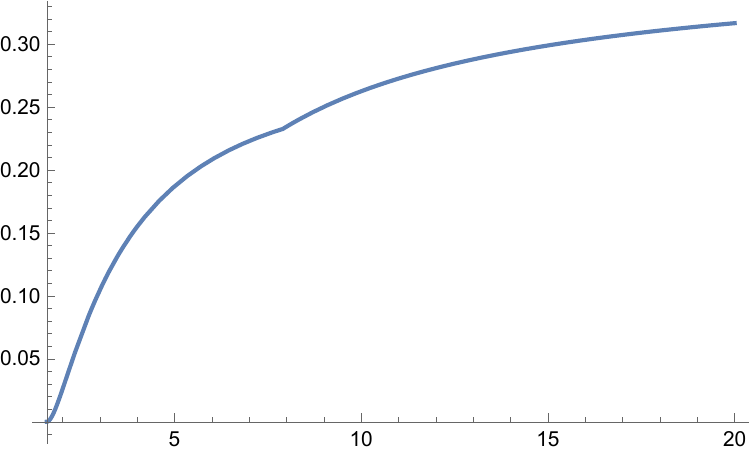}
    \caption{The plot of $f(\cdot)$.}
    \label{fig:thm4_plot}
\end{figure}

Here, the expression of $f(\cdot)$ is given by
\begin{align}
    f(\lambda)=\frac{m_\#(\lambda) D(m_\#(\lambda),\lambda)}{e^{m_\#(\lambda)}}
\end{align}
and can be plotted as in Figure~\ref{fig:thm4_plot}. It can be noticed that $f(\cdot)$ is monotonic increasing and
\begin{align}
\lim_{\lambda\to +\infty} f(\lambda) = \frac{1}{e}.
\end{align}
\color{black}

Then, when we let $\tilde{p}_i(b_i,\b_{-i}) = p_i(b_i,\b_{-i})+ \frac{\theta_i(b_i,\b_{-i})}{a_i(b_i,\b_{-i})}$ while using the variation term of Eqs.~(\ref{eqn:theta}-\ref{eqn:r}), similar to the argument of Eqs.~(\ref{eqn:similar_begin}-\ref{eqn:similar_end}), we can get the UIR and BF properties.

\textbf{Detailed constant analysis.}

From the assumption that $n\ge 30$ and $n>\frac{e}{e-1}k$, we have $n-k\ge 3>e\ge e^m$. Since
\begin{align*}
    (1-\alpha)^k = (1+\frac{\alpha}{1-\alpha})^{-k} \ge e^{-\frac{k\alpha}{1-\alpha}}, ~~\alpha\in[0,1)
\end{align*}

we have
\begin{align}
    \left(1-\frac{e^m}{n-k}\right)^k \ge e^{-\frac{e^m k}{n-k-e^m}}.
\end{align}

Then we get that
\begin{align}
    \pd{\delta_t(i)}{w_i} \ge  \frac{D\left(m,\frac{n}{k}\right)}{e^m n} e^{-\frac{e^m k}{n-k-e^m}},
\end{align}

Since $n\ge 30$, we have
\begin{align}
\pd{a_i(b_i,\b_{-i})}{t} &\ge \frac{k}{n} \left(me^{mb_i}  \frac{D\left(m,\frac{n}{k}\right)}{e^m } e^{-\frac{e^m k}{n-k-e^m}}\right) \\
&> \frac{k}{n} \left(me^{mb_i}  \frac{D\left(m,\frac{n}{k}\right)}{e^m } e^{-\frac{e^m k}{(n-3)-k}}\right) \\
&\ge \frac{k}{n} \left(me^{mb_i}  \frac{D\left(m,\frac{n}{k}\right)}{e^m } e^{-\frac{e^m k}{0.9n-k}}\right) \\
&\ge \frac{k}{n} \left( m   \frac{D\left(m,\lambda_0\right)}{e^m } e^{-\frac{e}{0.9\lambda_0 - 1}}\right)\\
&= \frac{k}{n} \left( f(\lambda_0) e^{-\frac{e}{0.9\lambda_0 - 1}}\right).
\end{align}

Here, we can let 
\begin{align}
    g(\lambda) = ef(\lambda)e^{-\frac{e}{0.9\lambda-1}},
\end{align}

then $g$ is increasing and 
\begin{align}
    \lim_{\lambda_\to \infty} g(\lambda)=1.
\end{align}

It holds that
\begin{align}
a_i(b_i,\b_{-i})p_i(b_i,\b_{-i}) &= \int_{0}^{b_i} t \pd{a_i(t,\b_{-i})}{t} dt\\
&\ge \frac{g(\lambda)}{2e}\cdot\frac{k}{n} b_i^2.
\end{align}

Similar to the argument of Eqs.~(\ref{eqn:similar_begin}-\ref{eqn:similar_end}), the UIR and BF hold when 
\begin{align}
h_* = g(\lambda_0)\cdot\frac{2kc_\rho(n-1)}{en^2}.
\end{align}

\color{black}

\subsubsection{Proof of U-SP.}

Since the variation term of the mechanism for block size $k$ has the same form as block size 1, we can show that the effects of the variation term do not influence the U-SP property in the same way as Appendix~\ref{app:proof:size1:usp}. Furthermore, because fake transactions have zero valuation and non-negative payment, we only need to prove the following proposition:

\begin{proposition}
    For any user $i$, adding a fake bid will not benefit her utility in the Auxiliary Mechanism $M$ for block size $k$.
\end{proposition}

Actually when $p_i(0, \b_{-i})=0$, the payment function in Myerson's Lemma has an equivalent form \citep{shi}:
\begin{small}
\begin{align}
    a_i(b_i, \b_{-i}) p_i(b_i, \b_{-i}) = a_i(b_i, \b_{-i}) b_i - \int_{0}^{b_i} a_i(t, \b_{-i}) dt.
\end{align}
\end{small}
Therefore, the utility of user $i$ when truthfully bidding in the auxiliary mechanism is:

\begin{align}
    u_i(b_i, \b_{-i}; b_i) = \int_{0}^{b_i} a_i(t, \b_{-i}) dt.
\end{align}

Now we only need to show that when we inject a fake transaction, the probability that a user (bidding arbitrary $t$) is confirmed would not increase, as the following lemma:

\begin{lemma} \label{lem:sampling}
    In a weighted random sampling without replacement, if we add a new item, the probability that any already existing item is chosen does not increase.
\end{lemma}

\proof{Proof.}

Consider the Algorithm~\ref{Algo:draw}. Now we assume there are $n$ items $1, \cdots, n$ with weights $w_1, \cdots, w_n$ and without loss of generality we assume $\sum_{i=1}^{n} = 1$, and define $t_j = \sum_{i=1}^j w_i$, then when $X$ is a \emph{i.i.d.}\ uniform random sequence in $[0,1)$, we can see that

\begin{itemize}
    \item $Draw(X, \mathbf{t}, \emptyset, k)$ randomly samples $k$ items among $\{1, \cdots, n\}$ without replacement.
    \item $Draw(X, \mathbf{t}, [t_{n-1}, 1), k)$ randomly samples $k$ items among $\{1, \cdots, n-1\}$ without replacement.
\end{itemize}

We recall that Algorithm~\ref{Algo:draw} performs random drawing without replacement in the following way. Every round an item in $\{1, \cdots, n\}$ is drawn. If the item is already drawn or does not exist, we draw again; otherwise, we finalize it and add it to $S$.

In the rest of the proof, we prove that $\forall i \in \{1, \cdots, n-1\}$, 
\begin{align*}
&\Pr[i \in Draw(X, \mathbf{t}, [t_{n-1}, 1), k)]  \\
&\quad \ge  \Pr[i \in Draw(X, \mathbf{t}, \emptyset, k)]
\end{align*}
by actually showing
\[
i \in Draw(X, \mathbf{t}, \emptyset, k) \: \implies \: i \in Draw(X, \mathbf{t}, [t_{n-1}, 1), k).
\]
In fact, for fixed $X$, because 
\[
P \subseteq Q \quad \implies \quad P \cup R \subseteq Q \cup R,
\]
after each round of drawing, the $B$ in $Draw(X, \mathbf{t}, \emptyset, k)$ is always a subset of the $B$ in $Draw(X, \mathbf{t}, [t_{n-1}, 1), k)$. Therefore, $Draw(X, \mathbf{t}, [t_{n-1}, 1), k)$ would draw no less rounds than $Draw(X, \mathbf{t}, \emptyset, k)$.

Besides, we see that when $i\neq n$, $i$ is drawn if and only if a $x\in [t_{i-1}, t_i)$ appears by the time the drawing completes, so if $i \in Draw(X, \mathbf{t}, \emptyset, k)$, we indeed have $i \in Draw(X, \mathbf{t}, [t_{n-1}, 1), k)$.

Hence we have shown that $\Pr[i \in Draw(X, \mathbf{t}, [t_{n-1}, 1), k)] \ge \Pr[i \in Draw(X, \mathbf{t}, \emptyset, k)]$.

\hfill $\square$

\endproof

From Lemma~\ref{lem:sampling} we prove that our TFM for block size $k$ is U-SP. 

{For the corresponding constants, we note that in the mechanism of block size $1$, the expected payment of a user bidding $b_i$ is lower bounded by $\sim \frac{b_i^2}{2en}$ and $h\lesssim \frac{2c_\rho}{en}$, and it is $\big(C,O(\frac{1}{1-C})\big)$-U-SP for any $C<1$. In the mechanism of block size $k$, the expected payment of user $i$ is lower bounded by $g({\lambda_0})k\cdot \frac{b_i^2}{2en}$, and $h\lesssim g({\lambda_0})k\cdot\frac{2c_\rho}{en}$. Hence, it can be shown in a similar way that Mechanism~\ref{mec:sizek} is also $\big(C,O(\frac{1}{1-C})\big)$-U-SP for any $C<1$.}

\endproof

\subsection{Proof of Theorem~\ref{thm:almost:mic}} \label{app:proof:almost:mic}

Without loss of generality, we can assume the miner will conduct the deviation in this way: in Stage 1 the miner deletes transactions one by one, and then in Stage 2 inject fake transactions one by one. Then we introduce two lemmas before proving the theorem: firstly analyze the robustness of the miner revenue function $\tilde{r}$, then upper bound the advantage the miner may gain in each stage.

\subsubsection{Robustness analysis of the miner revenue function.}

Firstly, we assume that the mean of $b_i^2$ is close to $c_\rho = \Theta(1)$, which holds with high probability with large $n$ and $\Delta = o(n)$. Here we define $H = L c_\rho$, then we prove the following lemma, showing that as long as the average of $\{b_i^2\}$ is close to $c_\rho$, adding or deleting a transaction would not have a significant impact on the miner revenue:

\begin{lemma} \label{lem:mic:robustness}
    If $\left|\frac{\sum_{i=1}^n b_i^2}{n} - c_\rho\right| < \delta$, and recall that
    \begin{align}
        \tilde{r}(\b) = \frac{Hk}{2n} \left(\sum_{i=1}^{n}b_i^2 - \frac{\sum_{1\le i < j \le n}{b_i^2b_j^2}}{c_\rho(n-1)}\right),
    \end{align}
    then for $n\ge 3$, there exists a constant $C_{L\ref{lem:mic:robustness}}$ s.t. $\forall j \in [n]$,
    \begin{align}
        \left|\tilde{r}(\b_{-j}) - \tilde{r}(\b)\right| \le C_{L\ref{lem:mic:robustness}}\delta \cdot \frac{Hk}{c_\rho n}.
    \end{align}
\end{lemma}

\proof{Proof.}
Without loss of generality we assume $j=n$. Then, we compute that
\begin{small}
\begin{align}
&\frac{\tilde{r}(\b_{-n}) - \tilde{r}(\b)}{\frac{1}{2} H k} \nonumber\\
&= \frac{\sum_{i=1}^{n-1} b_i^2}{n-1} - \frac{\sum_{1\le i < j \le n-1}b_i^2 b_j^2}{c_\rho (n-2) (n-1)} \nonumber\\
&\qquad\quad- \frac{\sum_{i=1}^{n} b_i^2}{n} + \frac{\sum_{1\le i < j \le n}b_i^2 b_j^2}{c_\rho (n-1) n} \\
&= \frac{1}{n(n-1)} \sum_{i=1}^{n-1} b_i^2 - \frac{b_n^2}{n} \nonumber\\
&\qquad\quad- \frac{2 \sum_{1\le i < j \le n-1}b_i^2 b_j^2}{c_\rho n(n-1)(n-2)} + \frac{b_n^2 \sum_{i=1}^{n-1} b_i^2}{c_\rho n(n-1)} \\
&= \frac{1}{n(n-1)} \left(\sum_{i=1}^{n-1} b_i^2 - \frac{2 \sum_{1\le i < j \le n-1}b_i^2 b_j^2}{c_\rho (n-2)}\right) \nonumber\\
&\qquad\quad+ \frac{b_n^2}{n}\left(\frac{\sum_{i=1}^{n-1} b_i^2}{c_\rho (n-1)}- 1\right)
\end{align}
\end{small}
From the assumption we see that $\left|\frac{\sum_{i=1}^{n-1} b_i^2}{c_\rho (n-1)}- 1 \right|= O(\delta / c_\rho)$ and $b_n^2 \le 1$, we have 
\begin{align}
\left|\frac{b_n^2}{n}\left(\frac{\sum_{i=1}^{n-1} b_i^2}{c_\rho (n-1)}- 1\right)\right| = O(\delta / c_\rho n).
\end{align}
Now we only need to prove that $\left|\sum_{i=1}^{n-1} b_i^2 - \frac{2 \sum_{1\le i < j \le n-1}b_i^2 b_j^2}{c_\rho (n-2)}\right| = O(\delta n / c_\rho)$.

In fact, we notice that 
\begin{align}
2 \sum_{1\le i < j \le n-1}b_i^2 b_j^2 = \left(\sum_{i=1}^{n-1} b_i^2 \right)^2 - \sum_{i=1}^{n-1} b_i^4.
\end{align}
Hence, 
\begin{align}
&~~~~\left|\sum_{i=1}^{n-1} b_i^2 - \frac{2 \sum_{1\le i < j \le n-1}b_i^2 b_j^2}{c_\rho (n-2)}\right| \nonumber\\
&= \left| \sum_{i=1}^{n-1} b_i^2 - \frac{\left(\sum_{i=1}^{n-1} b_i^2 \right)^2 - \sum_{i=1}^{n-1} b_i^4} {c_\rho (n-2)} \right| \\
&= \left|\sum_{i=1}^{n-1} b_i^2 \cdot \left(1 - \frac{\sum_{i=1}^{n-1} b_i^2}{c_\rho(n-2)}\right) - \frac{\sum_{i=1}^{n-1} b_i^4}{c_\rho(n-2)}\right| \\
&\le \sum_{i=1}^{n-1} b_i^2 \cdot \left|\left(1 - \frac{\sum_{i=1}^{n-1} b_i^2}{c_\rho(n-2)}\right)\right| + \left| \frac{\sum_{i=1}^{n-1} b_i^4}{c_\rho(n-2)}\right| \\
&= O(n) \cdot O(\delta/c_\rho) + O(1) \\
&= O(\delta n / c_\rho).
\end{align}
\hfill $\square$
\endproof

\subsubsection{Advantage analysis of \MTD.} \label{app:adv:mtd}

Now we analyze the advantage in revenue the miner can get after conducting all the transaction deletions. Intuitively, we first show that for large $n$ and $\delta = \omega(\Delta / n)$, the condition $\left|\frac{\sum_{i=1}^n b_i^2}{n} - c_\rho\right| < \delta$ holds with high probability at each step in the $\Delta = o(n)$ deletions. Then we use Lemma~\ref{lem:mic:robustness} to bound the advantage.

First, we deduce the following concentration lemma.

\begin{lemma}
    For any \emph{i.i.d.}\ random variable $\{b_i\} $ in $[0,1]$ satisfying $\E[b_i^2] = c_\rho$ and given $\delta > 0$, we have
    \begin{align}
    \Pr\left[ \left|\frac{\sum_{i=1}^n b_i^2}{n} - c_\rho\right| \ge \frac{\delta}{2} \right] \le 2\exp\left(-\frac{\delta^2 n}{2}\right).
    \end{align}
\end{lemma}

\proof{Proof.}
Hoeffding's inequality \citep{hoeffding1963probability} states that when $\{x_i\}$ are independent random variables with $l_i \le x_i \le r_i$, and denoting $s_n = \sum_{i=1}^n x_i$, it holds that

\begin{align}
    \Pr[|s_n - \E[s_n]| \ge t] \le 2\exp\left(-\frac{2t^2}{\sum_{i=1}^n (r_i - l_i)^2}\right).
\end{align}

Let $x_i = b_i^2, l_i = 0, r_i = 1, t = \frac{\delta n}{2}$, then $\E[s_n] = c_\rho n$ and we get: 

\begin{align}
    \Pr\left[\left|\sum_{i=1}^n b_i^2 - c_\rho n\right| \ge \frac{\delta n}{2}\right] \le 2\exp\left(-\frac{\frac{1}{2}\delta^2 n^2}{n}\right),
\end{align}

i.e.,

\begin{align}
    \Pr\left[ \left|\frac{\sum_{i=1}^n b_i^2}{n} - c_\rho\right| \ge \frac{\delta}{2} \right] \le 2\exp\left(-\frac{\delta^2 n}{2}\right).
\end{align}

\hfill $\square$
\endproof

Then we upper bound the impact of transaction deletion on the average of $\{b_i^2\}$. Without loss of generality, we assume the miner deletes $b_n, b_{n-1}, \cdots, b_{n-t+1}$ sequentially\footnote{Notice that the argument holds for any subset and order of deletion, via  re-permutations of $\{b_i\}$.} for $t\le \Delta = o(n)$, and we want that $\left|\frac{\sum_{i=1}^n b_i^2}{n} - \frac{\sum_{i=1}^{n-t+1} b_i^2}{n-t+1} \right| \le \frac{\delta}{2}$.

In fact, we have $\b_i \in [0,1]$, so 

\begin{align}
    \frac{\sum_{i=1}^{n} b_i^2}{n-t+1} \le \frac{\sum_{i=1}^{n-t+1} b_i^2}{n-t+1} \le \frac{\left(\sum_{i=1}^{n} b_i^2\right) - t}{n-t+1}
\end{align}

Therefore, for $t \le \Delta$, There exists constant $C_{MIC1}$ s.t. for $n \ge C_{MIC1}\frac{\Delta}{\delta}$ and $n-t+1 \ge 3$, we indeed have 

\begin{align}
    \left|\frac{\sum_{i=1}^n b_i^2}{n} - \frac{\sum_{i=1}^{n-t+1} b_i^2}{n-t+1} \right| \le \frac{\delta}{2}.
\end{align}

Combined with Lemma~\ref{lem:mic:robustness}, we deduce that when $n \ge C_{MIC1}\frac{\Delta}{\delta}$, with probability at least $1- 2 \exp\left( \delta^2 n / 2\right)$, the advantage of \MTD~ with $t$ deletions is at most $O(\delta)\cdot \frac{Hkt}{c_\rho n}$. We also see that when we require $\delta \in (0, 1]$, then because $\Delta \ge 1$, $n-t+1\ge 3$ is guaranteed. Formally:

\begin{theorem}[Our mechanism is almost-\{\MTD\}-proof]\label{thm:adv:deletion}
Denote $B^-_{\Delta} (\b)$ as the family of all bidding vectors generated via deleting at most $\Delta$ bids from $\b$. Then for universal constant $C_{MIC1}>0$ and $\delta \in (0, 1], n \ge C_{MIC1}\frac{\Delta}{\delta}$, we have
\begin{small}
\begin{align}
    &\Pr_{\b}\left[\sup_{\b' \in B_{\Delta}^-(\b)}\left(\tilde{r}(\b') - \tilde{r}(\b) \right) > O(\delta) \frac{H k\Delta}{c_\rho n}\right] \nonumber\\
    &\le 2\exp\left(-\frac{\delta^2 n}{2}\right).
\end{align}
\end{small}
\end{theorem}

\subsubsection{Advantage analysis of \MFT.} 

Finally we analyze the miner advantage of the miner's injection of fake transactions. The advantage a miner can get consists of two parts: increase of the miner revenue $\tilde{r}(\cdot)$, and the utility of fake identities. We notice that the robustness analysis of $\tilde{r}(\cdot)$ not only holds for transaction deletion, but also injection. So we can upper bound the miner advantage in the immediate revenue via very similar arguments. Formally, we have (proof omitted):

\begin{corollary}\label{cor:adv:rev}
Denote $B_{\Delta} (\b)$ as the family of all bidding vectors generated via injecting and deleting a total of at most $\Delta$ bids to/from $\b$. Then for universal constants $C_{M0}, C_{M0'}, C_{MIC2}, C_{MIC3}>0$ and $\delta \in (0, 1], n \ge C_{MIC2}\frac{\Delta}{\delta}$,
\begin{align}
    &\Pr_{\b}\left[\sup_{\b' \in B_{\Delta}(\b)}\left(\tilde{r}(\b') - \tilde{r}(\b) \right)> C_{M0}\delta \frac{H k\Delta}{c_\rho n}\right]\nonumber\\
    &\le C_{M0'}\exp\left(-C_{MIC3}\delta^2 n\right).
\end{align}
\end{corollary}

Hence, we only need to further upper bound the advantage from the utility of fake identities. We notice that the fake transactions do not have intrinsic values, so the valuations of fake transactions are zero.

Now we consider the total utility of fake identities. Because the valuations are zero, their total utility are just the opposite of their payment. So for $b'_j \in \b' \backslash \b$, the utility of identity $j'$ is
\begin{align}
    &\tilde{u}_j(b'_j, \b'_{-j}; 0) \nonumber\\
    &= - a_j (b'_j, \b'_{-j}) \tilde{p}_j(b'_j, \b'_{-j}) \\
    &= -a_j (b'_j, \b'_{-j}) p_j(b'_j, \b'_{-j}) - \theta_j(b'_j, \b'_{-j}).
\end{align}
From Eq.~(\ref{eqn:k:payment})\footnote{let $\lambda_0 = 1.582$ and compute $f(\lambda_0), m$ accordingly.}, and denote that the number of bids in $\b'$ is $n' \in [n-\Delta, n+\Delta]$, we get:
\begin{align}
    a_j (b'_j, \b'_{-j}) p_j(b'_j, \b'_{-j}) = \Theta\left(\frac{k {b'_j}^2}{n}\right).
\end{align}
From $h = \frac{Hk}{n'}$ we get:
\begin{align}
    \theta_j(b'_j,\b'_{-j}) = -\frac{Hk}{2n}{b'_j}^2\left(\frac{\sum_{i\ne j}{b'_{i}}^2}{c_\rho(n'-1)}-1\right) 
\end{align}
Similar to the argument in Appendix~\ref{app:adv:mtd}, as long as $c_\rho = \Theta(1)$ and $\left|\frac{\sum_{i=1}^n b_i^2}{n} - c_\rho\right| < O(\delta)$, we have $\left|\frac{\sum_{i\ne j}{b'_{i}}^2}{c_\rho(n'-1)}-1\right| \le O(\delta / c_\rho)$ for any $(\b'\in B_\Delta(b), j \in \b' \backslash \b)$, which happens with probability at least $1-\exp(-\Theta(\delta^2 n))$. 

In this case, we have:
\begin{align}
    \left|\theta_j(b'_j,\b'_{-j})\right| \le O(\delta) \cdot \frac{Hk}{c_\rho n}. 
\end{align}
Therefore, with probability at least $1-\exp(-\Theta(\delta^2 n))$, for any $\b'\in B_\Delta(b)$
\begin{align}
&\sum_{b'_j \in \b' \backslash \b} \tilde{u}_{j}(b'_{j}, \b'_{-j}; 0) \nonumber\\
&= \sum_{b'_j \in \b' \backslash \b} -a_j (b'_j, \b'_{-j}) p_j(b'_j, \b'_{-j}) - \theta_j(b'_j, \b'_{-j}) \\
& = \sum_{b'_j \in \b' \backslash \b} \left( - \Theta\left(\frac{k {b'_j}^2}{n}\right) + O(\delta) \cdot \frac{Hk}{c_\rho n} \right) \\
&\le \sum_{b'_j \in \b' \backslash \b} O(\delta) \cdot \frac{Hk}{c_\rho n} \\
&\le O(\delta) \cdot \frac{Hk\Delta}{c_\rho n}.
\end{align} 
Combined with Corollary~\ref{cor:adv:rev}, we deduce that for universal constants $C_{M0}, C_{MIC2}, C_{MIC3}>0$, $C_{M0'}>1$ and $\delta \in (0, 1], n \ge C_{MIC2}\frac{\Delta}{\delta}$,
\begin{small}
    \begin{align} \label{eqn:amic:almostThere}
        &\Pr_{\b}\left[\sup_{\b' \in B_{\Delta}(\b)}\left(\tilde{r}(\b') - \tilde{r}(\b) + \sum_{b'_j \in \b' \backslash \b} \tilde{u}_{j}(b'_{j}, \b'_{-j}; 0)\right)\right.\nonumber\\
        &\qquad\quad\left. >\vphantom{\sum_{b'_j \in \b'}} C_{M0}\delta \cdot \frac{Hk\Delta}{c_\rho n}\right] < C_{M0'}\exp(-C_{MIC3}\delta^2 n).
    \end{align}
    \end{small}

 Particularly, we can let $\delta = (\Delta / n)^{1/3}$, then for $n \ge C_{MIC2}^{3/2} \Delta$, 
 
\begin{small}
     \begin{align} 
        &\Pr_{\b}\left[\sup_{\b' \in B_{\Delta}(\b)}\left(\tilde{r}(\b') - \tilde{r}(\b) + \sum_{b'_j \in \b' \backslash \b} \tilde{u}_{j}(b'_{j}, \b'_{-j}; 0)\right) \right.\nonumber\\
        &\qquad\:\left.>\vphantom{\sum_{b'_j \in \b'}}C_{M0}  \frac{Hk\Delta^{4/3}}{c_{\rho}n^{4/3}}\right] < C_{M0'} \exp(-C_{MIC3}\Delta^{2/3} n^{1/3}).
    \end{align}
    \end{small}
Therefore, because $\Delta \ge 1$, for any $\epsilon > 0$ when $n \ge \max \{C_{MIC2}^{3/2} \Delta, C_{MIC3}^{-3} \log^3 \frac{C_{M0'}}{\epsilon}\}$, we have
\begin{small}
\begin{align}
&\Pr_{\b}\left[\sup_{\b' \in B_{\Delta}(\b)}\left(\tilde{r}(\b') - \tilde{r}(\b) + \sum_{b'_j \in \b' \backslash \b} \tilde{u}_{j}(b'_{j}, \b'_{-j}; 0)\right)\right.\nonumber\\
&\qquad\qquad\left.\vphantom{\sum_{b'_j \in \b'}}> C_{M0}  \frac{Hk\Delta^{4/3}}{c_\rho n^{4/3}}\right] < \epsilon.
\end{align}
\end{small}

For $\epsilon \in (0, 1/2)$, we have 
\begin{align*}
    \log \frac{C_{M0'}}{\epsilon} &=  \log C_{M0'} + \log \frac{1}{\epsilon} \\
    &= \log \frac{1}{\epsilon} \left(1 + \frac{\log C_{M0'}}{\log \frac{1}{\epsilon}}\right) \\
    &< \log \frac{1}{\epsilon} \left(1 + \frac{\log C_{M0'}}{\log 2}\right).
\end{align*}

 Just let $C_{M1} = C_{MIC2}^{3/2}$, $C_{M2} = C_{MIC3}^{-3}\left(1 + \frac{\log C_{M0'}}{\log 2}\right)^3$, and from  $H =Lc_\rho$, we have proven Theorem~\ref{thm:almost:mic}.

\subsection{Proof of Theorem~\ref{thm:mic:impossibility}}
\label{app:proof:no:mic}

For convenience we let $t = |\b| - 1$. Denote $M\apr$ and $T(\mathbf{\theta}, \tilde{r})$ is the auxiliary-variation decomposition of an 1-SCP mechanism $\tilde{M}$, then from Lemma~\ref{lem:myerson} and Lemma~\ref{lem:1scp} we know that 
\begin{align} \label{eqn:tmp1}
    \theta_i(b_i,\b_{-i}) - \theta_i(0,\b_{-i}) = \tilde{r}(b_i,\b_{-i}) - \tilde{r}(0,\b_{-i}).
\end{align}

User $i$'s utility in $\tilde{M}$ is 
\begin{align}
&\tilde{u}(b_i,\b_{-i};v_i)\nonumber\\
&= a_i(b_i,\b_{-i}) (v_i - p_i(b_i, \b_{-i})) - \theta_i(b_i, \b_{-i}) \\
&= u(b_i,\b_{-i};v_i) - \theta_i(b_i, \b_{-i}).
\end{align}

From U-BNIC of $\tilde{M}$ we know that $\E_{\b_{-i}}[\tilde{u}(b_i+ \delta,\b_{-i};b_i)] \le \E_{\b_{-i}}[\tilde{u}(b_i, \b_{-i};b_i)]$ and $\E_{\b_{-i}}[\tilde{u}(b_i,\b_{-i};b_i + \delta)] \le \E_{\b_{-i}}[\tilde{u}(b_i + \delta, \b_{-i};b_i + \delta)]$, i.e.,
\begin{align}
   & \E_{\b_{-i}}[u(b_i,\b_{-i};b_i) - \theta_i(b_i, \b_{-i})] \nonumber\\
   & \ge \E_{\b_{-i}}[u(b_i + \delta,\b_{-i};b_i) - \theta_i(b_i + \delta, \b_{-i})] \\
  &  \E_{\b_{-i}}[u(b_i,\b_{-i};b_i+ \delta )- \theta_i(b_i, \b_{-i})] \nonumber \\
   & \le \E_{\b_{-i}}[u(b_i + \delta,\b_{-i};b_i + \delta) - \theta_i(b_i + \delta, \b_{-i})].
\end{align}

From U-BNIC (implied by U-DSIC) of $M$ we get:
\begin{align}
    \E_{\b_{-i}}[u(b_i,\b_{-i};b_i) ] & \ge \E_{\b_{-i}}[u(b_i + \delta,\b_{-i};b_i) ] \\
    \E_{\b_{-i}}[u(b_i,\b_{-i};b_i+ \delta )] & \le \E_{\b_{-i}}[u(b_i + \delta,\b_{-i};b_i + \delta) ].
\end{align}

By integration on $b_i$ for fixed $\b_{-i}$, we know that
\begin{align}
    \E_{\b_{-i}}[\theta_{i}(b_i, \b_{-i})] - \E_{\b_{-i}}[\theta_{i}(0, \b_{-i})] = 0.
\end{align}

From NFL we know that $\theta_{i}(0, \b_{-i}) = 0$, so 
\begin{align}
    \E_{\b_{-i}}[\theta_{i}(b_i, \b_{-i})] = 0.
\end{align}

Combined with Eq.~(\ref{eqn:tmp1}), we know that 
\begin{align}
    & \E_{\b_{-i}}[\tilde{r}(b_i,\b_{-i}) - \tilde{r}(0,\b_{-i})]\\
    &= \E_{\b_{-i}}[\theta_i(b_i,\b_{-i}) - \theta_i(0,\b_{-i})] \\
    & = \E_{\b_{-i}}[\theta_i(b_i,\b_{-i})] = 0.
\end{align}

From assumption we know that $\E_{\b_{-i}}[r(0, \b_{-i})] \le \E_{\b_{-i}}[r(\b_{-i})]$, so we have 
\begin{align}
    &\E_{\b}[\tilde{r}(b_i,\b_{-i})] = \E_{b_i}[\E_{\b_{-i}}[\tilde{r}(b_i,\b_{-i})]] \\
    &= \E_{b_i}[\E_{\b_{-i}}[\tilde{r}(0,\b_{-i})]] \le \E_{b_i}[\E_{\b_{-i}}[\tilde{r}(\b_{-i})]] \\
    &= \E_{\b_{-i}}[\tilde{r}(\b_{-i})].
\end{align}

Hence the expected revenue for $t+1$ users is at most the expected revenue for $t$ users. We notice that when there is zero user the expected revenue is non-positive, so by induction, the expected revenue for arbitrary $n$ users is non-positive.

\subsection{Proof of Theorem~\ref{thm:mir:condition}} \label{app:proof:mir:condition}

\textbf{Necessity.} Let $\forall b_i=1$, then it has already been shown that $\tilde{r}(\b) = \Theta(k)\left(1-\frac{1}{2c_\rho}\right).$ If $c_\rho <\frac{1}{2}$, then in this case $\tilde{r}(\b) < 0$, violating MIR.

\textbf{Sufficiency.} Because $\forall b_i \in [0,1]$, we have $b_i^2 \ge b_i ^4$. Therefore,
\begin{small}
\begin{align}
    &\tilde{r}(\b) = \Theta\left(\frac{k}{n}\right)\cdot \left(\sum_{i=1}^{n}b_i^2 - \frac{\sum_{1\le i < j \le n}{b_i^2b_j^2}}{c_\rho(n-1)}\right) \\
    & \ge \Theta\left(\frac{k}{n}\right)\cdot \left(\sum_{i=1}^{n}b_i^4 - \frac{\sum_{1\le i < j \le n}{b_i^2b_j^2}}{c_\rho(n-1)}\right) \\
    &= \Theta\left(\frac{k}{n}\right)\cdot \left(\left(1-\frac{1}{2c_\rho}\right) \sum_{i=1}^{n}b_i^4 \right. \nonumber\\ 
    & \qquad\qquad\qquad  \left. + \frac{1}{4c_\rho (n-1)}\sum_{i=1}^n \left(b_i ^2 - b_j^2\right) ^2\right).
\end{align}
\end{small}

If $c_\rho \ge \frac{1}{2}$, then $1-\frac{1}{2c_\rho} \ge 0$, so $\tilde{r}(\b)$ is lower bounded by a sum of squares. Therefore, $\tilde{r}(\b)$ is non-negative for any $\b \in [0,1]^n$, proving the MIR property of the mechanism.

\subsection{Proof of Theorem~\ref{thm:rev:conc}} \label{app:thm:rev:conc}

We have that
\begin{small}
\begin{align}
    \tilde{r}&(\b) = \frac{h}{2} \cdot \left(\sum_{i=1}^{n}b_i^2 - \frac{\sum_{1\le i < j \le n}{b_i^2b_j^2}}{c_\rho(n-1)}\right) \\
    &= \frac{h}{2} \cdot \left(\sum_{i=1}^{n}b_i^2 - \frac{1}{2c_\rho(n-1)} \left(\left(\sum_{i=1}^{n}b_i^2\right)^2 - \sum_{i=1}^{n}b_i^4 \right)\right). 
\end{align}
\end{small}

By the Cauchy–Schwarz inequality, we have $\left(\sum_{i=1}^{n}b_i^4\right)\cdot  \left(\sum_{i=1}^{n} 1\right) \ge \left(\sum_{i=1}^{n}b_i^2\right)^2$, i.e., 
\begin{align}
    \sum_{i=1}^{n}b_i^4 \ge \frac{1}{n}\left(\sum_{i=1}^{n}b_i^2\right)^2.
\end{align}

Therefore, 
\begin{align}
    \tilde{r}(\b) &\ge \frac{h}{2}\left(  \sum_{i=1}^{n}b_i^2 - \frac{1}{2c_\rho n} \left(\sum_{i=1}^{n}b_i^2\right)^2\right) \\
    &= \frac{h}{2}  \sum_{i=1}^{n}b_i^2 \left( 1 - \frac{1}{2c_\rho n} \sum_{i=1}^{n}b_i^2 \right) \\
    &= \frac{hc_\rho n}{4} \left(1- \frac{1}{c_\rho^2 n^2} \left(\sum_{i=1}^{n}b_i^2 - c_\rho n\right)^2\right).
\end{align}

We know that $\E[\tilde{r}(\b)] = \frac{hc_\rho n}{4}$, so 
\begin{align}\label{eqn:rev:fraction}
    \frac{\tilde{r}(\b)}{\E[\tilde{r}(\b)]} \ge 1- \frac{1}{c_\rho^2 n^2} \left(\sum_{i=1}^{n}b_i^2 - c_\rho n\right)^2.
\end{align} 

Hoeffding's inequality \citep{hoeffding1963probability} states that when $\{x_i\}$ are independent random variables with $l_i \le x_i \le r_i$, and denoting $s_n = \sum_{i=1}^n x_i$, it holds that
\begin{align}
    \Pr[|s_n - \E[s_n]| \ge t] \le 2\exp\left(-\frac{2t^2}{\sum_{i=1}^n (r_i - l_i)^2}\right).
\end{align}

We let $x_i = b_i^2, l_i = 0, r_i = 1, t = \sqrt{\frac{\lambda n}{2}}$, and get:
\begin{align}
    \Pr\left[\left|\sum_{i=1}^{n}b_i^2 - c_\rho n \right|\ge \sqrt{\frac{\lambda n}{2}}\right] \le 2\exp(-\lambda).
\end{align}

Combined with Eq.~(\ref{eqn:rev:fraction}), we get:
\begin{align}
\Pr\left[ \frac{\tilde{r}(\b)}{\E [\tilde{r}(\b)]} \le 1-\frac{\lambda}{c_\rho^2 n}\right] \le 2\exp(- \lambda).
\end{align}

\ifdefined\EnableBurning
\subsection{Proof of Theorem~\ref{thm:burning}} 
\com{TODO: update}
\label{app:proof:thm:burning}


  Denote $k$ as the constant block size, then from symmetry we know that for $\forall i$,
    \begin{align}
        a_i(\mathbf{0}_{[n]}) = \frac{k}{n}.
    \end{align}
    We construct an auxiliary TFM $M=(\a,\p,r)$ to be U-DSIC with the same allocation rule as $M$. Here we define
    \begin{align}
        p_i(b_i,\b_{-i}) &= \begin{cases}
    \frac{\int_{0}^{b_i} t \pd{a_i(t,\b_{-i})}{t}dt}{a_i(b_i,\b_{-i})},&\:a_i(b_i,\b_{-i})>0\\
	0,&\:a_i(b_i,\b_{-i})=0,\\
  \end{cases} \label{eqn:p:def}\\
  r(\b) &= 0.
    \end{align}

    By the Auxiliary Mechanism method, the mechanism $M=(\a,\p,r)$ is indeed U-DSIC (thus also U-BNIC) and 1-SCP.

    Define 
    \begin{align}
        \theta_i(b_i,\b_{-i}) = a_i(b_i,\b_{-i})(\tilde{p}_i(b_i,\b_{-i})-p_i(b_i,\b_{-i})), \label{eqn:theta:def}
    \end{align} 
    
    then from NFL we get
    \begin{align}
        \theta_i(0,\b_{-i}) = 0. \label{eqn:theta:zero}
    \end{align}

    Recall that over the distribution of $\b_{-i} \sim V_{-i}$, the expected utility of user $i$ for bidding $b_i$ is
    \begin{align}
        \mathbb{E}_{\b_{-i} \sim V_{-i}}[\tilde{u}(b_i,\b_{-i};v_i)]=\mathbb{E}_{\b_{-i} \sim V_{-i}}[u(b_i,\b_{-i};v_i) - \theta_i(b_i,\b_{-i})].
    \end{align}
    
    Therefore,
    
    \begin{align}
        \pd{\mathbb{E}_{\b_{-i} \sim V_{-i}}[\tilde{u}(b_i,\b_{-i};v_i)]}{b_i} = \pd{\mathbb{E}_{\b_{-i} \sim V_{-i}}[u(b_i,\b_{-i};v_i)]}{b_i} - \pd{\mathbb{E}_{\b_{-i} \sim V_{-i}}[\theta_i(b_i,\b_{-i})]}{b_i}.
    \end{align}
    
    Because mechanisms $M$ and $\tilde{M}$ are both U-BNIC, it holds that
    \begin{align}
        \left.\pd{\mathbb{E}_{\b_{-i} \sim V_{-i}}[\tilde{u}(b_i,\b_{-i};v_i)]}{b_i}\right|_{b_i=v_i} = \left.\pd{\mathbb{E}_{\b_{-i} \sim V_{-i}}[u(b_i,\b_{-i};v_i)]}{b_i}\right|_{b_i=v_i} = 0
    \end{align}
    
    We deduce 
    \begin{align}
    \left.\pd{\mathbb{E}_{\b_{-i} \sim V_{-i}}[\theta_i(b_i,\b_{-i})]}{b_i}\right|_{b_i=v_i} = 0,
    \end{align}
    
    i.e.,
    
    \begin{align}
        \pd{\mathbb{E}_{\b_{-i} \sim V_{-i}}[\theta_i(b_i,\b_{-i})]}{b_i} = 0.
    \end{align}
    
    From Eq.~(\ref{eqn:theta:zero}) we get that $\mathbb{E}_{\b_{-i} \sim V_{-i}}[\theta_i(0,\b_{-i})] = 0$, so 

    \begin{align}
        \mathbb{E}_{\b_{-i} \sim V_{-i}}[\theta_i(b_i,\b_{-i})] = 0. \label{eqn:theta:exp}
    \end{align}

    From Strong Budget Feasibility, we get 

    \begin{align}
        \tilde{r}(\b) = \sum_{i=1}^n \left(a_i(b_i,\b_{-i}) p_i(b_i,\b_{-i}) + \theta_i(b_i,\b_{-i})\right).
    \end{align}

    Let $\b_{-1}=0$, from NFL we get $\forall i\ne 1$, $a_i(b_i,\b_{-i}) p_i(b_i,\b_{-i}) = 0$ and $\theta_i(b_i,\b_{-i})=0$. Therefore, we get
    \def\zero{\mathbf{0}}
    \begin{align}
        \tilde{r}(b_1,\zero) = a_1(b_1,\zero) p_1(b_1,\zero) + \theta_1(b_1,\zero).\label{eqn:revenue:one:1}
    \end{align} 

    Since $\tilde{M}$ is 1-SCP, from Lemma~\ref{lem:1scp} we have

    \begin{align}
        \tilde{r}(b_1,\zero)-\tilde{r}(0,\zero) = \theta_1(b_1,\zero) - \theta_1(0,\zero).
    \end{align}

    Let $b_1=0$ in Eq.~(\ref{eqn:revenue:one:1}) we get $\tilde{r}(0,\zero)=0$, and from Eq.~(\ref{eqn:theta:zero}) we get $\theta_1(0,\zero)=0$. Therefore,

        \begin{align}
        \tilde{r}(b_1,\zero) = \theta_1(b_1,\zero).\label{eqn:revenue:one:2}
    \end{align} 

    Combine Eq.~(\ref{eqn:revenue:one:1}) and Eq.~(\ref{eqn:revenue:one:2}). We have that

    \begin{align}
        a_1(b_1,\zero) p_1(b_1,\zero) = 0.
    \end{align}

    From the definition of $p_i(\cdot)$ in Eq.~(\ref{eqn:p:def}), we get:

    \begin{align}
    \int_{0}^{b_i} t \pd{a_1(t,\zero)}{t}dt = 0.
    \end{align}

    Since $\tilde{M}$ is 1-SCP, from Lemma~\ref{lem:1scp}, $a_1(\cdot, \zero)$ is monotonic non-decreasing, so $\pd{a_1(t,\zero)}{t} \ge 0$, and therefore we deduce that $\pd{a_1(t,\zero)}{t} = 0$. Hence,

    \begin{align}
        a_1(b_1,\zero) = \frac{k}{n}. \label{eqn:const:one}
    \end{align}

    Now we use induction. We assume that for given $s=s_0(\le n-1)$, $\forall i \le s$, $a_i (b_1, \cdots, b_s, 0, \cdots, 0) = \frac{k}{n}$, and prove that it also holds for $s=s_0+1$.

    Actually, from symmetry and block size $k$, we can get 
    \begin{align}
    a_{s+1} (b_1, \cdots, b_s, 0, 0, \cdots, 0) = \frac{k}{n}. \label{eqn:burning:induction}
    \end{align}.
    
    Furthermore, from symmetry and Eq.~(\ref{eqn:const:one}) , we get 
    \begin{align}
        a_{s+1} (0, \cdots, 0, b_{s+1}, 0, \cdots, 0) = \frac{k}{n}.
    \end{align}

    From competitiveness, it holds that $a_{s+1} (b_1, \cdots, b_s, b_{s+1}, 0, \cdots, 0) \le a_{s+1} (0, \cdots, 0, b_{s+1}, 0, \cdots, 0)$, so
    \begin{align}
        a_{s+1} (b_1, \cdots, b_s, b_{s+1}, 0, \cdots, 0) \le  \frac{k}{n}
    \end{align}
    
    On the other hand, from Lemma~\ref{lem:1scp}, $a_{s+1}$ is monotonic non-decreasing with regard to $b_{s+1}$, so $a_{s+1} (b_1, \cdots, b_s, b_{s+1}, 0, \cdots, 0) \ge a_{s+1} (b_1, \cdots, b_s, 0, 0, \cdots, 0)$. Together with Eq.~(\ref{eqn:burning:induction}), we get

    \begin{align}
        a_{s+1} (b_1, \cdots, b_s, b_{s+1}, 0, \cdots, 0) \ge  \frac{k}{n}.
    \end{align}

    Therefore, $a_{s+1} (b_1, \cdots, b_s, b_{s+1}, 0, \cdots, 0) = \frac{k}{n}$ for $\forall b_{s+1} \in [0,1]$. By symmetry, it also holds for any index $i\le s+1$.

    By induction, we get that in general,

    \begin{align}
        a_i(b_i,\b_{-i}) = \frac{k}{n}, \forall \b, \forall i.
    \end{align}

    This proves Eq.~(\ref{eqn:burn:1}). 
    
    Further from Eq.~(\ref{eqn:p:def}), we get $p_i(b_i,\b_{-i}) \equiv 0$. 

    From Eq.~(\ref{eqn:theta:def}) and by StrongBF and symmetry, we get

    \begin{align}
        \mathbb{E}_{\b \sim V} [\tilde{r}(\b)] &= \mathbb{E}_{\b \sim V}\left[\sum_{i=1}^n \left(a_i(b_i,\b_{-i}) p_i(b_i,\b_{-i}) + \theta_i(b_i,\b_{-i})\right) \right] \\
        &= \sum_{i=1}^n \mathbb{E}_{\b \sim V}\left[a_i(b_i,\b_{-i}) p_i(b_i,\b_{-i}) + \theta_i(b_i,\b_{-i})\right] \\
        &= n \mathbb{E}_{\b \sim V}\left[a_1(b_1,\b_{-1}) p_1(b_1,\b_{-1}) + \theta_1(b_1,\b_{-1})\right] \\
        &= n \mathbb{E}_{\b \sim V}\left[ \theta_1(b_1,\b_{-1})\right] \\
        &= n \mathbb{E}_{b_1 \sim V_1}\left[\mathbb{E}_{\b_{-1} \sim V_{-1}}[ \theta_1(b_1,\b_{-1})]\right].
    \end{align}

    According to Eq.~(\ref{eqn:theta:exp}) we get $\mathbb{E}_{\b_{-1} \sim V_{-1}}[ \theta_1(b_1,\b_{-1})] = 0$, so

    \begin{align}
        \mathbb{E}_{\b \sim V} [\tilde{r}(\b)] = 0.
    \end{align}

    This proves Eq.~(\ref{eqn:burn:2}).

\if 0
    \com{To be finished}
    Denote $k$ as the constant block size, then from symmetry we know that for $\forall i$,
    \begin{align}
        a_i(\mathbf{0}_{[n]}) = \frac{k}{n}.
    \end{align}
    
    We construct an auxiliary TFM $M=(\a,\p,r)$ to be U-DSIC with the same allocation rule as $M$. Here we define
    \begin{align}
        p_i(b_i,\b_{-i}) &= \begin{cases}
    \frac{\int_{0}^{b_i} t \pd{a_i(t,\b_{-i})}{t}dt}{a_i(b_i,\b_{-i})},&\:a_i(b_i,\b_{-i})>0\\
	0,&\:a_i(b_i,\b_{-i})=0,\\
  \end{cases} \label{eqn:p:def}\\
  r(\b) &= 0.
    \end{align}

    By the Auxiliary Mechanism method, the mechanism $M=(\a,\p,r)$ is indeed U-DSIC and 1-SCP.

    Define 
    \begin{align}
        \theta_i(b_i,\b_{-i}) = a_i(b_i,\b_{-i})(\tilde{p}_i(b_i,\b_{-i})-p_i(b_i,\b_{-i})), \label{eqn:theta:def}
    \end{align} 
    
    then from NFL we get
    \begin{align}
        \theta_i(0,\b_{-i}) = 0. \label{eqn:theta:zero}
    \end{align} 

From Eq.~(\ref{eqn:p:def}) we know that

\begin{align} \label{eqn:burn:key1}
    a_i(b_i,\b_{-i})p_i(b_i,\b_{-i}) = b_i \cdot a_i(b_i,\b_{-i}) - \int_0^{b_i} a_i(t, \b_{-i}) dt. 
\end{align}

From Lemma~\ref{lem:1scp}, since mechanisms $\tilde{M}$ and $M$ are both 1-SCP, we have

\begin{align}
a_i(b_i,\b_{-i})p_i(b_i,\b_{-i}) - 0 &= \int_{0}^{b_i} t \pd{a_i(t,\b_{-i})}{t} dt + a_i(0,\b_{-i})p_i(0,\b_{-i}) - 0 \label{eqn:burn:myerson1}\\
a_i(b_i,\b_{-i})\tilde{p}_i(b_i,\b_{-i}) - \tilde{r}(b_i,\b_{-i}) &= \int_{0}^{b_i} t \pd{a_i(t,\b_{-i})}{t} dt + a_i(0,\b_{-i})\tilde{p}_i(0,\b_{-i}) - \tilde{r}(0,\b_{-i}). \label{eqn:burn:myerson2}
\end{align}

From Eq.~(\ref{eqn:theta:zero}) and definition of $\theta_i$, Eq.~(\ref{eqn:burn:myerson2}) $-$ Eq.~(\ref{eqn:burn:myerson1}) implies:

\begin{align} \label{eqn:burn:key2}
    \theta_i(b_i,\b_{-i}) = \tilde{r}(b_i,\b_{-i}) - \tilde{r}(0,\b_{-i}).
\end{align}

From Strong Budget Feasibility, we have:

\begin{align} \label{eqn:burn:key3}
    \tilde{r}(b_i,\b_{-i}) = a_i(b_i,\b_{-i})\tilde{p}_i(b_i,\b_{-i}) + \sum_{j \ne i} a_j(\b)\tilde{p}_j(\b).
\end{align}

Eq.~(\ref{eqn:burn:myerson2}) $+$ Eq.~(\ref{eqn:burn:key3}):

\begin{align}
    0 = \int_{0}^{b_i} t \pd{a_i(t,\b_{-i})}{t} dt -\tilde{r}(0,\b_{-i}) + \sum_{j \ne i} a_j(\b)\tilde{p}_j(\b).
\end{align}

From Strong Budget Feasibility, we know that

\begin{align}
    \tilde{r}(0,\b_{-i}) = \tilde{a}_i(0,\b_{-i})\tilde{p}_i(0,\b_{-i}) + \sum_{j \ne i} a_j(0,\b_{-i})\tilde{p}_j(0,\b_{-i})
\end{align}

From NFL we get $\tilde{p}_i(0,\b_{-i})=0$, so we deduce:

\begin{align}
    \int_{0}^{b_i} t \pd{a_i(t,\b_{-i})}{t} dt  + \sum_{j \ne i} \left(a_j(\b)\tilde{p}_j(\b) - a_j(0,\b_{-i})\tilde{p}_j(0,\b_{-i})\right) = 0.
\end{align}

\fi
\fi

\theendnotes
 

		
		


		
		
	\end{document}